\newtheorem{thm}{Theorem}[section]
\newtheorem{cor}[thm]{Corollary}
\newtheorem{lem}[thm]{Lemma}
\newcommand{\remove}[1]{}
\begin{document}


\title{RoboCast: Asynchronous Communication in Robot Networks} 

\author{Zohir Bouzid$^\star$ \and Shlomi Dolev$^\dag$ \and Maria Potop-Butucaru$^\star$ \and S\'{e}bastien Tixeuil$^\star$}

\author{Zohir Bouzid\inst{1}\thanks{Supported by DIGITEO project PACTOLE and the ANR projects R-DISCOVER and SHAMAN.}
\and Shlomi Dolev\inst{2}\thanks{Part of the research was done during a
supported visit of Shlomi Dolev at LIP6 Université Pierre et Marie
Curie - Paris 6. Partially supported by Rita Altura trust chair in
computer sciences, ICT Programme of the European Union under contract
number FP7-215270 (FRONTS), and US Air Force European Office of
Aerospace Research and Development, grant number FA8655-09-1-3016.
}
\and Maria Potop-Butucaru\inst{1}
\and S\'{e}bastien Tixeuil\inst{1}
}

\institute{Universit\'e Pierre et Marie Curie - Paris 6, France\\ 
\and
Ben Gurion University of the Negev, Israel}


\remove{

\institute{$^\star$ Universit\'e Pierre et Marie Curie - Paris 6, France\\
$^\dag$ Ben Gurion University of the Negev, Israel}

\date{\normalsize $^\star$ Universit\'e Pierre et Marie Curie - Paris 6, France\\
\normalsize $^\dag$ Ben Gurion University of the Negev, Israel}

}

\maketitle

\begin{abstract}
This paper introduces the \emph{RoboCast} communication abstraction. The RoboCast allows a swarm of non oblivious, anonymous robots that are only endowed with visibility sensors and do not share a common coordinate system, to asynchronously exchange information. 
We propose a generic framework that covers a large class of asynchronous communication algorithms and show how our framework can be used to implement fundamental building blocks in robot networks such as gathering or stigmergy. In more details, we propose a RoboCast algorithm that allows robots to broadcast their local coordinate systems to each others. Our algorithm is further refined with a local collision avoidance scheme. Then, using the RoboCast primitive, we propose algorithms for deterministic asynchronous gathering and binary information exchange.

\remove{
~\\~\\~\\~\\~\\~\\~\\~\\~\\~\\~\\~\\~\\~\\

\noindent
\textbf{Regular submission}\\~\\~\\
\textbf{Contact author}: \\
Zohir BOUZID \\
Email: \emph{zohir.bouzid@lip6.fr} \\
Address: LIP6, 4, place Jussieu, 75015 Paris, France.\\
Phone: +33 1 44 27 88 77\\~\\
}

\end{abstract}


\section{Introduction}
Existing studies in robots networks focus on characterizing the computational power of these systems when robots are endowed with visibility sensors and communicate using \emph{only} their movements without relying on any sort of agreement on a global coordinate system. Most of these studies \cite{ando1999dmp,DK02,CFPS03} assume oblivious robots (\emph{i.e.} robots have no persistent memory of their past actions), so the ``memory'' of the network is implicit and generally deduced from the current positions of the robots. Two computation models are commonly used in robot networks: ATOM \cite{SY99} and CORDA \cite{FPSW99h}. In both models robots perform in Look-Compute-Move cycles. The main difference is that these cycles are executed in a fully asynchronous manner in the CORDA model while each phase of the Look-Compute-Move cycle is executed in a lock step fashion in the ATOM model. These computation models have already proved their limitations. That is, the deterministic implementations of many fundamental abstractions such as gathering or leader election are proved impossible in these settings without additional assumptions (\cite{Pre05,CP07}). The purpose of this paper is to study how the addition of \emph{bounded} memory to each individual robot can increase the computational power of an \emph{asynchronous} swarm of robots. 
We focus on an \emph{all-to-all} communication primitive,
called RoboCast, which is a basic building block for the design of any distributed system. A positive answer to this problem is the open gate for solving fundamental problems for robot networks such as gathering, scattering, election or exploration. 

In robot networks, using motion to transmit information is not new \cite{SY99,SY99b,DDPS09}. In~\cite{SY99}, Suzuki and Yamashita present an algorithm for broadcasting the local coordinate system of each robot (and thus build a common coordinate system) under the ATOM model. The algorithm heavily relies on the phase atomicity in each Look-Compute-Move cycle. In particular, a robot $a$ that observes another robot $b$ in four distinct positions has the certitude that $b$ has in turn already seen $a$ in at least two different positions. The situation becomes more intricate in the asynchronous CORDA model. Indeed, the number of different positions observed for a given robot is not an indicator on the number of complete cycles executed by that robot since cycles are completely uncorrelated. By contrast, our implementation of RoboCast is designed for the more general CORDA model and uses a novel strategy: the focus moves from observing robots in different positions to observing robots moving in different \emph{directions}. That is, each robot changes its direction of movement when a particular stage of the algorithm is completed; this change allows the other robots to infer information about the observed robot.

Another non trivial issue that needs to be taken care of without explicit communication is \emph{collisions avoidance}, since colliding robots could be confused due to indistinguishability. Moreover, robots may physically collide during their Move phase. One of the techniques commonly used to avoid collisions consists in computing a Voronoi diagram~\cite{A91} and allowing robots to move \emph{only} inside their Voronoi cells~\cite{DK02}. Since the Voronoi cells do not overlap with one another, robots are guaranteed to not collide. This simple technique works well in the ATOM model but heavily relies on the computation of the same Voronoi diagram by the robots that are activated concurrently, and thus does not extend to the CORDA model where different Voronoi diagrams 
may be computed by different robots, inducing possible collisions. Our approach defines a collision-free zone of movement that is compatible with the CORDA model constraints.

Applications of our RoboCast communication primitive include fundamental services in robot networks such as gathering and stigmergy. 
Deterministic gathering of two stateless robots has already been proved impossible when robots have no common orientation~\cite{SY99}.
In~\cite{SY99}, the authors also propose non-oblivious solutions for deterministic gathering in the ATOM model. Our RoboCast permits to extend this result to the CORDA model, using bounded memory and a limited number of movements. Recently, in \cite{DDPS09}, the authors extend the work of \cite{SY99} to efficiently implement stigmergy in robot networks in the ATOM model. Stigmergy is the ability for robots to exchange binary information that is encoded in the way they move. This scheme is particularly appealing for secure communication in robot networks, since \emph{e.g.} jamming has no impact on robot communication capability. The RoboCast primitive allows to extend this mechanism to the CORDA model, with a collision-free stigmergy scheme.

\paragraph{\textbf{Our contribution}}
We formally specify a robot network communication primitive, called RoboCast, and propose implementation variants for this primitive, that permit anonymous robots not agreeing on a common coordinate system, to exchange various information (\emph{e.g.} their local coordinate axes, unity of measure, \emph{rendez-vous} points, or binary information) using only motion in a two dimensional space. Contrary to previous solutions, our protocols all perform in the \emph{fully asynchronous} CORDA model, use \emph{constant} memory and a \emph{bounded} number of movements.
Then, we use the RoboCast primitive to efficiently solve some fundamental open problems in robot networks. We present a fully asynchronous deterministic gathering and a fully asynchronous stimergic communication scheme. 
Our algorithms differ from previous works by several key features: they are totally asynchronous (in particular they do not rely on the atomicity of cycles executed by robots), they make no assumption on a common chirality or knowledge of the initial positions of robots, and finally, each algorithm uses only a bounded number of movements. Also, for the first time in these settings, our protocols use CORDA-compliant collision avoidance schemes.

\paragraph{\textbf{Roadmap}}
The paper is made up of six 
sections. Section~\ref{sec:model} describes the computing
model and presents the formal specification of
the RoboCast problem.
Section~\ref{sec:lcs}
presents our protocol and its complexity.
The algorithm is enhanced in Section~\ref{sec:collisionfree} with a collision-avoidance scheme.
Using the Robocast primitive, Section~\ref{sec:applications} proposes
algorithms for deterministic asynchronous gathering and binary information exchange.
Finally, Section~\ref{sec:conclusion} provides concluding
remarks. 
Some proofs are relegated to the appendix.

\section{Model}
\label{sec:model}

We consider a network that consists of a finite set of $n$ robots arbitrarily deployed in a two dimensional space, with no two robots located at the same position. 
Robots are devices with sensing, computing and moving capabilities. They can observe (sense) the positions of other robots in the space and based on these observations, they perform some local computations that can drive them to other locations. 

In the context of this paper, the robots are \emph{anonymous}, in the sense that they can not be distinguished using their appearance and they do not have any kind of identifiers that can be used during the computation. In addition, there is no direct mean of communication between them. Hence, the only way for robots to acquire information is by observing their positions. 
Robots have \emph{unlimited visibility}, \emph{i.e.} they are able to sense the entire set of robots. 
We assume that robots are \emph{non-oblivious}, 
\emph{i.e.} they can remember observations, computations and motions performed in previous steps.
Each robot is endowed with a local coordinate system and a local unit measure which may be different from those of other robots.
This local coordinate system is assumed to be fixed during a run unless it is explicitly modified by the corresponding robot as a result of a computation.
We say in this case that robots \emph{remember} their own coordinate systems. 
This is a common assumption when studying non-oblivious robot networks \cite{SY99,DDPS09}.

A \emph{protocol} is a collection of $n$ \emph{programs}, one operating on each robot. The program of a robot consists in executing {\em Look\mbox{-}Compute\mbox{-}Move cycles} infinitely many times. That is, the robot first observes its environment (Look phase). An observation returns a snapshot of the positions of all robots within the visibility range. In our case, this observation returns a snapshot of the positions of \emph{all} robots. The observed positions are \emph{relative} to the observing robot, that is, they use the coordinate system of the observing robot. Based on its observation, a robot then decides --- according to its program --- to move or to stay idle (Compute phase). When a robot decides a move, it moves to its destination during the Move phase.  

The local state of a robot is defined by the content of its memory and its position. A configuration of the system is the union of the local states of all the robots in the system. An \emph{execution} $e=(c_0, \ldots, c_t, \ldots)$ of the system is an infinite sequence of configurations, where $c_0$ is the initial configuration
of the system, and every transition $c_i \rightarrow c_{i+1}$ is associated to the execution of a non empty subset of \emph{actions}. The granularity (or atomicity) of those actions is model-dependent and is defined in the sequel of this section. 

A \emph{scheduler} is a predicate on computations, that is, a scheduler defines a set of \emph{admissible} computations, such that every computation in this set satisfies the scheduler predicate. A \emph{scheduler} can be seen as an entity that is external to the system and selects robots for execution. As more power is given to the scheduler for robot scheduling, more different executions are possible and more difficult it becomes to design robot algorithms. In the remainder of the paper, we consider that the scheduler is fair and \emph{fully asynchronous}, that is, in any infinite execution, every robot is activated infinitely often, but there is no bound on the ratio between the most activated robot and the least activated one. In each cycle, the scheduler determines the distance to which each robot can move in this cycle, that is, it can stop a robot before it reaches its computed destination. However, a robot $r_i$ is guaranteed to be able to move a distance of at least $\delta_i$ towards its destination before it can be stopped by the scheduler.

We now review the main differences between the ATOM~\cite{SY99} and CORDA~\cite{FPSW99h} models. In the ATOM model, whenever a robot is activated by the scheduler, it performs a \emph{full} computation cycle. Thus, the execution of the system can be viewed as an infinite sequence of rounds. In a round one or more robots are activated by the scheduler and perform a computation cycle. The \emph{fully-synchronous ATOM} model refers to the fact that the scheduler activates all robots in each round, while the regular \emph{ATOM} model enables the scheduler to activate only a subset of the robots.
In the CORDA model, robots may be interrupted by the scheduler after performing only a portion of a computation cycle. In particular, phases (Look, Compute, Move) of different robots may be interleaved. For example, a robot $a$ may perform a Look phase, then a robot $b$ performs a Look-Compute-Move complete cycle, then $a$ computes and moves based on its previous observation (that does not correspond to the current configuration anymore). As a result, the set of executions that are possible in the CORDA model are a strict superset of those that are possible in the ATOM model. So, an impossibility result that holds in the ATOM model also holds in the CORDA model, while an algorithm that performs in the CORDA model is also correct in the ATOM model. Note that the converse is not necessarily true. 

\paragraph{\textbf{The RoboCast Problem}}

The RoboCast communication abstraction provides a set of robots located at arbitrary positions in a two-dimensional space the possibility to  broadcast their local information to each other. The RoboCast abstraction offers robots two communication primitives: 
$\emph{RoboCast(M)}$ sends \emph{Message} M to all other robots, 
and $\emph{Deliver(M)}$ delivers \emph{Message} M to the local robot. The message may consists in the local coordinate system, the robot chirality, the unit of measure, or any binary coded information.

Consider a run at which each robot $r_i$ in the system invokes \emph{RoboCast($m_i$)} at some time $t_i$ for some message $m_i$.
Let $t$ be equal to $max\{t_1, \ldots, t_n\}$.
Any protocol solving the RoboCast Problem has to satisfy the following two properties: 

\begin{list}{}{}
\item \textit{Validity}: For each message $m_i$, there exists a time $t_i^\prime > t$ after which every robot in the system has performed \emph{Deliver($m_i$)}.
\item \textit{Termination}: There exists a time $t_T \geq max\{t_1^\prime, \ldots, t_n^\prime\}$ after which no robot performs a movement that causally depends on the invocations of \emph{RoboCast($m_i$)}.
\end{list}

\section{Local Coordinate System RoboCast}
\label{sec:lcs}

In this section we present algorithms for robocasting the local coordinate system.
For ease of presentation we first propose an algorithm for two-robots then the general version for systems with $n$ robots. 

The local coordinate system is defined by two axes (abscissa and ordinate), their positive directions and the unity of measure. 
In order to robocast this information we use a modular approach.
That is, robots invoke first the robocast primitive ($LineRbcast1$ hereafter) to broadcast a line representing their abscissa.
Then, using a parametrized module ($LineRbcast2$), they robocast three successive lines encoding respectively their ordinate, unit of measure and the positive direction of axes. 
This invocation chain is motivated by the dependence between the transmitted lines.
When a node broadcasts a line, without any additional knowledge, two different points have to be sent in order to uniquely identify the line at the destination. However, in the case of a coordinate system, only for the first transmitted axis nodes need to identify the two points. The transmission of the subsequent axes needs the knowledge of a unique additional point.

\subsection{Line RoboCast}
In robot networks the broadcast of axes is not a new issue.
Starting with their seminal paper \cite{SY99}, Suzuki and Yamashita
presented an algorithm for broadcasting the axes via motion that works in the ATOM model. 
Their algorithm heavily relies on the atomicity of cycles and the
observation focus on the different positions of the other robots
during their Move phase.


This type of observation is totally useless in asynchronous CORDA model. 
In this model, when a robot $r$ moves towards its destination, another robot $r^\prime$ can be activated $k>1$ times with $k$ arbitrarily large, and thus observe $r$ in $k$ different positions without having any clue on the number of complete cycles executed by $r$. In other words, the number of different positions observed for a given robot is not an indicator on the number of complete executed cycles since in CORDA cycles are completely uncorrelated.

Our solution uses a novel strategy. 
That is, the focus moves from observing robots in different positions
to observing their change of direction: each robot changes 
its direction of movement when a particular stage of the algorithm is
completed; this change allows the other robots to infer 
information about the observed robot.

\subsubsection{Line RoboCast Detailed Description}
Let $r_0$ and $r_1$ be the two robots in the system.
In the sequel, when we refer to one of these robots without specifying which, we denote it by $r_i$ and its peer by $ r_{1-i}$.
In this case, the operations on the indices of robots are performed modulo 2.
For ease of presentation we assume that initially each robot $r_i$ translates and rotates its local coordinate system 
such that its x-axis and origin coincide with the line to be broacast and its current location respectively.
We assume also that each robot is initially located in the origin of its local coordinate system.

At the end of the execution, 
each robot must have broadcast its own line and have received the line of its peer.
A robot "receives" the line broadcast by its peer when it knows at least two distinct positions of this line. 
Thus, to send its line, each robot must move along it (following a scheme that will be specified later) until it is sure that it has been observed by the other robot.
 
The algorithm idea is simple:
each robot broadcasts its line by moving along it in a certain direction (considered to be positive). Simultaneously, it observes the different positions occupied by its peer $r_{1-i}$.
Once $r_i$ has observed $r_{1-i}$ in two distinct positions, 
it informs it that it has received its line by changing its direction of movement, 
that is, by moving along its line in the reverse direction (the negative direction if the first movement have been performed in the positive direction of the line).
This change of direction is an acknowledgement for the reception of the peer line.
A robot finishes the algorithm once it changed its direction and observed that the other robot also changed its direction. 
This means that both robots have sent their line and received the other's line.

The algorithm is described in detail as Algorithm \ref{alg-getview2}. Due to space limitations, its proof is given in the Appendix.
Each robot performs four stages referred in Algorithm \ref{alg-getview2} as states:

\begin{itemize}
\item state $S_1$:  This is the initial state of the algorithm. 
At this state, the robot $r_i$ stores the position of its peer in the variable $pos_1$
and heads towards the position $(1.0)$ of its local coordinate system. That is,
it moves along its line in the positive direction. 
Note that $r_i$ stays only one cycle in this state and then goes to state $S_2$.

\item state $S_2$: 
A this point, $r_i$ knows only one point of its peer line (recorded in $pos_1$).
To be able to compute the whole peer line, $r_i$ must observe $r_{1-i}$ in another (distinct) position of this line.
Hence, each time it is activated, $r_i$ checks if $r_{1-i}$ is still located in $pos_1$ or if it has already changed its position. 
In the first case (line $2.a$ of the code), it makes no movement by selecting its current position as its destination.
Otherwise (line $2.b$), 
it saves the new position of $r_{1-i}$ in $pos_2$ and delivers the line formed by $pos_1$ and $pos_2$.
Then, it initiates a change of direction by moving towards the point $(-1.0)$ of its local coordinate system, 
and moves to state $S_3$.

\item state $S_3$:
at this point $r_i$ knows the line of its peer locally derived from $pos_1$ and $pos_2$.
Before finishing the algorithm, $r_i$ must be sure that also $r_{1-i}$ knows its line.
Therefore, it observes $r_{1-i}$ until it detects a change of direction (the condition of line $3.a$).
If this is not the case and if $r_i$ is still in the positive part of its x-axis, then it goes to the position $(-1, 0)$ of its local coordinate system (line $3.b$).
Otherwise (if $r_i$ is already in the negative part of its x-axis), it performs a null movement (line $3.c$).
When $r_i$ is in state $S_3$ one is sure, as we shall show later,  that $r_{1-i}$ knows at least one position of $l_i$, say $p$.
Recall that $l_i$ corresponds to the x-axis of $r_i$. It turns out that $p$ is located in the positive part of this axis.
In moving towards the negative part of its x-axis, 
$r_i$ is sure that it will eventually be observed by $r_{1-i}$ in a position distinct from $p$ which allows $r_{1-i}$ to compute $l_i$.

\item state $S_4$:
At this stage, both $r_i$ and $r_{1-i}$ received the line sent by each others.
That is, $r_i$ has already changed its own direction of movement, and observed that $r_{1-i}$ also changed its direction.
But nothing guarantees that at this step $r_{1-i}$ knows that $r_i$ changed its direction of movement.
If $r_i$ stops now, $r_{1-i}$ may remain stuck forever (in state $S_3$).
To announce the end of the algorithm to its peer, $r_i$ heads towards a position located outside $l_i$,
That is, it will move on a line $nextl_i$ (distinct from $l_i$) which is
given as parameter to the algorithm.
During the move from $l_i$ to $nextl_i$, $r_i$
should avoid points outside these lines. 
To this end, $r_i$ must first pass through $myIntersect$ - which is the intersection of $l_i$ and $nextl_i$ -
before moving to a point located in $nextl_i$ but not on $l_i$ (refer to lines $3.a.2$, $3.a.3$ and $4.a$ of the code).

Note that the robocast of a line is usually followed by the robocast
of other information (e.g. other lines that encode the local
coordinate system). To helps this process the end of the robocast of $l_i$ 
should mark the beginning of the next line, $nextl_i$, robocast.
Therefore, once $r_i$ reaches $myIntersect$,
$r_i$ rotates its local coordinate system such that its x-axis matches now with $nextl_i$, 
and then it moves toward the point of (1,0) of its (new) local coordinate system.
When $r_{1-i}$ observes $r_i$ in a position that is not on $l_i$, it learns that $r_i$ knows that $r_{1-i}$ learned $l_{1-i}$, 
and so it can go to state $S_4$ (lines $3.a.*$) and finish the algorithm.
\end{itemize}

\begin{algorithm}[htb]
\begin{tiny}

\caption{Line RoboCast \bf{LineRbcast1} for two robots: Algorithm for robot $r_i$}          
\label{alg-getview2}                  
\begin{algorithmic}


\STATE \textbf{Variables}:\\

\STATE $state$: initially $S_1$
\STATE $pos_1, pos_2$: initially $\perp$
\STATE $destination, myIntersect$: initially $\perp$
\STATE

\STATE   \textbf{Actions}:\\

\STATE \textbf{1. State [$S_1$]}: \emph{\%Robot $r_i$ starts the algorithm\%}
$\vspace{0.2cm}$
\STATE $\hspace{1cm}$a. $pos_1 \leftarrow observe(1-i)$
\STATE $\hspace{1cm}$b. $destination \leftarrow (1,0)_i$
\STATE $\hspace{1cm}$c. $state \leftarrow S_2$
\STATE $\hspace{1cm}$d. Move to destination
\STATE

\STATE \textbf{2. State [$S_2$]}: \emph{\%$r_i$ knows one position of $l_{1-i}$\%}
$\vspace{0.2cm}$
\STATE $\hspace{1cm}$a. \textbf{if} ($pos_1=observe(1-i)$) \textbf{then} $destination \leftarrow observe(i)$
\STATE $\hspace{1cm}$b. \textbf{else} 
\STATE $\hspace{2cm}$ 1. $pos_2 \leftarrow observe(1-i)$
\STATE $\hspace{2cm}$ 2. $l_{1-i} \leftarrow line(pos_1, pos_2)$
\STATE $\hspace{2cm}$ 3. Deliver ($l_{1-i}$) 
\STATE $\hspace{2cm}$ 4. $destination \leftarrow (-1,0)_i$
\STATE $\hspace{2cm}$ 5. $state \leftarrow S_3$ \textbf{endif}
\STATE $\hspace{1cm}$c. Move to destination
\STATE

\STATE \textbf{3. State [$S_3$]}: \emph{\%$r_i$ knows the line robocast by robot $r_{1-i}$\%}
$\vspace{0.2cm}$
\STATE $\hspace{1cm}$a. \textbf{if} ($pos_2$ is not inside the line segment $[pos_1, observe(1-i)]$) \textbf{then}
\STATE $\hspace{2cm}$ 1. $state \leftarrow S_4$
\STATE $\hspace{2cm}$ 2. $myIntersect \leftarrow intersection(l_i, nextl_i)$
\STATE $\hspace{2cm}$ 3. $destination \leftarrow myIntersect$
\STATE $\hspace{1cm}$b. \textbf{else if} ($observe(i) \geq (0,0)_i$) \textbf{then} $destination \leftarrow (0,-1)_i$
\STATE $\hspace{1cm}$c. \textbf{else} $destination \leftarrow observe(i)$ \textbf{endif endif}
\STATE $\hspace{1cm}$d. Move to destination
\STATE

\STATE \textbf{4. State [$S_4$]}: \emph{\%$r_i$ knows that robot $r_{1-i}$ knows its line $l_i$\%}
$\vspace{0.2cm}$
\STATE $\hspace{1cm}$a. \textbf{if} ($observe(i) \neq myIntersect$) \textbf{then} $destination \leftarrow myIntersect$
\STATE $\hspace{1cm}$b. \textbf{else}
\STATE $\hspace{2cm}$1. $r_i$ rotates its coordinate system such that its x-axis and the origin match with
\STATE $\hspace{2cm}$	 $nextl_i$ and $myIntersect$ respectively.
\STATE $\hspace{2cm}$2. $destination \leftarrow (1,0)_i$; return \textbf{endif}
\STATE $\hspace{1cm}$c. Move to destination
\STATE

\remove{
\STATE \textbf{4. State [$S_4$]}: \emph{\%$r_i$ knows that $r_{1-i}$ knows its $l_i$\%}
\STATE $\hspace{1cm}$a. $destination \leftarrow (observe(i).x, y)_i$ with $y \neq 0$
\STATE $\hspace{1cm}$b. Move to destination
}

\end{algorithmic}

\end{tiny}
\end{algorithm}

\subsection{Line RoboCast: a Composable Version}
Line RoboCast primitive is usually used as a building block for
achieving more complex tasks.
For example, the RoboCast of the local coordinate system requires the
transmission of four successive lines representing respectively
the abscissa, the ordinate, the value of the unit 
measure and a forth line to determine the positive direction of axes.
In stigmergic communication a robot has to transmit at least a line
for each binary information it wants to send.
In all these examples, the transmitted lines are dependent one of each
other and therefore their successive transmission can be accelerated by
directly exploiting
this dependence.
Indeed, the knowledge of a unique point (instead of two) is sufficient for the receiver to infer the sent line.
In the following we propose modifications of the Line RoboCast primitive 
in order to exploit contextual information that are encoded in a set
of predicates that will be detailed in the sequel. 

In the case of the local coordinate system, the additional information
the transmission can exploit is the fact that the abscissa is perpendicular to the ordinate.
Once the abscissa is transmitted, it suffices for a robot to simply send a single position of its ordinate, say $pos1$. 
The other robots can then calculate the ordinate by finding the line that passes through $pos1$ and which is perpendicular to the previously received abscissa.
In the modified version of the Line RoboCast algorithm the predicate $isPerpendicular$ encodes this condition.

For the case of stigmergy, a robot transmits a binary information by robocasting a line whose angle to the abscissa encodes this information.
The lines transmitted successively by a single robot are not perpendicular to each others.
However, all these lines pass through the origin of the coordinate system of the sending robot.
In this case, it suffices to transmit only one position located on this line as long as it is distinct from the origin. 
We say in this case that the line satisfies the predicate $passThrOrigin$.

A second change we propose relates to the asynchrony of the algorithm.
In fact, even if robots execute in unison, 
they are not guaranteed to finish the execution of $LineRbcast1$ at the same time (by reaching $S_4$).
A robot $ r_i $ can begin transmitting its $k$-th line $ l_i $ when its peer $r_{1-i}$
is still located in its $(k-1)$-th line $ancientl_{1-i}$ that $r_i$ has already received. 
$r_i$ should ignore the positions transmitted by $r_{1-i}$ until it leaves $ancientl_{1-i}$ for a new line.
It follows that to make the module composable, the old line
that the peer has already received from 
its peer should be supplied as an argument ($ancientl_{1-i}$) to the function.
Thus, it will not consider the positions occupied by $r_ {1-i}$ until the latter leaves $ancientl_{1-i}$.

In the following, we present the code of the new Line RoboCast function that we denote by $LineRbcast2$.
Its description and its formal proof are omitted since they follow the same lines as those of $LineRbcast1$.

\begin{algorithm}[htb]
\begin{tiny}
\caption{Line RoboCast \bf{LineRbcast2} for two robots: Algorithm for robot $r_i$}          
\label{alg-viewmulticast2}                  
\begin{algorithmic}

\remove{
\STATE \textbf{Init}:\\
Initially each robot is located on line $l_i$ and uses a local coordinate system based on it. 
The origin is the current location and the X-axis is the line $l_i$.
The positive direction of the X-axis is meaningless.\\
\STATE
}

\STATE \textbf{Inputs:}
\STATE $l_i$ : the line to robocast
\STATE $nextl_i$: the next line to robocast after $l_i$
\STATE $precedentl_{1-i}$: the line robocast precedently by $r_{1-i}$
\STATE $predicate$: a predicate on the output $l_{1-i}$, for example $isPerpendicular$ and $passThrOrigin$.
\STATE

\STATE \textbf{Outputs:}
\STATE $l_{1-i}$ : the line robocast by $r_{1-i}$
\STATE

\STATE \textbf{Variables}:
\STATE $state$: initially $S_1$
\STATE $pos_1$: initially $\perp$
\STATE $destination, myIntersect, peerIntersect$: initially $\perp$
\STATE

\STATE   \textbf{Actions}:\\

\STATE \textbf{1. State [$S_2$]}: \emph{\%$r_i$ starts robocasting its line $l_i$\%}
$\vspace{0.2cm}$
\STATE $\hspace{1cm}$a. \textbf{if} ($observe(1-i) \in precedentl_{1-i}$) \textbf{then} $destination \leftarrow observe(i)$
\STATE $\hspace{1cm}$b. \textbf{else} 
\STATE $\hspace{2cm}$ 1. $pos3 \leftarrow observe(1-i)$
\STATE $\hspace{2cm}$ 2. $l_{1-i} \leftarrow$ the line that passes through $pos3$ and satisfies $predicate$.
\STATE $\hspace{2cm}$ 3. Deliver ($l_{1-i}$)
\STATE $\hspace{2cm}$ 4. $peerIntersect \leftarrow$ intersection between $l_{1-i}$ and $precedentl_{1-i}$
\STATE $\hspace{2cm}$ 5. $destination \leftarrow (0, -1)_i$
\STATE $\hspace{2cm}$ 6. $state \leftarrow S_3$ \textbf{endif}
\STATE $\hspace{1cm}$c. Move to destination
\STATE

\STATE \textbf{2. State [$S_3$]}: \emph{\%$r_i$ knows the line robocast by robot $r_{1-i}$\%}
$\vspace{0.2cm}$
\STATE $\hspace{1cm}$a. \textbf{if} ($pos3$ is not inside the line segment $[peerIntersect, observe(1-i)]$) \textbf{then}
\STATE $\hspace{2cm}$ 1. $state \leftarrow S_4$
\STATE $\hspace{2cm}$ 2. $myIntersect \leftarrow intersection(l_i, nextl_i)$
\STATE $\hspace{2cm}$ 3. $destination \leftarrow myIntersect$
\STATE $\hspace{1cm}$b. \textbf{else if} ($observe(i) \geq (0,0)_i$) \textbf{then} $destination \leftarrow (0,-1)_i$
\STATE $\hspace{1cm}$c. \textbf{else} $destination \leftarrow observe(i)$ \textbf{endif endif}
\STATE $\hspace{1cm}$d. Move to destination
\STATE

\STATE \textbf{3. State [$S_4$]}: similar to state $S_4$ of the $lineRbcast1$ function.
\STATE

\remove{
\STATE \textbf{3. State [$S_3$]}: \emph{\%$r_i$ knows that $r_{1-i}$ knows its line $l_i$\%}
\STATE $\hspace{1cm}$a. \textbf{if} ($observe(i) \neq myIntersect$) \textbf{then} $destination \leftarrow myIntersect$
\STATE $\hspace{1cm}$b. \textbf{else}
\STATE $\hspace{2cm}$1. $r_i$ rotates its coordinate system such that its x-axis and the origin match with
\STATE $\hspace{2cm}$	 $nextl_i$ and $myIntersect$ respectively.
\STATE $\hspace{2cm}$2. $destination \leftarrow (1,0)_i$
\STATE $\hspace{2cm}$3.  return $l_{1-i}$ \textbf{endif}
\STATE $\hspace{1cm}$c. Move to destination
\STATE
}

\remove{
\STATE \textbf{7. State [$S_7$]}: \emph{\%Robot $r_i$ knows that robot $r_{1-i}$ knows its Y-axis\%}
$\vspace{0.2cm}$
\STATE $\hspace{1cm}$a. \textbf{if} ($(observe(i).x=0)$ and $(observe(i) < (0,0)_i)$) $destination \leftarrow (0,0)_i$ 
\STATE $\hspace{1cm}$b. \textbf{else} 
\STATE $\hspace{2cm}$ 1.$destination \leftarrow$ any point not located in the Y-axis of $r_i$ (or start the diffusion of another line)
\STATE $\hspace{2cm}$ 2.$state \leftarrow S_8$ \textbf{endif}
\STATE $\hspace{1cm}$c. Move to destination
\STATE
}
\end{algorithmic}

\end{tiny}
\end{algorithm}


\subsection{RoboCast of the Local Coordinate System}
\label{subsec:coord-rbcast}
To robocast their two axes (abscissa and ordinate), robots call LineRbcast1 to robocast the abscissa, then LineRbcast2 to robocast the ordinate.
The parameter $\neq myOrdinate$ of $LineRbcast2$ stands for the next line to be robocast  and it can be set to any line different from $myOrdinate$.
The next line to robocast ($unitLine$) is a line whose angle with the
x-axis encodes the unit of measure. This angle will be determined
during 
the execution $LineRbcast2$.

\begin{quote}
1. $peerAbscissa \leftarrow LineRbcast1 (myAbscissa, myOrdinate)$\\
2. $peerOrdinate \leftarrow$\\
$LineRbcast2 (myOrdinate, \neq myOrdinate, peerAbscissa, isPerpendicular)$
\end{quote}

After executing the above code, each robot knows the two axes of its
peer coordinate 
system but not their positive directions neither their unit of measure.
To robocast the unit of measure we use a technique similar to that used by \cite{SY99}. 
The idea is simple: each robot measures the distance $d_i$ between its origin and the peer's origin in terms of its local coordinate system.
To announce the value of $d_i$ to its peer, each robot robocast via LineRbcast2 a line, $unitLine$, 
which passes through its origin and whose angle with its abscissa is equal to $f(d_i)$
where for $x>0, f(x)=(1/2x) \times 90^\circ$ is a monotonically increasing function with range $(0^\circ,90^\circ)$. 
The receiving robot $r_{1-i}$ can then infer $d_i$ from $f(d_i)$ 
and compute the unit measure of $r_i$ which is equal to $d_{1-i}/d_i$.
The choice of $(0^\circ,90^\circ)$ as a range for $f(x)$ (instead of  $(0^\circ,360^\circ)$) 
is motivated by the fact that the positive directions of the two axes are not yet known to the robots.
It is thus impossible to distinguish between an angle $\alpha$ with $\alpha \in (0^\circ,90^\circ)$ and the angles 
$\Pi-\alpha$, $-\alpha$, and $\Pi+\alpha$.
To overcome the ambiguity and to make $f(x)$ injective, we restrict
the range to $(0^\circ,90^\circ)$. 
In contrast, Suzuki and Yamashita \cite{SY99} use a function $f^\prime(x)$ 
slightly different from ours: $(1/2x) \times 360^\circ$.
That is, its range is equal to $(0^\circ,360^\circ)$.
This is because in ATOM, robots can robocast at the same time the two axis and their positive directions, 
for example by restricting the movement of robots to only the positive part of their axes.
Since the positive directions of the two axes are known, $unitLine$ can be an oriented line whose angle $f^\prime(x)$ 
can take any value in $(0^\circ,360^\circ)$ without any possible ambiguity.


\paragraph{\textbf{Positive directions of axes}}
Once the two axes are known, determining their positive directions amounts 
to selecting the upper right quarter of the coordinate system that is positive for both $x$ and $y$.
Since the line used to robocast the unit of distance passes through two quarters (the upper right and the lower left), 
it remains to choose among these two travelled quarters which one corresponds to the upper right one.
To do this, each robot robocast just after the line encoding the unit
distance another line which 
is perpendicular to it such that their intersection lays inside the upper right quarter.

\paragraph{\textbf{Generalization to $n$ robots}}
The generalization of the solution to the case of $n> 2$  robots has to use an additional mechanism to allow robots to "recognize" other robots 
and distinguish them from each others despite anonymity.
Let us consider the case of three robots
$r_1,r_2,r_3$. When $r_1$ looks the second time, $r_2$ and $r_3$ could have moved
(or be moving), each according to its local coordinate system and unit measure.
At this point, even with memory of  past observations, $r_1$ may be not able to distinguish between $r_2$ and $r_3$
 in their new positions given the fact that robots are anonymous.
Moreover, $r_2$ and $r_3$ could even switch places and appear
not to have moved.
Hence, the implementation of the primitive $observe(i)$ is not trivial. 
For this, we use the collision avoidance techniques presented in the next section to instruct each robot to move only in the vicinity of its initial position.
This way, other robots are able to recognize it by using its past positions. The technical details of this mechanism are given at the end of the next section.

Apart from this, the generalization of the protocol with $n$ robots is trivial. We present its detailed description in the Appendix.

\subsection{Motion Complexity Analysis}

\remove{
\subsection{Memory Complexity}

In this paragraph, we study the \emph{persistent} memory complexity of the algorithm of view multicast.
To save the local coordinate system of peer $ r_ (1-i) $, robot $r_i$ requires the following variables:
$Pos1$ and $Pos2$ to save the x-axis, $Pos3$ for the y-axis 
(only one position is needed in this case because the y-axis is perpendicular to the x-axis) and 
a forth variable, let's call it $Pos4$ which is need for the broadcast of the unit of measure 
(more precisely the broadcast of the $unitLine$ which encodes the unit of measurement). 
The algorithm needs also to use the variable $origin$ 
to store the origin of the coordinate system of $r_{1-i}$  computed in line $5.2.b$.
Finally, to encode the different states of the algorithm, twelve variables $S_i$ are needed.
Now, we show how to optimize the memory usage to make the algorithm work by using only 3 variables to store positions and to encode the different states of the algorithm.
We will see later in section \ref{sec:linemulticast-n} that the $n$-robots version of our algorithm need 3*(n-1) persistent variables.  
Hence the persistent memory complexity of our algorithm is $O(n)$.

\begin{itemize}
\item \textbf{$origin$} $\leftarrow ((\bot$ when state $< S_6$ ) \& ($pos2$ when $state \geq S_6$)).
 
Starting from state $S_6$ $i.pos2$ is equal to $(0,0)_{1-i}$. This is done by slightly modifying the algorithm: after Line $5.b.2$ we test if $pos1 \neq origin$ then we set $pos2 \leftarrow origin$, else we set first $pos1 \leftarrow pos2$ and then we set $pos2 \leftarrow origin$.
This way, beginning from $S_6$, $pos2$ will refer to the origin of the local coordinate system of the other peer, and $pos2$ to some position in its x-axis (different from $pos2$). Note that now the variable $origin$ is non persistent.

\item \textbf{$(state=S_1)$} if ($(pos1=pos2=pos3=\bot)$)

\item \textbf{$(state=S_2)$} if (($pos1 \neq \bot$) \& $(pos2=pos3=\bot)$)

\item \textbf{$(state=S_3)$} if (($pos1 \neq \bot$) \& ($pos2 \neq \bot$) \& $(pos3=\bot)$)

\item \textbf{$(state=S_4)$} if (($pos1 \neq \bot$) \& ($pos2 \neq \bot$) \& $(pos3=pos1)$)

In Algorithm \ref{alg-viewmulticast2},
the first time at which $pos3$ takes a different value from $\bot$ is at state $S_5$ 
and its value is used after that in state $S_6$ for the test of line $6.a$, and also to indicate the y-axis of robot $r_{1-i}$.
Note that by definition, the value of $pos3$ is different from $pos1$ since $pos3 \notin line(pos1, pos2)$.
So we can use $pos3$ before state $S_5$ to encode the state $S_4$ by setting its value to $pos1$. 
We can do this at line $3.a.1$ when setting the state of $r_i$ to $S_4$.

\item \textbf{$(state=S_5)$} if ($pos1 \neq \bot$) \& ($pos2 \neq \bot$) \& $(pos3=pos2)$

There is no ambiguity between states $S_4$ and $S_5$ since $pos1 \neq pos2$ (refer to the proof of Lemma \ref{lem2:state43}).
$pos3$ is set to $pos2$ at line $4.b.2$ when the robot reaches state $S_5$.

\item \textbf{$(state=S_6)$} if ($pos1 \neq \bot$) \& ($pos2 \neq \bot$) \& $(pos3 \notin line(pos1, pos2))$

Recall that at this point of the execution, as explained in the item related to the variable $origin$, $r_i.pos2$ correspond the origin of the local coordinate system of robot $r_{1-i}$. Thus, the value of $pos3$ becomes useless after this point to indicate the y-axis of $r_{1-i}$ which corresponds to the line perpendicular to $(pos1, pos2)$ at the point $pos2$. So starting from state $S_7$, we can use again the variable $pos3$ to encode the different states of the algorithm.

\item \textbf{$(state=S_7)$} if ($pos1 \neq \bot$) \& ($pos2 \neq \bot$) \&
 ($pos3$= middle of $[pos1, pos2]$) 
 \& (my current position is in my y-axis)

\item \textbf{$(state=S_8)$} if ($pos1 \neq \bot$) \& ($pos2 \neq \bot$) \& 
($pos3$= middle of $[pos1, pos2]$) 
\& (my current position is not in my y-axis)

\item TODO: $S_9$ \ldots $S_{12}$.

\end{itemize}

\remove{
\begin{table}
\centering
\begin{tabular}{|c|l|l|}\hline
\textbf{Original Variable} & \textbf{Equivalent Definition} & \textbf{Remarks}\\\hline
origin & ($\equiv \bot$ when state $< S_6$ ) & from state $S_6$ $i.pos2$ is equal to $(0,0)_{1-i}$ \\
& and ($\equiv$ pos2 when $state \geq S_6$) &  \\\hline
\end{tabular}
\caption{Optimizing persistent memory requirements for Algorithm \ref{alg-viewmulticast2}                  
}
\label{tab:results}
\end{table}
}
}

Now we show that the total number of robot moves in the
coordinate system RoboCast is upper bounded. 
For the sake of presentation, we assume for now that the scheduler
does not interrupt robots execution before they reach their planned
destination. Each robot is initially located at the origin of its
local coordinate system. To robocast each axis, a robot must visit two
distinct positions: one located in the positive part of this axis and the other one
located in its negative part. For example, 
to robocast its $x$-axis, a robot has first to move from its origin to
the position $(1.0)_i$, then from $(1.0)_i$ 
to the $(-1, 0)_i$. Then, before initiating a robocast for the other
axis, the robot must first return back to its origin. 
Hence, at most 3 movements are needed to robocast each axis. This
implies that to robocast the whole local coordinate system, 
at most 12 movements have to be performed by a particular robot.

In the general CORDA model, the scheduler is allowed to stop robots before
they reach their destination, as long as a minimal distance 
of $\delta_i $ has been traversed. In this case, the number of
necessary movements is equal to at most $8 * (1 +1 / \delta_i)$. 
This worst case is obtained when a robot is \emph{not} stopped by the
scheduler when moving from its origin towards another position 
(thus letting it go the farthest possible), but stopped 
whenever possible when returning back from this (far) position to the origin.

This contrasts with \cite{SY99} and \cite{DDPS09} where the number of
positions visited by each robot to robocast a line is unbounded 
(but finite). This is due to the fact that in both approaches, robots
are required to make a non null movement whenever activated until 
they know that their line has been received. Managing an arbitrary
large number of movements in a restricted space to prevent collisions 
yields severe requirements in~\cite{DDPS09}: either robots are allowed
to perform infinitely small movements (and such movements can be 
seen by other robots with infinite precision), or the scheduler is
restricted in its choices for activating robots (no robot can be 
activated more than $k$ times, for a given $k$, between any two
activations of another robot) and yields to a setting that is not 
fully asynchronous. Our solution does not require any such hypothesis.

\section{Collision-free RoboCast}
\label{sec:collisionfree}
In this section we enhance 
the algorithms proposed in Section \ref{sec:lcs} with the collision-free feature.
In this section we propose novel techniques for collision avoidance 
that cope with the system asynchrony.

Our solution is based on the same principle of locality as the Voronoi
Diagram based schemes. 
However, acceptable moves for a robot use a different geometric
area. This area is defined for 
each robot $r_i$ as a local \emph{zone of movement} and is denoted by
$ZoM_i$.
We require that 
each robot $r_i$ moves only inside $ZoM_i$. The intersection of
different $ZoM_i$ must remain empty at 
all times to ensure collision avoidance. We now present three possible
definitions for the zone of movement: $ZoM^1_i$, $ZoM^2_i$ and
$ZoM^3_i$. All three ensure collision avoidance in CORDA, but only the
third one can be computed in a model where 
robots do not know the initial position of their peers.

Let $P(t)=\{p_1(t), p_2(t) \ldots, p_n(t)\}$ be the configuration of
the network at time $t$, such that 
$p_i(t)$ denotes the position of robot $r_i$ at time $t$ expressed
in a global coordinate system. 
This global coordinate system is unknown to individual robots and is
only used to ease the presentation and the proofs. 
Note that $P(t_0)$ describes the initial configuration of the network.

\begin{definition}(\textbf{Voronoi Diagram})\cite{A91}
\label{defca:voronoi}
The Voronoi diagram of a set of points $P = \{p_1, p_2, \dots , p_n\}$ is a subdivision of the plane into $n$ cells, one for each point in $P$. The cells have the property that a point $q$ belongs to the Voronoi cell of point $p_i$ iff for any other point $p_j \in P$, $dist(q,p_i) < dist(q,p_j)$ where $dist(p,q)$ is the Euclidean distance between $p$ and $q$. In particular, the strict inequality means that points located on the boundary of the Voronoi diagram do not belong to any Voronoi cell.
\end{definition}

\begin{definition}(\textbf{$ZoM^1_i$})
\label{defca:zom1}
Let $DV(t_0)$ be the Voronoi diagram of the initial configuration $P(t_0)$. For each robot $r_i$, the zone of movement of $r_i$ at time $t$, $ZoM^1_i(t)$, is the Voronoi cell of point $p_i(t_0)$ in $DV(t_0)$.
\end{definition}

\begin{definition}(\textbf{$ZoM^2_i$})
\label{defca:zom2}
For each robot $r_i$, define the distance $d_i$ = min\{$dist(p_i(t_0), p_j(t_0))$ \textsl{ with } $r_j \neq r_i$\}. 
The zone of movement of $r_i$ at time $t$, $ZoM^2_i(t)$, is the circle centered in $p_i(0)$ and whose diameter is equal to $d_i/2$.
 A point $q$ belongs to $ZoM^2_i(t)$ iff $dist(q, p_i(t_0)) < d_i/2$.
\end{definition}

\begin{definition}(\textbf{$ZoM^3_i$})
\label{defca:zom3}
For each robot $r_i$, define the distance $d_i(t)$ = min\{$dist(p_i(t_0), p_j(t))$ \textsl{ with } $r_j\neq r_i$\} at time $t$. 
The zone of $r_i$ at time $t$, $ZoM^3_i(t)$, is the circle centered in $p_i(t_0)$ and whose diameter is equal to $d_i(t)/3$. 
A point $q$ belongs to $ZoM^3_i(t)$ iff $dist(q, p_i(t_0)) < d_i(t)/3$.
\end{definition}

\begin{figure}[htbp]
\begin{center}
\centering
\subfigure[{$ZoM^2_p$}]
{\includegraphics[scale=.55]{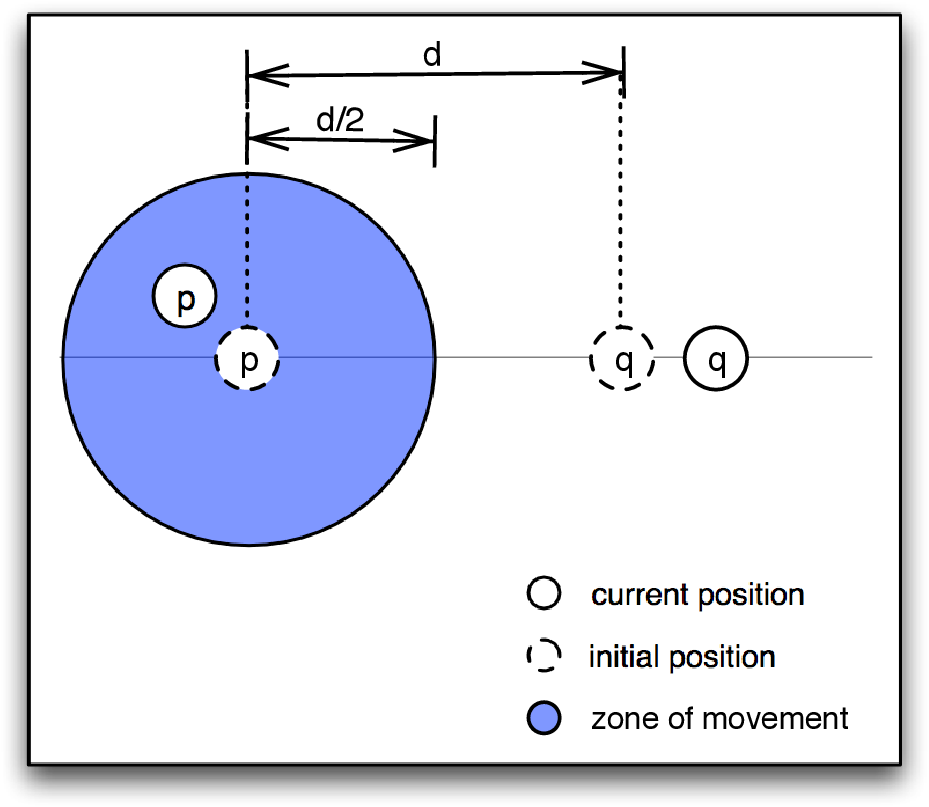}}
\subfigure[{$ZoM^3_p$}]
{\includegraphics[scale=.55]{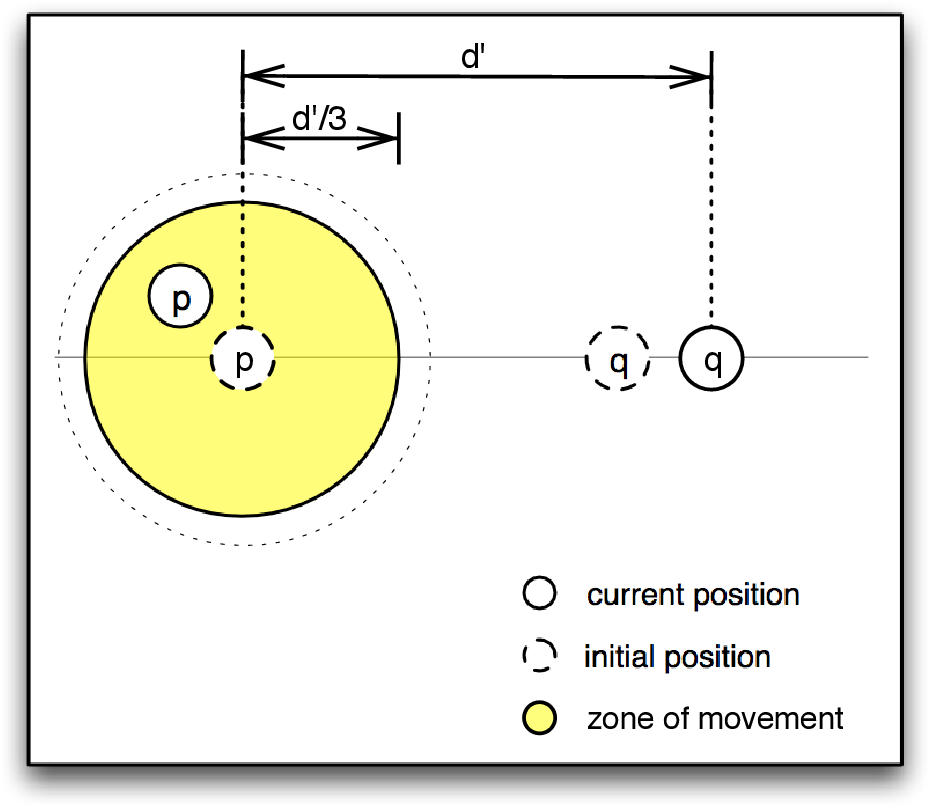}}

\caption{ \footnotesize Example zones of movement: The network is formed of two robots: $p$ and $q$. $d$ is the distance between the initial positions of $p$ and $q$ (dashed circles), $d^\prime$ is the distance between the initial position of $p$ and the current position of $q$. The diameter of $ZoM^2_p$ (blue) is $d/2$ and that of $ZoM^3_p$ (yellow) is $d^\prime/3$.}
\label{fig-zom3-collision}
\end{center}
\end{figure}

Note that $ZoM^1$ and $ZoM^2$ are defined using information about the initial configuration $P(t_0)$, and thus cannot be used with the hypotheses of Algorithm~\ref{alg-viewmulticast2}. In contrast, robot $r_i$ only needs to know its \emph{own} initial position and the \emph{current} positions of other robots to compute $ZoM^3_i$. As there is no need for $r_i$ to know the \emph{initial} positions of other robots, $ZoM^3_i$ can be used with Algorithm \ref{alg-viewmulticast2}. It remains to prove that $ZoM^3_i$ guarantees collision avoidance. We first prove that $ZoM^1_i$ does, which is almost trivial because its definition does not depend on time. Then, it suffices to prove that $ZoM^3_i \subseteq ZoM^2_i \subseteq ZoM^1_i$. Besides helping us in the proof, $ZoM^2_i$ can be interesting in its own as a cheap collision avoidance scheme in the ATOM model, as computing a cycle of radius half the distance to the nearest neighbor is much easier that computing a full blown Voronoi diagram.

\begin{lem}
\label{lemca:zom1}
If $\forall t$, for each robot $r_i$, the destination point computed by $r_i$ at $t$ remains inside $ZoM^1_i(t)$, then collisions are avoided.
\end{lem}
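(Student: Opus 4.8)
The plan is to exploit the single crucial property of the Voronoi diagram of the \emph{initial} configuration: since $DV(t_0)$ is computed once and for all, the cells $ZoM^1_i = ZoM^1_i(t)$ are in fact time-independent, pairwise disjoint regions of the plane. So the heart of the argument is simply that robots confined to disjoint regions cannot collide. The work is in handling the asynchrony of CORDA carefully, because a ``collision'' in this model is not merely two robots sharing a position at the end of their cycles: a robot may be stopped by the scheduler at an intermediate point of a move, and different robots' Move phases interleave arbitrarily.

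First I would make precise what must be shown: for all $t$ and all $i \neq j$, $p_i(t) \neq p_j(t)$, under the standing assumption that whenever a robot computes a destination, that destination lies in its own cell $ZoM^1_i$. The key geometric step is to observe that a robot's \emph{trajectory} during any Move phase stays inside its cell. Indeed, at the start of a cycle robot $r_i$ occupies some position $p_i$, and by the inductive hypothesis (that the configuration reached so far is collision-free and each robot has always been inside its own cell) we have $p_i \in \overline{ZoM^1_i}$; actually $p_i \in ZoM^1_i$ whenever $r_i$ has previously moved, and $p_i = p_i(t_0) \in ZoM^1_i$ at the very start. The computed destination also lies in $ZoM^1_i$ by assumption. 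Since a Voronoi cell is convex, the whole segment from $p_i$ to the destination lies in $ZoM^1_i$, and the scheduler can only stop $r_i$ somewhere along this segment; hence at \emph{every} time $t$, $p_i(t) \in ZoM^1_i$. The one subtlety is the boundary: points on the boundary of the Voronoi diagram belong to no cell (Definition~\ref{defca:voronoi}), so I would note that the destinations are required to be strictly inside $ZoM^1_i$, and since $p_i(t_0)$ is the site of its own cell it is interior as well; convexity then keeps the open segment inside the open cell.

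Then the conclusion is immediate: for $i \neq j$, $p_i(t) \in ZoM^1_i$ and $p_j(t) \in ZoM^1_j$, and $ZoM^1_i \cap ZoM^1_j = \emptyset$ because distinct Voronoi cells of the same diagram are disjoint (a point strictly closer to $p_i(t_0)$ than to every other site cannot simultaneously be strictly closer to $p_j(t_0)$). Therefore $p_i(t) \neq p_j(t)$ for all $t$, which is exactly collision avoidance.

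The main obstacle, and the only place requiring care, is the asynchronous bookkeeping: one must set up an induction over the sequence of configurations $c_0 \to c_1 \to \cdots$ in which the invariant ``every robot lies in its own $ZoM^1$ cell and no two robots coincide'' is maintained across a transition even when several robots are simultaneously mid-move and acting on stale observations. The point is that $ZoM^1_i$ does not depend on the observations at all — it is fixed by $P(t_0)$ — so staleness is irrelevant to membership in the cell; the invariant propagates purely from convexity of cells plus the hypothesis on computed destinations, independently of what each robot saw. I would state this as a one-line remark rather than belabor it.
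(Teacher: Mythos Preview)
Your argument is correct and rests on the same core observation as the paper: the cells $ZoM^1_i$ are the Voronoi cells of the fixed initial configuration $P(t_0)$, hence time-independent and pairwise disjoint, so robots confined to them cannot meet. The paper's own proof is only two lines and stops at exactly this point; you go further by invoking convexity of Voronoi cells to argue that the entire segment from current position to destination stays in the cell (so intermediate positions during a Move phase are covered), by treating the open-cell boundary issue, and by sketching the induction over configurations to handle CORDA asynchrony. None of this is wrong, and arguably it fills gaps the paper leaves implicit, but it is more than the paper itself supplies.
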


\begin{proof}
By definition of Voronoi diagram, different Voronoi cells do not overlap. Moreover, for a given $i$, $ZoM^1_i$ is static and does not change over time. Hence, $\forall i,j \in \Pi$, $\forall t, t^\prime$, $ZoM^1_i(t) \cap ZoM^1_j(t^\prime) = \emptyset$.
\end{proof}

Clearly, $ZoM^2_i \subseteq ZoM^1_i$ which means that $ZoM^2_i$ ensures also collision avoidance.

\begin{lem}
\label{lemca:zom2}
If $\forall t$, for each robot $r_i$, the destination point computed by $r_i$ at $t$ always remains inside $ZoM^2_i(t)$, then collisions are avoided.
\end{lem}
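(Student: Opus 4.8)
If every robot $r_i$ always keeps its computed destination inside $ZoM^2_i(t)$, then collisions are avoided.

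The proof goes as follows.

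My plan is to reduce this statement to Lemma~\ref{lemca:zom1}, which is already established. Since Lemma~\ref{lemca:zom1} tells us that staying inside $ZoM^1_i(t)$ at all times prevents collisions, it suffices to show the set inclusion $ZoM^2_i(t) \subseteq ZoM^1_i(t)$ for every robot $r_i$ and every time $t$. Once this inclusion is in hand, any execution in which each robot confines its destinations to $ZoM^2_i$ is a fortiori an execution in which each robot confines its destinations to $ZoM^1_i$, and the conclusion follows immediately from the previous lemma. This is the remark already made informally in the text right before the lemma statement (``Clearly, $ZoM^2_i \subseteq ZoM^1_i$''); the task is just to justify it cleanly.

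To prove $ZoM^2_i(t) \subseteq ZoM^1_i(t)$, I would argue directly from the two definitions. Fix $r_i$ and take an arbitrary point $q \in ZoM^2_i(t)$; by Definition~\ref{defca:zom2} this means $dist(q, p_i(t_0)) < d_i/2$, where $d_i = \min\{dist(p_i(t_0), p_j(t_0)) : r_j \neq r_i\}$. I must show $q$ lies in the Voronoi cell of $p_i(t_0)$ in $DV(t_0)$, i.e. that $dist(q, p_i(t_0)) < dist(q, p_j(t_0))$ for every $r_j \neq r_i$. The key estimate is the triangle inequality: for any $j \neq i$,
\[
dist(q, p_j(t_0)) \geq dist(p_i(t_0), p_j(t_0)) - dist(q, p_i(t_0)) \geq d_i - dist(q, p_i(t_0)) > d_i - \frac{d_i}{2} = \frac{d_i}{2} > dist(q, p_i(t_0)),
\]
where the second inequality uses $dist(p_i(t_0), p_j(t_0)) \geq d_i$ by minimality and the third uses $q \in ZoM^2_i(t)$. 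Hence $q$ is in the Voronoi cell of $p_i(t_0)$, which is exactly $ZoM^1_i(t)$ by Definition~\ref{defca:zom1}. This establishes the inclusion, and combining with Lemma~\ref{lemca:zom1} completes the proof.

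I do not anticipate any real obstacle here: the argument is a one-line triangle-inequality estimate plus an appeal to an already-proven lemma. The only points that need a modicum of care are (i) noting that $ZoM^1_i$, hence the inclusion, is independent of time because $DV(t_0)$ is fixed — so it is enough to check the inclusion once rather than worrying about how the cell evolves; and (ii) being slightly careful with the strict-versus-non-strict inequalities (the Voronoi cell is defined by a strict inequality, and the open disk $ZoM^2_i$ is also defined by a strict inequality, so the chain of inequalities above lands strictly, as required, and boundary points cause no trouble). With these observations the proof is essentially immediate.
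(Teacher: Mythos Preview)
Your proposal is correct and follows exactly the same approach as the paper: the paper's proof is a one-line appeal to the inclusion $ZoM^2_i(t) \subseteq ZoM^1_i(t)$ together with Lemma~\ref{lemca:zom1}. You simply spell out the triangle-inequality justification for the inclusion that the paper leaves as ``clear''.
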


The proof of the above lemma follows directly from the fact that $\forall t ZoM^2_i(t) \subseteq ZoM^1_i(t)$ and Lemma~\ref{lemca:zom1}.

\begin{lem}
\label{lemca:zom3}
$\forall t, ZoM^3_i(t) \subseteq ZoM^2_i(t)$. 
\end{lem}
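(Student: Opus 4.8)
The plan is to compare the two radii directly. The zone $ZoM^2_i(t)$ is a disk centered at $p_i(t_0)$ of diameter $d_i/2$, where $d_i = \min\{\mathit{dist}(p_i(t_0),p_j(t_0)) : r_j \neq r_i\}$ is fixed, while $ZoM^3_i(t)$ is a disk with the same center $p_i(t_0)$ but diameter $d_i(t)/3$, where $d_i(t) = \min\{\mathit{dist}(p_i(t_0),p_j(t)) : r_j \neq r_i\}$. Since both disks are concentric, the inclusion $ZoM^3_i(t) \subseteq ZoM^2_i(t)$ is equivalent to the scalar inequality $d_i(t)/3 \le d_i/2$, i.e. $d_i(t) \le \tfrac32 d_i$. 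So the whole lemma reduces to bounding how far the current configuration can drift from the initial one.

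\textbf{First} I would fix an arbitrary time $t$ and an arbitrary robot $r_i$, and let $r_j$ ($j\neq i$) be a robot realizing the minimum in the definition of $d_i(t)$, so that $d_i(t) = \mathit{dist}(p_i(t_0), p_j(t))$. By the triangle inequality,
\[
d_i(t) \;=\; \mathit{dist}(p_i(t_0), p_j(t)) \;\le\; \mathit{dist}(p_i(t_0), p_j(t_0)) + \mathit{dist}(p_j(t_0), p_j(t)).
\]
The first term on the right is at most $d_i$ by definition of $d_i$. \textbf{Next} I need to bound the displacement $\mathit{dist}(p_j(t_0), p_j(t))$ of robot $r_j$ between time $t_0$ and time $t$. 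This is exactly where the inductive structure of the collision-free scheme enters: by hypothesis every robot only ever moves inside its own zone of movement, and $ZoM^3_j$ is a disk of diameter $d_j(s)/3 \le d_j(s)$ centered at $p_j(t_0)$, so at all times $r_j$ stays within distance $\tfrac12 d_j(s) \le \tfrac12 d_j$ of $p_j(t_0)$ — but actually the cleanest bound uses that $r_j$ never leaves the disk of radius (diameter/2) about $p_j(t_0)$, and that this radius is at most half of $d_j \le d_i$... I must be careful here: $d_j$ need not equal $d_i$. The correct observation is that $d_j \le \mathit{dist}(p_j(t_0), p_i(t_0)) \le d_i$ is false in general (the nearest neighbor of $r_j$ need not be $r_i$); rather $d_j \le \mathit{dist}(p_i(t_0),p_j(t_0))$ always holds, and the latter is $\le d_i$ only if $r_j$ realizes $d_i$, which it does by our choice. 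Hence $d_j \le d_i$, so $r_j$ stays within distance $\tfrac12 \cdot \tfrac13 d_j(\cdot) \le \tfrac16 d_j \le \tfrac16 d_i$ of $p_j(t_0)$ — even tighter than needed.

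\textbf{Then} combining, $d_i(t) \le d_i + \tfrac16 d_i = \tfrac76 d_i \le \tfrac32 d_i$, which gives $d_i(t)/3 \le \tfrac{7}{18} d_i \le d_i/2$, and therefore any point $q$ with $\mathit{dist}(q,p_i(t_0)) < d_i(t)/3$ also satisfies $\mathit{dist}(q,p_i(t_0)) < d_i/2$, i.e. $q \in ZoM^2_i(t)$. \textbf{The main obstacle} I anticipate is making the displacement bound rigorous without circularity: the claim "$r_j$ stays near $p_j(t_0)$" relies on $r_j$ respecting the $ZoM^3$ constraint, which is the very hypothesis of the theorem chain we are proving correct, so the argument must be phrased as: assuming the protocol keeps each robot inside its $ZoM^3$ at all times $s \le t$, derive the inclusion at time $t$ — an invariant maintained over time rather than a standalone geometric fact. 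A secondary subtlety is the boundary/strictness of the disks (open vs. closed), but since we obtain a strict inequality $d_i(t)/3 < d_i/2$ as soon as $d_i(t) < \tfrac32 d_i$ (which holds with room to spare), the open-disk definitions cause no trouble.
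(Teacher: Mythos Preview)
Your proposal has two genuine gaps.

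\textbf{Wrong choice of $r_j$.} You take $r_j$ to be the robot realising the minimum in $d_i(t)$, and then assert that the first term $\mathit{dist}(p_i(t_0),p_j(t_0))$ is at most $d_i$. But $d_i$ is the \emph{minimum} over all initial pairwise distances from $p_i(t_0)$, so for every $j$ one has $\mathit{dist}(p_i(t_0),p_j(t_0))\ge d_i$, with equality only when $r_j$ realises $d_i$. You later write ``which it does by our choice,'' but your choice made $r_j$ realise $d_i(t)$, not $d_i$; these need not be the same robot. The easy fix is to pick $r_j$ as the \emph{initial} nearest neighbour of $r_i$ and use $d_i(t)\le \mathit{dist}(p_i(t_0),p_j(t))$ (since $d_i(t)$ is a minimum).

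\textbf{Unjustified displacement bound.} More seriously, you use $d_j(\cdot)\le d_j$ to conclude that $r_j$ stays within $d_j/6$ of $p_j(t_0)$. This inequality is simply false: with two robots initially at distance $d$, if $r_i$ moves directly away from $r_j$ by $d/6$ (which $ZoM^3_i(t_0)$ permits), then $d_j$ at that moment becomes $7d/6>d_j$. You correctly flag the circularity in your last paragraph, but you never resolve it; the bound you invoke is not a consequence of the $ZoM^3$ constraint.

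The paper closes exactly this gap by an iterative argument along the sequence of Look phases: the first computed $ZoM^3$ has diameter $\le d_0/3$; after one move the observed distance is at most $d_0+d_0/3$, so the next $ZoM^3$ has diameter $\le (d_0+d_0/3)/3$; iterating gives a diameter bound of $\sum_{k\ge 1} d_0/3^k=d_0/2$, which is precisely the diameter of $ZoM^2_i$. If you want to salvage your triangle-inequality route, set it up as an induction on time with the hypothesis $d_k(s)\le\tfrac32 d_k$ for all robots $k$ and all $s<t$; then the displacement of $r_j$ is at most $\tfrac12\cdot\tfrac13\cdot\tfrac32 d_j=d_j/4\le d_i/4$, yielding $d_i(t)\le\tfrac54 d_i\le\tfrac32 d_i$ --- but at that point you are essentially reproducing the paper's geometric-series control in different clothing.
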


\begin{proof}
Fix some robot $r_i$ and let $r_j$ be the closest robot from $r_i$ at time $t_0$. Let $d_0$ denote the initial distance between $r_i$ and $r_j$, that is, $d_0=dist(p_i(t_0), p_j(t_0))$. We assume that all robots move only inside their $ZoM^3_i$ computed as explained in Definition~\ref{defca:zom3}. Let $t_1 \geq t_0$ be the first time at which a robot in $\{r_i, r_j\}$, say \emph{e.g.} $r_i$, finishes a Look phase after $t_0$. The destination computed by $r_i$ in this cycle is located inside $ZoM^3_i(t_1)$, which is a circle centered at $p_i(t_0)$ and whose diameter is $\leq d_0/3$. Hence, the destination computed by $r_i$ is distant from $p_i(0)$ by at most $d_0/3$. Let $t_2 \geq t_1$ be the first time after $t_1$ at which a robot, say $r_j$, finishes a Look phase. Between $t_1$ and $t_2$, $r_i$ may have finished its Move phase or not. In any case, the observed configuration by $r_j$ at $t_2$ is such that $r_i$ is distant from $p_j(t_0)$ by at most $d_0+d_0/3$. This implies that $ZoM^3_j(t_2)$ has a diameter of at most $(d_0+d_0/3)/3$, which implies that the destination point computed by $r_j$ in this cycle is distant from $p_j(t_0)$ by at most $d_0+d_0/3+d_0/9$. Repeating the argument, we get that $\forall t$, $ZoM^3_i(t)$ has a diameter $\leq \sum\limits_{i=1}^{\infty} d_0/3^i$. Reducing the formula, we obtain that $ZoM^3_i(t)$ is always $\leq d_0/2$, which implies that $ZoM^3_i(t) \subseteq ZoM^2_i(t)$.
\end{proof}

\paragraph*{\textbf{Ensuring Collision-freedom in Line Robocast Algorithms}}
To make LineRbcast1 and LineRbcast2 collision-free, it is expected
that any destination computed by a robot $r_i$ at 
$t$ be located within its $ZoM^3_i(t)$. The computation of
destinations is modified as follows: 
Let $dest_i(t)$ be the destination computed by a robot $r_i$ at time
$t$. Based on $dest_i(t)$, $r_i$ computes 
a new destination $dest^\prime_i(t)$ that ensures collision
avoidance. $dest^\prime_i(t)$ can be set to any point 
located in $[p_i(t_0), dest_i(t)] \cap ZoM^3_i(t)$. For example, we can
take $dest^\prime_i(t)$ to be equal to the point 
located in the line segment $[p_i(t_0), dest_i(t)]$ and distant from
$p_i(t_0)$ by a distance of $d_i(t)/2$ with $d_i(t)$ 
computed as explained in Definition \ref{defca:zom3}.

This modification of the destination computation method does not
impact algorithms correctness since it does not depend on the exact
value of computed destinations, but on the relationship between the
successive positions occupied by each robot. The algorithms remain
correct as long as robots keep the capability to freely change their
direction of movement and to move in both the positive and the
negative part of 
each such direction. This capability is not altered by the collision
avoidance scheme since the 
origin of the coordinate system of each robot - corresponding to its
original position - is strictly included in 
its zone of movement, be it defined by $ZoM^1$, $ZoM^2$ or $ZoM^3$.

\paragraph*{\textbf{Generalisation of the Protocols to $n$ Robots}}
As explained at the end of Section \ref{sec:lcs}, the generalisation of our algorithms to the case of $n$
robots has to deal with the issue of distinguishing robots from each others despite their anonymity. 
The solution we use is to instruct each robot to move in the close neighbourhood of its original position.
Thus, other robots can recognize it by comparing its current position with past ones.
For this solution to work, it is necessary that each robot always remains the closest one to all the positions it has previously occupied.
Formally speaking, we define the zone of movement $ZoM^4$ in a similar way as $ZoM^3$ except that
the diameter is this time equal to $d_i(t)/6$ (vs. $d_i(t) / 3$).
We now show that $ZoM^4$ provides the required properties. 
Let $r_i$ and $r_j$ be an arbitrary pair of robots and Let $d_{ij}$ denotes the distance between their initial positions.
It can easily shown, using the same arguments as the proof of Lemma~\ref{lemca:zom3}, that: 
\begin{enumerate}
\item Neither of the two robots moves away from its initial position by a distance greater than $d_{ij}/4$.
This implies that each robot remains always at a distance strictly smaller than $d_{ij}/2$ from  all the positions it has previously held.
\item The distance between $r_i$ (resp. $r_j$) and all the positions held by $r_j$ ($r_i$) is strictly greater than $d_{ij}/2$.
\end{enumerate}
Hence, $r_i$ can never be closer than $r_j$ to a position that was occupied by $r_j$, and vice versa.
This implies that it is always possible to recognize a robot by associating it with the position which is closest to it 
in some previously observed configuration.

\section{RoboCast Applications}
\label{sec:applications}

\subsection{Asynchronous Deterministic 2-Gathering} 

Given a set of $n$ robots with arbitrary initial locations and no agreement on a global coordinate system, $n$-Gathering requires that all robots eventually reach the same unknown beforehand location. $n$-Gathering was already solved when $n>2$ in both ATOM~\cite{SY99} and CORDA~\cite{CFPS03} oblivious models. The problem is impossible to solve for $n=2$ even in ATOM, except if robots are endowed with persistent memory~\cite{SY99}. In this section we present an algorithm that uses our RoboCast primitive to solve $2$-Gathering in the non-oblivious CORDA model.

A first "naive" solution is for each robot to robocast its abscissa and ordinate axes and to meet the other robot at the midpoint $m$ of their initial positions. RoboCasting the two axes is done using our Line RoboCast function described above in conjunction with the $ZoM^3-$based collision avoidance scheme.

A second possible solution is to refine Algorithm $\psi_{f-point(2)}$ of \cite{SY99,SY99b} by using our Line RoboCast function to "send" lines instead of the one used by the authors. The idea of this algorithm is that each robot which is activated for the first time translates and rotates its coordinate system such that the other robot is on its positive $y$-axis, and then it robocasts its (new) $x$-axis to the other robot using our Line Robocast function.
In \cite{SY99}, the authors give a method that allows each robot to 
compute the initial position of one's peer by comparing their two robocast $x$-axes defined above.
Then each robot moves toward the midpoint of their initial positions. 
Our Line RoboCast routine combined with the above idea achieves gathering in asynchronous systems 
within a bounded (\emph{vs.} finite in \cite{SY99}) number of movements
of robots and using only two 
(\emph{vs.} four) variables in their persistent memory.

\begin{thm}
There is an algorithm for solving deterministic gathering for two robots in non-oblivious asynchronous networks (CORDA).
\end{thm}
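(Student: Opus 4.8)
The plan is to establish the theorem constructively, by verifying that one of the two candidate algorithms sketched just before the statement — the refinement of $\psi_{f\text{-}point(2)}$ of~\cite{SY99,SY99b} using $LineRbcast1$ plus the $ZoM^3$-based collision avoidance scheme — correctly solves $2$-Gathering in CORDA. The key point to exploit is that the correctness of $LineRbcast1$ (Algorithm~\ref{alg-getview2}), already proved in the Appendix, gives us a black-box RoboCast primitive satisfying the Validity and Termination properties stated in Section~\ref{sec:model}, and that the collision-avoidance modification described in Section~\ref{sec:collisionfree} does not disturb this correctness (since it only shrinks displacements along the same computed directions, and the origin of each robot's coordinate system stays strictly inside its zone of movement). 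So the heavy lifting is already done; what remains is to glue the pieces together and argue the geometric reconstruction of the meeting point.

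First I would describe the algorithm precisely. When a robot $r_i$ is activated for the first time, it records the position of its peer, then translates and rotates its local coordinate system so that its origin is its current position and its peer lies on the positive $y$-axis; it then invokes $LineRbcast1$ to robocast its (new) $x$-axis $l_i$, and symmetrically delivers the peer's line $l_{1-i}$. Next I would invoke the reconstruction method of Suzuki and Yamashita~\cite{SY99}: from the two robocast $x$-axes together with the memory of where the peer was seen initially, each robot can compute, in its own coordinate system, the initial position of its peer, hence the midpoint $m$ of the two initial positions (this works because each robot's $x$-axis is, by construction, the line through that robot's initial position perpendicular to the segment joining the two initial positions, so the two axes plus one observed point pin down both initial positions up to the shared geometry). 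Once $LineRbcast1$ has terminated at $r_i$ — which by its Termination property happens after a bounded number of moves — $r_i$ has delivered $l_{1-i}$, computes $m$, and heads to $m$ (using the $ZoM^3$ scheme only while it is still inside its zone of movement; once it must leave toward $m$, both robots have already finished robocasting and collisions are no longer a concern because only a rendez-vous at a single point remains, and that single point is common to both).

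Then I would verify the two requirements of gathering. Agreement on the target: both robots compute the \emph{same} point $m$ because $m$ is the midpoint of the two initial positions, a coordinate-free notion; I would spell out that the reconstruction formula of~\cite{SY99} yields, for each robot, the other's initial position expressed in its own frame, and that the midpoint computed from it coincides in physical space. Eventual gathering: by the Validity property of RoboCast, every robot eventually delivers its peer's line; by Termination, robocast-induced motion stops after finitely (indeed boundedly) many moves; thereafter each robot's program directs it to $m$, and since the scheduler is fair and guarantees a minimum step $\delta_i$ per activation, each robot reaches $m$ in finitely many further activations. Collision-freedom throughout follows from Lemmas~\ref{lemca:zom1}–\ref{lemca:zom3} during the robocast phase; during the final approach the only shared point is $m$ itself, which is the intended meeting point, so the (momentary) "collision" at the end is exactly the goal and not a fault. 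I would also note the quantitative improvements claimed in the text: boundedness of the number of moves (inherited from the motion-complexity analysis of Section~\ref{sec:lcs}) versus merely finite in~\cite{SY99}, and the use of only two persistent variables.

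The main obstacle I anticipate is the CORDA-specific subtlety in the reconstruction step: in the asynchronous model a robot may be activated for the "first time" and set up its frame while its peer has already moved, so the position it records as "the peer's initial position" may not be $p_{1-i}(t_0)$. I would handle this exactly as the RoboCast design does — by having $r_i$ rotate its frame based on the \emph{first observation} it makes, and then arguing that the reconstruction of~\cite{SY99} is robust to this because each robot robocasts the axis defined \emph{relative to its own first observation}, and the pair of axes, being mutually consistent, still determines a well-defined common midpoint; alternatively, one can insist (as the naive first solution does) that each robot robocast \emph{both} its abscissa and ordinate, which over-determines the geometry and removes any ambiguity about which point is the peer's true initial location. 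I would pick whichever of these makes the invariant cleanest and defer the detailed geometric bookkeeping — which is routine once the RoboCast primitive is taken as a correct black box — to a lemma.
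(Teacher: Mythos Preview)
Your proposal is correct and follows the paper's approach: the paper offers no formal proof beyond the two algorithm descriptions preceding the theorem, so fleshing out those constructions with the already-established Validity/Termination of $LineRbcast1$ and the $ZoM^3$ collision-avoidance lemmas is exactly what is implicit there.

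One remark on the obstacle you flag. Your first proposed resolution---that the pair of robocast $x$-axes is ``mutually consistent'' and still pins down a common midpoint even when the two frames are set up from asynchronous first observations---is not obviously sound in CORDA: if $r_0$ moves along $l_0$ before $r_1$ first looks, then $l_1$ is perpendicular to $[p_1(t_0),p_0(t')]$ rather than to $[p_1(t_0),p_0(t_0)]$, so $l_0$ and $l_1$ need not be parallel and the symmetric foot-of-perpendicular reconstruction of~\cite{SY99} (designed for ATOM, where both first Looks are simultaneous) does not transfer verbatim; moreover $r_1$ has no direct way to tell which robot looked first. The paper itself glosses over this point by simply deferring to~\cite{SY99}. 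Your fallback to the first (``naive'') solution---robocast both axes so that the peer's origin, hence its true initial position, is recovered as their intersection---is the clean route and is one of the two solutions the paper explicitly proposes, so committing to it up front would tighten the argument without losing anything.
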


\subsection{Asynchronous Stigmergy}

Stigmergy \cite{DDPS09} is the ability of a group of robots that
communicate only through 
vision to exchange binary information.
Stigmergy comes to encode bits in the movements of robots.
Solving this problem becomes trivial when using our RoboCast primitive.
First, robots exchange their local coordinate system as explained in Section \ref{sec:lcs}.
Then, each robot that has a binary packet to transmit robocasts a
line to its peers whose angle with respect to its 
abscissa encodes the binary information.
Theoretically, as the precision of visual sensors is assumed to be
infinite, 
robots are able to observe the exact angle of this transmitted line, hence
 the size of exchanged messages may be infinite also.
However, in a more realistic environment in which sensor accuracy and calculations have a margin of error,
it is wiser to discretize the measuring space.
For this, we divide the space around the robot in several sectors 
such that all the points located in the same sector encode the same binary information (to tolerate errors of coding).
For instance, to send binary packets of 8 bits, each sector should have an angle equal to $u = 360^\circ / 2 ^ 8$.
Hence, when a robot moves through a line whose angle with respect to the abscissa is equal to $\alpha$, 
the corresponding binary information is equal to $\lfloor \alpha /n \rfloor$.
Thus, our solution works in asynchronous networks, uses a bounded
number of movements and also 
allows robots to send binary packets and not only single bits as in \cite{DDPS09}.
\section{Conclusion and Perspectives}
\label{sec:conclusion}

We presented a new communication primitive for robot networks, that can be used in fully asynchronous CORDA networks. Our scheme has the additional properties of being motion, memory, and computation efficient. We would like to raise some open questions:

\begin{enumerate}
\item The solution we presented for collision avoidance in CORDA can be used for protocols where robots remain in their initial vicinity during the whole protocol execution. A collision-avoidance scheme that could be used with all classes of protocol is a challenging issue.
\item Our protocol assumes that a constant number of positions is stored by each robot. Investigating the minimal number of stored positions for solving a particular problem would lead to interesting new insights about the computing power that can be gained by adding memory to robots.
\end{enumerate}

\bibliographystyle{plain}
\bibliography{robocast}

\clearpage
\appendix
\section*{Appendix}

\section{Correctness Analysis of LineRbcast1}
\label{sec:correctness_rbcast1}
In the following we prove that Algorithm \ref{alg-getview2} satisfies the specification 
of a RoboCast, namely validity and termination.
First we introduce some notations that will be further used in the proofs of the algorithm. 
For each variable $v$,
and each robot $r_i$, we denote $r_i.v(t)$ the value of the variable $v$ in the local memory of $r_i$ at time $t$.
When the time information can be derived from the context, we use simply $r_i.v$.

\paragraph*{Proof of the Validity property.} We start by the validity property. For this, we first prove a series of technical lemmas. 
The following two lemmas state the existence of a time instant at which both robots have reached $S_2$ 
and a following time instant at which at least one of them have reached $S_3$.

\begin{lem}
\label{lem2:state22}
Eventually, both robots reach state $S_2$.
\end{lem}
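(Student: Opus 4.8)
The plan is to prove Lemma~\ref{lem2:state22} by tracing the state transitions of Algorithm~\ref{alg-getview2} and exploiting the fairness of the scheduler. Recall that each robot $r_i$ starts in state $S_1$, and by inspection of the code, state $S_1$ lasts exactly one activation: whenever $r_i$ is activated in $S_1$, it unconditionally records $pos_1 \leftarrow observe(1-i)$, sets $destination \leftarrow (1,0)_i$, sets $state \leftarrow S_2$, and moves. Hence the only thing to argue is that each robot is eventually activated at least once after the RoboCast invocation.

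First I would fix the notation: let $t$ be the time after which all robots have invoked \emph{RoboCast}, as defined in the problem statement (here $n=2$, so $t = \max\{t_0,t_1\}$). Since the scheduler is fair, in any infinite execution every robot is activated infinitely often; in particular each robot $r_i$ is activated at some time $\tau_i > t$. At its first activation at or after the start of the protocol, $r_i$ is in state $S_1$ (its initial state), and executing the $S_1$ action sets $r_i.state \leftarrow S_2$. Since no action in the algorithm ever sets $state$ back to $S_1$ (the transitions only go $S_1 \to S_2 \to S_3 \to S_4$, as is clear from lines 1.c, 2.b.5, and 3.a.1), $r_i$ remains in $S_2$ or a later state forever after $\tau_i$. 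Taking $\tau = \max\{\tau_0, \tau_1\}$, at time $\tau$ both robots have reached (at least) state $S_2$, which is exactly the claim. (If one wishes to read the lemma as "both robots are simultaneously in state $S_2$", one additionally notes that $r_i$ could already have progressed past $S_2$; but since the intended reading, consistent with its use in the subsequent lemmas, is that each robot has \emph{reached} $S_2$, i.e. passed through it, the above suffices.)

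I do not anticipate a genuine obstacle here: the lemma is essentially an unfolding of the definition of state $S_1$ together with fairness, and the only subtlety worth flagging explicitly is the monotonicity of the state variable — that no transition decreases the state — which guarantees that "reaches $S_2$" is a stable property and hence that a common time $\tau$ exists at which both robots satisfy it. This monotonicity observation is also what makes the statement robust under the CORDA interleaving, since the argument never needs any synchronization between the two robots' activations.
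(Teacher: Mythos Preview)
Your proof is correct and follows essentially the same approach as the paper: both argue that fairness guarantees each robot is eventually activated, and that the first activation in state $S_1$ unconditionally executes lines 1.a--1.c, setting the state to $S_2$. Your additional remarks on the monotonicity of the state variable and the existence of a common time $\tau$ are fine elaborations but go beyond what the paper spells out for this lemma.
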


\begin{proof}
Thanks to the fairness assumption of the scheduler, every robot is activated infinitely often. 
The first time each robot is activated, 
it executes the lines (1.a, 1.b, 1.c) of Algorithm \ref{alg-getview2} and it reaches the state $S_2$.
\end{proof}

\begin{lem}
\label{lem2:state3}
Eventually, at least one robot reaches state $S_3$.
\end{lem}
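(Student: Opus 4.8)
The plan is to argue by contradiction: suppose that \emph{no} robot ever reaches state $S_3$, and derive a contradiction. By Lemma~\ref{lem2:state22} both robots eventually execute their first cycle and enter $S_2$, and an inspection of Algorithm~\ref{alg-getview2} shows that the only way to leave $S_2$ is through line $2.b$, which takes the executing robot to $S_3$. Hence, under the contradiction hypothesis, from some time on both robots stay permanently in $S_2$, and every time either of them is activated it takes branch $2.a$, i.e.\ it performs a null move. Consequently, after completing its (unique) first-cycle move --- which is directed from the origin $(0,0)_i$ towards $(1,0)_i$ and, by the scheduler guarantee, covers a positive distance $\geq \delta_i$ --- each robot $r_i$ stays forever frozen at a fixed position $p_i^\ast$ with $p_i^\ast \neq (0,0)_i$.

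Next I would identify the content of the variable $pos_1$ for the robot that performed its first Look phase earliest. Let $r_i$ be a robot whose first Look phase occurs no later than that of its peer (one of the two always satisfies this). At the instant of $r_i$'s first Look, robot $r_{1-i}$ has not yet performed its own first Compute phase, hence it has not moved and is still located at its origin $(0,0)_{1-i}$; therefore $r_i.pos_1 = (0,0)_{1-i}$. On the other hand, by the previous paragraph, once $r_{1-i}$ has completed its first-cycle move it stays forever at $p_{1-i}^\ast \neq (0,0)_{1-i}$. By fairness, $r_i$ is activated at some time after that move completes; at that activation, executing the test of line $2.a$ in state $S_2$, it observes $r_{1-i}$ at position $p_{1-i}^\ast \neq (0,0)_{1-i} = r_i.pos_1$, so the guard of line $2.a$ fails, $r_i$ executes line $2.b$ and moves to $S_3$ --- contradicting the hypothesis. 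This establishes that at least one robot eventually reaches $S_3$.

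The argument is pure causality and timing bookkeeping; no geometric reasoning about the broadcast lines is needed at this stage. The only points requiring care are intrinsic to the asynchronous CORDA model: one must justify that (i) a robot does not change position before executing its first Compute phase --- so the ``first looker'' genuinely records its peer's origin in $pos_1$ --- and (ii) a robot that remains in $S_2$ never moves again after its first cycle, so its position eventually stabilises to a point the peer is bound to observe as different from the recorded $pos_1$. Both facts are immediate from the Look--Compute--Move cycle structure together with the guaranteed strictly positive progress $\delta_i$ of every move, which is also what rules out the degenerate possibility that a robot's first move leaves it at its origin.
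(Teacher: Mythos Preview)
Your proof is correct and follows essentially the same approach as the paper's: a contradiction argument in which one identifies the robot whose first Look occurs earliest, notes that its $pos_1$ records the peer before the peer has moved, observes that under the contradiction hypothesis the peer freezes at a strictly different point after its first (guaranteed non-null) move, and then invokes fairness to force the earlier-looking robot to take branch $2.b$ and enter $S_3$. The only cosmetic difference is that you explicitly identify $pos_1$ with the peer's origin $(0,0)_{1-i}$, whereas the paper simply records it as ``the position at $t_1$'' and argues it differs from the post-move position; the underlying mechanism is identical.
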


\begin{proof}
Let $r_1$ and $r_2$ be two robots executing Algorithm \ref{alg-getview2}, 
and assume towards contradiction that neither of them reaches state $S_3$.
But according to Lemma \ref{lem2:state22}, they both eventually reach $S_2$.
Consider for each robot $r_i$ the cycle in which it reaches state $S_2$, 
and define $t_i$ to be the time of the end of the Look phase of this cycle.
Without loss of generality, assume $t_1 \leq t_2$ (the other case is symmetric).
Hence, the variable $r_1.pos_1$ describes the position of robot $r_2$ at $t_1$ expressed in the local coordinate system of robot $r_1$.
Let $t_3 > t_2$ be the time at which robot $r_2$ finishes its cycle that leads it to state $S_2$.
Between $t_2$ and $t_3$, robot $r_2$ performed a non null movement 
because it moved towards the point $(1.0)$ of its local coordinate system (line $1.b$ of the code).
Hence, the position of $r_2$ at $t_3$ is different from its position at $t_1 \leq t_2$ which was recorded in the variable $r_1.pos_1$.
By assumption, $r_2$ never reaches state $S_3$, so each time it is activated after $t_3$ 
it keeps executing the lines $2.a$ and $2.c$ of the code and 
never moves from its current position (reached at $t_3$).
By fairness, there is a time $t \geq t_3$ at which robot $r_1$ is activated again. At this time, it observes $r_2$ in a position different from $r_1.pos_1$.
This means that for $r_1$ the condition of line $2.a$ is false. Hence, $r_1$ executes the else block of the condition ($2.b.*$) and reaches state $S_3$ which contradicts the assumption and proves the lemma.
\end{proof}

\paragraph*{}
The next lemma expresses the fact that our algorithm exhibits some kind of synchrony 
in the sense that 
robots advance in the execution of the algorithm through the different states in unisson.
That is, neither of them surpasses the other by more than one stage (state).

\begin{lem}
\label{lem2:state43}
\label{lem2:state32}
If at some time robot $r_i$ is in state $S_j$ and robot $r_{1-i}$ is in state $S_k$ then $|j-k| \leq 1$.
\end{lem}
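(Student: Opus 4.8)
The statement to prove is Lemma~\ref{lem2:state43}: at all times, the state indices of the two robots differ by at most one. The natural approach is induction on the sequence of configuration transitions in the execution. Initially both robots are in $S_1$, so $|j-k| = 0 \leq 1$, establishing the base case. For the inductive step, I would assume the invariant holds at some configuration $c_t$ and show it survives the transition $c_t \to c_{t+1}$. Since a single transition changes the state of a robot by at most one increment (each state's code either keeps the robot in the same state or advances it to the next state; no state is ever skipped and states never decrease), the only way the invariant could break is if, starting from a ``gap of one'' configuration — say $r_i$ in $S_{k+1}$ and $r_{1-i}$ in $S_k$ — the lagging robot $r_{1-i}$ were to remain in $S_k$ while $r_i$ advances to $S_{k+2}$, creating a gap of two.

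So the heart of the argument is: \textbf{whenever $r_i$ is one state ahead of $r_{1-i}$, robot $r_i$ cannot advance further until $r_{1-i}$ catches up.} I would verify this by a case analysis on $k \in \{1,2,3\}$ (the transition out of $S_4$ is the end of the algorithm, so no gap-of-two can form past it). The key observation in each case is that the guard enabling $r_i$'s transition from $S_{k+1}$ to $S_{k+2}$ is a predicate testing an \emph{observation of $r_{1-i}$} that can only become true once $r_{1-i}$ has itself performed the action characteristic of reaching $S_{k+1}$:
\begin{itemize}
\item For $k=1$: $r_i \in S_2$, $r_{1-i} \in S_1$. Robot $r_i$ leaves $S_2$ only via the else-branch (line $2.b$), which requires observing $r_{1-i}$ in a position different from $pos_1$. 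But $r_{1-i}$, still in $S_1$, has not yet executed its line $1.d$ move, so it has not yet left the position recorded in $r_i.pos_1$ (this uses that $pos_1$ was the position of $r_{1-i}$ as seen \emph{at or after} the start). Hence the guard is false and $r_i$ stays in $S_2$ until $r_{1-i}$ moves into $S_2$.
\item For $k=2$: $r_i \in S_3$, $r_{1-i} \in S_2$. Robot $r_i$ leaves $S_3$ (line $3.a$) only when it detects that $r_{1-i}$'s latest position is no longer on the segment $[pos_1, \text{(earlier observation)}]$ — i.e., when $r_{1-i}$ has reversed direction. But $r_{1-i}$ reverses direction precisely when it transitions $S_2 \to S_3$ (line $2.b.4$). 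So $r_i$ cannot leave $S_3$ before $r_{1-i}$ enters $S_3$.
\item For $k=3$: $r_i \in S_4$, $r_{1-i} \in S_3$. Robot $r_i$ leaves $S_4$ (terminates) only after reaching $myIntersect$ and stepping onto $nextl_i$; symmetrically, $r_{1-i}$ exits $S_3$ when it observes $r_i$ off the line $l_i$. One must check that $r_i$'s departure from $l_i$ is exactly the event that releases $r_{1-i}$ from $S_3$, so again $r_i$ does not get two states ahead.
\end{itemize}

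\textbf{Main obstacle.} The delicate part is case $k=2$ (and to a lesser extent $k=3$): I must argue that the segment-membership test in line $3.a$ cannot spuriously fire due to the asynchronous interleaving — e.g., that $r_i$ does not observe $r_{1-i}$ ``off the segment'' merely because $r_{1-i}$ is still advancing along $l_{1-i}$ in the \emph{positive} direction past the two points $r_i$ already recorded. This requires relating the geometry (collinearity along $l_{1-i}$, monotone progress in one direction while in $S_2$) to the exact variables $pos_1, pos_2$ that $r_i$ holds, and using that under CORDA a robot stopped mid-move is still somewhere on the segment toward its destination. I would isolate this as a sub-claim: \emph{while $r_{1-i}$ is in $S_2$, every position of $r_{1-i}$ observed by $r_i$ lies on the segment $[pos_1, observe(1-i)]$ used in the guard}, proved from the fact that in $S_2$ the robot $r_{1-i}$ only moves toward $(1,0)_{1-i}$ (positive direction), hence its positions are monotonically ordered along $l_{1-i}$, so the earlier recorded $pos_2$ stays between $pos_1$ and any later observation. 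Once this sub-claim is in hand, the induction closes routinely.
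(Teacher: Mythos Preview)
Your inductive approach is sound, but the paper dispenses with the induction machinery and proves the lemma via two direct, retrospective implications: (i) $r_i \in S_3 \Rightarrow r_{1-i} \in S_{\geq 2}$, because $r_i$ having recorded $pos_1 \neq pos_2$ means it saw $r_{1-i}$ in two distinct positions, so $r_{1-i}$ has moved at least once and therefore already executed lines $1.a$--$1.c$ into $S_2$; and (ii) $r_i \in S_4 \Rightarrow r_{1-i} \in S_{\geq 3}$, because the guard at line $3.a$ having fired means $r_{1-i}$ reversed direction, which it does only upon executing $2.b.4$--$2.b.5$ into $S_3$. Applied symmetrically, these two facts immediately yield $|j-k|\leq 1$ with no invariant to maintain across transitions. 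Your cases $k=1$ and $k=2$ are exactly the contrapositives of (i) and (ii), so the substantive observations coincide; the paper's packaging is simply leaner and avoids your base case, transition bookkeeping, and the (unneeded) case $k=3$---termination out of $S_4$ is handled separately and is not part of this lemma. The geometric sub-claim you correctly isolate as the crux---that while $r_{1-i}$ has not entered $S_3$ its positions are monotone along $l_{1-i}$, so $pos_2$ remains in the segment $[pos_1, observe(1-i)]$---is precisely what the paper invokes, though more tersely, as ``before this, robot $r_{1-i}$ moved only in one direction''; you are arguably more careful here than the paper. One minor slip: in $S_2$ the robot performs null moves (line $2.a$); the forward move toward $(1,0)$ occurs during the $S_1\!\to\! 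S_2$ cycle. This does not affect your monotonicity conclusion.
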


\begin{proof}
The proof of the lemma is divided into two parts:
\begin{itemize}

\item If robot $r_i$ is in state $S_3$ and robot $r_{1-i}$ is in state $S_j$ then $j \geq 2$.

\textit{proof: }
Since robot $r_i$ is in state $S_3$, it has necessarily executed the lines $(1.a \ldots 1.d)$ and $(2.b.*)$ of the code. 
Hence, the value of variable $r_i.pos_1$ is different from that of $r_i.pos_2$.
This means that $r_i$ has seen $r_{1-i}$ in at least two different positions which implies that $r_{1-i}$ has been activated at least once. 
Hence, $r_{1-i}$ has necessarily executed the lines $(1.a\ldots 1.c)$ and has reached $S_2$. 

\item If robot $r_i$ is in state $S_4$ and robot $r_{1-i}$ is in state $S_j$ then $j \geq 3$.

\textit{proof: }
For a robot $r_{1-i}$ to reach state $S_4$, 
it must execute the line $3.a$ of the code and detect that the other robot 
$r_{1-i}$ has changed its direction of movement (it moved toward the negative part of its x-axis). 
Thus, robot $r_{1-i}$ has necessarily executed lines ($2.b.2 \ldots 2.c$) of the code which means that $r_{1-i}$ is in state $S_3$. 
Before this, robot $r_{1-i}$ moved only in one direction, that is, in the positive direction of its x-axis.

\end{itemize}

This proves the lemma.
\end{proof}

\paragraph*{•}
The following lemma states that each robot $r_i$ is guaranteed to be observed by its peer at least once 
in a position located in the positive part of its local x-axis.
The observed position is stored in $r_{1-i}.pos_1$. 
Expressed otherwise, this means that each robot "send" a position located in its positive x-axis to its peer. 
This property is important for proving validity 
which correspond to both robots eventually reaching $S_3$ (Lemmas \ref{lem2:state33} and \ref{lem2:validity}).
Indeed, since each robot $r_i$ is guaranteed that a point located in its local x-axis was received by its peer $r_{1-i}$, 
it suffices for $r_i$ to send its line to head toward the negative part of its x-axis 
and to stay there until it is observed by $r_{1-i}$.
That is, until a position located in its negative x-axis (and thus \emph{distinct} from $r_{1-i}.pos_1$) is received by $r_{1-i}$.

\begin{lem}
\label{lem2:positivex}
For each robot $r_i$, the variable $r_i.pos_1$ describes a position located in the positive axis of the other robot $r_{1-i}$.
\end{lem}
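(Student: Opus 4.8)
The plan is to trace through the algorithm to identify exactly when and where the variable $r_i.pos_1$ is set, and then argue that the position it records must lie on the positive $x$-axis of $r_{1-i}$. The variable $r_i.pos_1$ is written only once, in line $1.a$, during the unique cycle in which $r_i$ is in state $S_1$; it stores $observe(1-i)$, i.e.\ the position of $r_{1-i}$ as seen at the end of that Look phase. So the whole question reduces to: \emph{where is $r_{1-i}$ located, in its own coordinate system, at the instant $r_i$ performs its state-$S_1$ Look?} I would denote by $t_i^{(1)}$ the time of the end of $r_i$'s state-$S_1$ Look phase; the goal is to show that at $t_i^{(1)}$ robot $r_{1-i}$ occupies a point with strictly positive $x$-coordinate in its local frame (recall, by the initialization convention, $r_{1-i}$ starts at its origin with its $x$-axis aligned with $l_{1-i}$).

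The key observation is a monotonicity fact about each robot's trajectory before it reaches $S_3$: from the initial configuration until the moment a robot executes line $2.b$ (moving to $S_3$), it only ever heads toward the point $(1,0)$ of its own coordinate system — it does this once in state $S_1$ (line $1.b$) and then in every subsequent state-$S_2$ cycle it either repeats a move toward $(1,0)$ (if it has not yet reached it) or performs a null move $destination \leftarrow observe(i)$ (line $2.a$). Consequently, every position occupied by a robot from time $t_0$ up to the end of its state-$S_3$-entering cycle lies on the segment $[(0,0), (1,0)]$ of its local $x$-axis, hence has nonnegative local $x$-coordinate; and as soon as it has moved even a minimal distance $\delta$ away from the origin (which the scheduler guarantees on the first activation), that coordinate is strictly positive. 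So I would argue in two sub-cases according to the order of the first activations. Using Lemma \ref{lem2:state32} (no robot is more than one state ahead), when $r_i$ does its state-$S_1$ Look the peer $r_{1-i}$ is in state $S_1$ or $S_2$ — in particular it has not yet left the positive part of its $x$-axis via the $S_3$ move toward $(-1,0)$. The only delicate point is the boundary case where $r_{1-i}$ has not yet been activated at all at time $t_i^{(1)}$, so it still sits at its origin $(0,0)$, which is \emph{not} strictly positive.

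The main obstacle is exactly this boundary case, and handling it is where I expect to spend the effort. Here I would invoke the argument already used in the proof of Lemma \ref{lem2:state3}: if $r_{1-i}$ is still at its origin when $r_i$ reads $pos_1$, then $r_i$ in state $S_2$ will keep finding $r_{1-i}$ at $pos_1$ and keep executing line $2.a$ (null move) indefinitely, never progressing — so, by fairness, eventually $r_{1-i}$ is activated, executes lines $1.a$–$1.d$, and moves a distance $\geq \delta_{1-i}$ toward its own $(1,0)$; the next time $r_i$ is activated it observes $r_{1-i}$ at a \emph{different} position on $r_{1-i}$'s positive $x$-axis, and this is the position that triggers line $2.b$ and gets stored — wait, that is $pos_2$, not $pos_1$. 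The clean way around this is to observe that in the memory-optimized / intended reading of the code, what matters for the lemma is that $r_{1-i}.pos_1$ (the argument of the next stage, the "received" positive-axis point of $r_i$ at $r_{1-i}$) sits on $r_i$'s positive $x$-axis; so I would restate the target symmetrically and prove: the position of $r_i$ that $r_{1-i}$ first records (whether it ends up in $r_{1-i}.pos_1$ or is the first distinct position feeding $r_{1-i}.pos_2$) lies on $r_i$'s positive $x$-axis, because $r_i$ reaches $S_2$ before $r_{1-i}$ can reach $S_3$ (Lemma \ref{lem2:state32}) and hence has by then moved strictly away from its origin along $(0,1)\cdot$—no—along the positive direction of its $x$-axis. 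Closing this case carefully, and making the "$pos_1$ vs.\ first-distinct-$pos_2$" bookkeeping precise, is the crux; the rest is the straightforward trajectory-confinement argument sketched above.
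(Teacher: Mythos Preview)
Your first two paragraphs are exactly the paper's argument: $pos_1$ is written in line~$1.a$ while $r_i$ is in state $S_1$; by Lemma~\ref{lem2:state32} the peer $r_{1-i}$ is then in $S_1$ or $S_2$; and in those states a robot has only moved along the positive direction of its $x$-axis (or not moved at all), so the recorded position lies on the nonnegative part of $r_{1-i}$'s $x$-axis. The paper's own proof is precisely this, and it even writes ``positive x-axis (or the origin)'' --- i.e.\ it reads ``positive axis'' as including the origin.

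Your third paragraph is where you derail. The boundary case ``$r_{1-i}$ has not yet been activated and sits at its origin'' is \emph{not} a problem that needs fixing: the only downstream use of this lemma (in Lemma~\ref{lem2:state33}) is to argue that $r_{1-i}.pos_1$ differs from a later observation that lies strictly in the \emph{negative} part of $r_i$'s $x$-axis. For that, nonnegative versus strictly negative is enough; you do not need $pos_1$ to be strictly positive. Once you accept that, the fairness detour, the confusion between $pos_1$ and $pos_2$, and the attempted ``symmetric restatement'' all evaporate. Drop the third paragraph entirely and your proof matches the paper's.
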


\begin{proof}
The value of the variable $r_i.pos_1$ is assigned 
for the first time in line $1.a$ when $r_i$ is still in state $S_1$.  
At this time, according to Lemma \ref{lem2:state32}, $r_{1-i}$ is necessarily in state $S_1$ or $S_2$ 
(Otherwise, this would contradict Lemma \ref{lem2:state32} since we would have a time at which a robot ($r_{1-i}$) is in a state $S_j$ with $j \geq 3$ 
concurrently with another robot ($r_i$) that is in state $S_1$).
This means that the variable $r_i.pos_1$ describes a position held by robot $r_{1-i}$ while it was in state $S_1$ or $S_2$.
But according to the algorithm, when a robot is in state $S_1$ or $S_2$, it is still located in a position of its positive x-axis (or the origin).
Hence, the variable $r_i.pos_1$ describes a position 
located in the positive x-axis of robot $r_{1-i}$ which proves the lemma.
\end{proof}

\begin{lem}
\label{lem2:state33}
Eventually, both robots reach state $S_3$.
\end{lem}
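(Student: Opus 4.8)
The plan is to build on Lemma~\ref{lem2:state3}, which already guarantees that \emph{at least one} robot reaches $S_3$; what remains is to propagate this to the other robot. First I would fix notation: let $r_i$ be a robot that reaches $S_3$ (such a robot exists by Lemma~\ref{lem2:state3}), and let $r_{1-i}$ be its peer; towards a contradiction, assume $r_{1-i}$ never reaches $S_3$. By Lemma~\ref{lem2:state22} both robots reach $S_2$, and by the synchrony Lemma~\ref{lem2:state43} the state gap is at most one; in particular, while $r_{1-i}$ stays in $S_2$, $r_i$ cannot advance past $S_3$, so $r_i$ remains in $S_3$ forever under this assumption. The key observation is then that a robot stuck in $S_2$ forever keeps executing lines $2.a$ and $2.c$, hence performs only null moves and stays at a fixed position $q$ reached at the end of the cycle in which it entered $S_2$ — and by Lemma~\ref{lem2:positivex} (applied to $r_i$, whose $pos_1$ records a position of $r_{1-i}$), $q$ lies on the positive part of the x-axis of $r_{1-i}$.

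Next I would analyse what $r_i$ does while permanently in $S_3$. Each time $r_i$ is activated in $S_3$, it evaluates the test of line $3.a$: whether $pos_2$ (its recorded second position of $r_{1-i}$) is \emph{not} inside the segment $[pos_1, observe(1-i)]$. Since $r_{1-i}$ is frozen at $q$, the observed position of $r_{1-i}$ is always $q$, so this test has a fixed truth value. By assumption it must be false (otherwise $r_i$ would move to $S_4$, contradicting $|j-k|\le 1$ with $r_{1-i}$ in $S_2$), so $r_i$ keeps executing lines $3.b$/$3.c$: as long as $r_i$ is in the non-negative part of its own x-axis it heads to $(-1,0)_i$, and once it is in the negative part it makes null moves. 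Because the scheduler guarantees each robot moves at least $\delta_i$ per activated cycle toward its destination, after finitely many activations $r_i$ crosses its origin and settles at a position strictly in the negative part of its x-axis — a position distinct from $r_{1-i}.pos_1$, which by Lemma~\ref{lem2:positivex} lies on $r_i$'s \emph{positive} x-axis.

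Now I would derive the contradiction. After $r_i$ has settled in its negative x-axis, the next time $r_{1-i}$ is activated (guaranteed by fairness), it is still in $S_2$ and evaluates line $2.a$: it compares $r_{1-i}.pos_1$ with $observe(i)$. But $r_{1-i}.pos_1$ is a position of $r_i$'s positive x-axis (Lemma~\ref{lem2:positivex} with roles of $i$ and $1-i$ swapped — note this requires knowing $r_{1-i}.pos_1$ was set while $r_i$ was in $S_1$ or $S_2$, which is exactly what that lemma provides), whereas $r_i$ is now in its negative x-axis, so these two positions differ. Hence the condition of line $2.a$ is false for $r_{1-i}$, so it executes the else-block $2.b.*$ and reaches $S_3$, contradicting the assumption. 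The main obstacle in writing this cleanly is the bookkeeping around which snapshots are ``stale'' in the CORDA model: one must be careful that when $r_{1-i}$ finally observes $r_i$, the position it sees is indeed one on $r_i$'s negative axis and not an outdated positive-axis position — this is handled by first letting $r_i$ \emph{settle} (stop moving) in its negative axis, so that every subsequent observation of $r_i$, however delayed, returns a negative-axis point.
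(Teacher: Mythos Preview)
Your proposal is correct and follows essentially the same route as the paper: pin $r_i$ in $S_3$ via Lemma~\ref{lem2:state43}, let it settle on its negative x-axis, and then use Lemma~\ref{lem2:positivex} together with fairness to make $r_{1-i}$'s test at line~$2.a$ fail. One minor remark: your citation of Lemma~\ref{lem2:positivex} to place the frozen position $q$ of $r_{1-i}$ on $r_{1-i}$'s positive x-axis is not quite what that lemma asserts (it speaks of $r_i.pos_1$, not of $q$), but this aside is never used in the rest of your argument, so it does not affect correctness.
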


\begin{proof}
According to Lemmas \ref{lem2:state3} and \ref{lem2:state32}, 
there is a time at which some robot, say $r_i$, reaches $S_3$ and the other one ($r_{1-i}$) is at a state $S_j$ with $j \geq 2$.
If $j \geq 3$ then the lemma holds and we are done. 
So we assume in what follows that $r_{1-i}$ is at $S_2$ 
and we prove that it eventually reaches $S_3$.
Assume for the sake of contradiction that this is not the case, that is, $r_{1-i}$ remains always stuck in $S_2$.
This implies, according to Lemma \ref{lem2:state43}, that $r_i$ remains also stuck in state $S_3$.
When $r_i$ is in state $S_3$, it keeps executing the line $3.b$ of the code until 
it reaches a position located in the negative part of its x-axis. 
Denote by $t_1$ the first time at which $r_i$ reaches its negative x-axis. 
Each time $r_i$ is activated after $t_1$, it executes the line $3.c$ of the code corresponding to a null movement and it never moves from its current position (located in the negative x-axis).
This is because we assumed that $r_i$ remains stuck in $S_3$ forever.
By fairness, there is a time $t_2 \geq t_1$ at which $r_{1-i}$ is activated.
This time, the condition of line $2.a$ does not hold for robot $r_{1-i}$ because the position returned by $observe(i)$ is located in the negative x-axis of $r_i$ and is different from $r_{1-i}.pos_1$ which is located in the positive part of the x-axis of $r_1$ (as stated in Lemma \ref{lem2:positivex}).
Hence, $r_{1-i}$ executes the part $2.b.*$ of the code and changes its status to $S_3$.
\end{proof}

\paragraph*{•}
Now we are ready to prove the validity property.

\begin{lem}
\label{lem2:validity}
Algorithm \ref{alg-getview2} satisfies the \textbf{validity} property of the Line Robocast Problem for two robots.
\end{lem}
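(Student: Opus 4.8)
The plan is to derive validity almost immediately from Lemma~\ref{lem2:state33}, after checking that reaching state $S_3$ entails delivering the \emph{correct} peer line.

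First I would recall that, in the terminology of the RoboCast specification, the message robocast by $r_i$ in Algorithm~\ref{alg-getview2} is the line $l_i$ (the x-axis of $r_i$), and that the only place a robot performs a \emph{Deliver} is line~$2.b.3$. Hence, for the message $l_i$, the only robot for which delivery is non-trivial is the peer $r_{1-i}$ (robot $r_i$ trivially knows $l_i$, as it is its own x-axis), and $r_{1-i}$ delivers something precisely when it executes the else-block $2.b.*$ on its way from $S_2$ to $S_3$. Since the transition $S_2 \to S_3$ can only be taken through line~$2.b.5$, which is nested inside that else-block, I obtain: if $r_{1-i}$ ever reaches $S_3$, then it has performed \emph{Deliver}$(line(r_{1-i}.pos_1, r_{1-i}.pos_2))$ at some finite time. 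Lemma~\ref{lem2:state33} guarantees that both robots do reach $S_3$, so the existential part of validity is essentially settled; what remains is to argue that the delivered object really is $l_i$, and then to fix the time instant $t_i'$.

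The heart of the argument --- and the step I expect to be the most delicate --- is showing that $line(r_{1-i}.pos_1, r_{1-i}.pos_2)$ coincides with $l_i$, i.e. that $r_{1-i}$ has observed $r_i$ in two \emph{distinct} points, both lying on $l_i$. Distinctness is immediate: $r_{1-i}.pos_2$ is assigned in line~$2.b.1$ only when the test of line~$2.a$ fails, i.e. when $observe(i) \neq r_{1-i}.pos_1$, so $r_{1-i}.pos_2 \neq r_{1-i}.pos_1$. That $r_{1-i}.pos_1$ lies on (the positive part of) $l_i$ is exactly Lemma~\ref{lem2:positivex}. For $r_{1-i}.pos_2$, I would observe that it records a position of $r_i$ seen while $r_{1-i}$ is still in state $S_2$; by the synchrony Lemma~\ref{lem2:state43}, at that instant $r_i$ is in state $S_1$, $S_2$ or $S_3$, and in each of these states $r_i$ has only ever moved along its own x-axis (towards $(1,0)_i$ in $S_1$ and $S_2$, towards $(-1,0)_i$ in $S_3$), so its current position lies on $l_i$. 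Two distinct points of $l_i$ determine $l_i$, hence the delivered line is $l_i$; symmetrically $r_i$ delivers $l_{1-i}$.

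Finally, to match the wording of the validity property I would let $t_i'$ be any time later than both $t = \max\{t_1,\dots,t_n\}$ (with $n=2$ here) and the instant at which $r_{1-i}$ executed line~$2.b.3$ for message $l_i$; such a time exists since that delivery instant is finite. Then after $t_i'$ every robot has performed \emph{Deliver}$(l_i)$, which is validity. The only genuinely non-routine ingredient is the geometric/scheduling reasoning of the previous paragraph, which is why it leans on Lemmas~\ref{lem2:positivex} and~\ref{lem2:state43}; the rest is bookkeeping on the control flow of Algorithm~\ref{alg-getview2}.
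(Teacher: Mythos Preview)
Your proposal is correct and follows essentially the same route as the paper: invoke Lemma~\ref{lem2:state33} so that both robots execute block~$2.b.*$ and hence deliver, use the test in line~$2.a$ for distinctness of $pos_1$ and $pos_2$, and appeal to the synchrony Lemma~\ref{lem2:state43} to conclude that the peer has not yet left its x-axis when these positions were recorded. The only cosmetic difference is that you also cite Lemma~\ref{lem2:positivex} for $pos_1$ and spell out the choice of $t_i'$, whereas the paper handles both positions via Lemma~\ref{lem2:state43} alone and leaves the timing implicit.
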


\begin{proof}
Eventually both robots reach state $S_3$ according to Lemma \ref{lem2:state33}. 
Each robot $r_i$ in state $S_3$ has necessarily executed the blocks $1.*$ and $2.b.*$ of the algorithm
and thus delivered the line defined by the positions $r_i.pos_1$ and $r_i.pos_2$.
Now we prove that this line is well defined and that it does correspond to $l_{1-i}$, the line sent by $r_{1-i}$.
$r_i.pos_1$ and $r_i.pos_2$ are well defined since they are assigned a value in lines $1.a$ and $1.b.1$ respectively before $r_i$ delivers the line.
The assigned values correspond to two positions of $r_{1-i}$.
Moreover, by the condition of line $2.b.$ we have that these two positions are distinct.
It remains to prove that they belong to $l_{1-i}$.

The values of variables $r_i.pos_1$ and $r_i.pos_2$ are assigned when $r_i$ is in state $S_1$ and $S_2$ respectively.
Hence, according to Lemma \ref{lem2:state43}, when $r_i.pos_1$ and $r_i.pos_2$ are defined, 
$r_{1-i}$ did not yet reach $S_4$ and moved only through its x-axis.
This means that $r_i.pos_1$ and $r_i.pos_2$ correspond to two distinct positions of the x-axis of $r_{1-i}$.
Hence, $r_i$ delivered $l_{1-i}$.
Since both robots eventually reach $S_3$, both lines $l_i$ and $l_{1-i}$ are eventually delivered.
\end{proof}


\paragraph*{Proof of the Termination property.} 
Now we prove that the algorithm actually terminates. 
Before terminating, each robot $r_i$ must be sure that its peer $r_{1-i}$ has received its sent line, 
that is, $r_{1-i}$ has reached the state $S_3$.
As already explained, $r_i$ can infer the transition of $r_{1-i}$ to $S_3$ by detecting a change of its direction of movement.
Upon this, $r_i$ can go on to state $S_4$ and terminates safely. 
In the following two lemmas, we prove that at least one robot reaches $S_4$. 
To do this, we first prove in Lemma \ref{lem2:positive-pos_12} that at least one robot, say $r_{1-i}$, is observed by its peer in two distinct positions located in the positive part of its x-axis.
Later, when $r_{1-i}$ moves to its negative x-axis and $r_i$ observes it there, $r_i$ learns that $r_{1-i}$ changed its direction of movement which allows the transition of $r_i$ to state $S_4$. This is proved in Lemma \ref{lem2:state4}.

\begin{lem}
\label{lem2:positive-pos_12}
For at least one robot, say $r_i$, the two variables $r_i.pos_1$ and $r_i.pos_2$ describe two positions located in the positive x-axis of $r_{1-i}$ 
and such that $r_i.pos_2>r_i.pos_1$ with respect to the local coordinate system of $r_{1-i}$.
\end{lem}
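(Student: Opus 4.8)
The plan is to show that the two robots cannot both leave state $S_2$ "simultaneously" in the sense that one of them must witness two full positive-axis positions of the other before it itself is seen in a second position. The key asymmetry to exploit is the timing of the Look phases that trigger the $S_2 \to S_3$ transition. First I would recall from Lemma~\ref{lem2:state33} that both robots eventually reach $S_3$, and from Lemma~\ref{lem2:state43} that at the moment the first robot, call it $r_i$, transitions to $S_3$, its peer $r_{1-i}$ is still in $S_2$ (or has just reached $S_3$ too, but we can pick the strictly-first one). The transition of $r_i$ to $S_3$ happens at the end of a Look phase, say at time $\tau$, at which $r_i$ records $r_i.pos_2$ as the current position of $r_{1-i}$ and finds it distinct from $r_i.pos_1$; by Lemma~\ref{lem2:state43} again, $r_{1-i}$ has at time $\tau$ only ever moved along the positive part of its x-axis (it has not yet executed block $2.b.*$), so both $r_i.pos_1$ and $r_i.pos_2$ lie on that positive half-axis.

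The remaining task is to establish the ordering $r_i.pos_2 > r_i.pos_1$ in $r_{1-i}$'s coordinates. Here I would argue that while $r_{1-i}$ is in states $S_1$ and $S_2$, its motion along its x-axis is monotone: in $S_1$ it heads from the origin toward $(1,0)_{1-i}$, and in $S_2$ every non-null move again targets $(1,0)_{1-i}$ (line $2.a$ chooses either the current position — a null move — or, in block $2.b$, it has already changed state), so $r_{1-i}$'s x-coordinate is non-decreasing over the whole interval during which it stays in $S_1 \cup S_2$. Since $r_i.pos_1$ was sampled (in line $1.a$, while $r_i$ was in $S_1$) no later than $r_i.pos_2$ was sampled (in line $2.b.1$, at time $\tau$), and since by Lemma~\ref{lem2:state43} $r_{1-i}$ was in $S_1 \cup S_2$ throughout that sampling interval, monotonicity gives $r_i.pos_1 \le r_i.pos_2$; the condition of line $2.b$ (the two positions are distinct) upgrades this to the strict inequality $r_i.pos_2 > r_i.pos_1$. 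Combined with Lemma~\ref{lem2:positivex}, which already tells us $r_i.pos_1$ is on the positive half-axis, and the observation above that $r_i.pos_2$ is too, this proves the lemma for the robot $r_i$ that is first to reach $S_3$.

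I expect the main obstacle to be the bookkeeping around exactly which robot plays the role of "$r_i$" and making the monotonicity claim airtight in the face of asynchrony: one must be careful that between the Look phase of line $1.a$ and the Look phase of line $2.b.1$, the peer $r_{1-i}$ genuinely never entered $S_3$ (so that it never headed toward $(-1,0)_{1-i}$), which is precisely what Lemma~\ref{lem2:state43} guarantees given that $r_i$ was in $S_1$ then $S_2$ during that window. A secondary subtlety is that the scheduler may stop $r_{1-i}$ mid-move, but since every such partial move is still directed toward $(1,0)_{1-i}$ along the x-axis, the x-coordinate only increases, so partial moves do not break monotonicity. Once these points are nailed down the inequality follows directly.
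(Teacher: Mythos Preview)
Your proposal is correct and follows essentially the same approach as the paper: pick the first robot to enter $S_3$, use Lemma~\ref{lem2:state43} to conclude its peer is still in $S_2$ and hence has only moved in the positive direction of its x-axis, so both recorded positions lie on the positive half-axis with $pos_2>pos_1$ by monotonicity. The paper's proof is terser (it does not spell out the monotonicity argument or the scheduler subtlety you raise), but the underlying argument is identical.
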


\begin{proof}
Let $r_i$ be the first robot to enter state $S_3$. 
The other robot ($r_{1-i}$) is in state $S_2$ in accordance with Lemma \ref{lem2:state32}.
Hence, $r_{1-i}$ moved only through the positive direction of its x-axis, 
so the variables $r_i.pos_1$ and $r_i.pos_2$ correspond to two different positions in the positive x-axis of robot $r_{1-i}$ or in its origin. But since $r_i.pos_2$ was observed after $r_i.pos_1$ and $r_{1-i}$ moves in the positive direction of its x-axis, then $r_i.pos_2 > r_i.pos_1$ with respect to the local coordinate system of $r_{1-i}$.
\end{proof}

\begin{lem}
\label{lem2:state4}
Eventually, at least one robot reaches state $S_4$.
\end{lem}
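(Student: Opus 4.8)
The plan is to argue by contradiction: suppose that neither robot ever reaches $S_4$. First I would invoke Lemma~\ref{lem2:state33} to get that both robots eventually reach $S_3$, and observe that, since by inspection of Algorithm~\ref{alg-getview2} a robot's state only advances (through $S_1, S_2, S_3, S_4$, in that order), both robots then remain in $S_3$ forever. I would then fix the robot $r_i$ supplied by Lemma~\ref{lem2:positive-pos_12}, so that, expressed in the coordinate system of $r_{1-i}$, the stored points $r_i.pos_1$ and $r_i.pos_2$ both lie on the positive part of the x-axis of $r_{1-i}$ with $r_i.pos_2 > r_i.pos_1$; note that these values are never overwritten once $r_i$ has entered $S_3$ (the code of state $S_3$ does not touch $pos_1, pos_2$).

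Next I would show that $r_{1-i}$, being permanently stuck in $S_3$, eventually settles at a position $q$ that is strictly on the negative part of its x-axis and stays there. While the abscissa of $r_{1-i}$ in its own frame is nonnegative, line $3.b$ has it head towards the negative end of its x-axis; the scheduler guarantees a displacement of at least $\delta_{1-i}$ towards that destination at each activation, and fairness guarantees infinitely many activations, so after finitely many cycles its abscissa is strictly negative, and from then on line $3.c$ produces only null moves. A key remark here is that throughout states $S_1$--$S_3$ robot $r_{1-i}$ moves only along its own x-axis, i.e.\ along $l_{1-i}$, so $q$ is collinear with $r_i.pos_1$ and $r_i.pos_2$.

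Finally, using fairness once more, $r_i$ gets activated at some time after $r_{1-i}$ has settled at $q$, so that $observe(1-i)=q$ at that activation. Since collinearity and the betweenness relation are invariant under a change of coordinate frame, the test of line $3.a$ may be evaluated in the frame of $r_{1-i}$: the segment $[r_i.pos_1, q]$ consists of the points of $l_{1-i}$ whose abscissa lies between $q<0$ and $r_i.pos_1>0$, all of which are $\leq r_i.pos_1 < r_i.pos_2$; hence $r_i.pos_2 \notin [r_i.pos_1, q]$, the guard of line $3.a$ is satisfied, and $r_i$ moves to $S_4$ --- contradicting the assumption. I do not expect a genuine obstacle; the two points to be careful about are the coordinate-independence of the betweenness test (so that $r_i$ can decide it although it computes in its own frame) and the fact that it is the \emph{strict} ordering $r_i.pos_2 > r_i.pos_1$ from Lemma~\ref{lem2:positive-pos_12}, and not merely the positivity of the two points, that forces $r_i.pos_2$ out of the segment once $r_{1-i}$ has crossed to the negative side.
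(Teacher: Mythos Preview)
Your proposal is correct and follows essentially the same approach as the paper: assume no robot reaches $S_4$, use Lemma~\ref{lem2:state33} to place both in $S_3$, take the robot $r_i$ from Lemma~\ref{lem2:positive-pos_12}, argue that $r_{1-i}$ eventually settles on the negative part of its x-axis, and conclude that the guard of line~$3.a$ fires for $r_i$. Your write-up is in fact more careful than the paper's on two points it glosses over (coordinate-invariance of the betweenness test and the role of the strict inequality $r_i.pos_2 > r_i.pos_1$), but the underlying argument is the same.
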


\begin{proof}
We assume towards contradiction that no robot ever reach $S_4$.
But according to Lemma \ref{lem2:state33}, both robots eventually reach $S_3$.
Hence we consider a configuration in which both robots are in $S_3$ and we derive a contradiction by proving that at least one of them does reach $S_4$.
Let $r_i$ be the robot induced by Lemma \ref{lem2:positive-pos_12}. 
The variables $r_i.pos_1$ and $r_i.pos_2$ of $r_i$ correspond to two different positions 
occupied by $r_{1-i}$ while it was on the positive part of its x-axis.
By assumption, $r_{1-i}$ eventually reaches state $S_3$. 
At the end of this cycle, 
$r_{1-i}$ is either located in a position of its negative x-axis or 
it keeps executing lines $3.b.*$ each time it is activated until it reaches such a position, let's call it $p$.
The next cycles it is activated, $r_{1-i}$ executes the line $3.c$ of the code because we assumed that $r_{1-i}$ never reaches $S_4$.
It results that $r_{1-i}$ never quits $p$.
Hence, $r_{1-i}$ is guaranteed to be eventually observed by $r_i$ in a position that is smaller than $r_i.pos_2$ with respect to the local coordinate system of $r_{1-i}$.
At this point, the condition of line $3.4$ becomes true for robot $r_i$, 
which executes the block of the code labelled by $3.4.*$ and sets is state to $S_4$.
\end{proof}

\begin{lem}
\label{lem2:state44}
Eventually, both robots reach state $S_4$.
\end{lem}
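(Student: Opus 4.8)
The plan is to mirror the structure of the proof of Lemma~\ref{lem2:state33}: Lemma~\ref{lem2:state4} already places one robot in $S_4$, Lemma~\ref{lem2:state43} forces its peer to be in $S_3$ or $S_4$, and if the peer were stuck in $S_3$ we would derive a contradiction by showing that the robot in $S_4$ is eventually (and permanently) observed off its line, which trips the guard of line $3.a$ at the peer.

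In detail, by Lemma~\ref{lem2:state4} some robot, say $r_i$, eventually reaches $S_4$; let $r_{1-i}$ be its peer. Fix a time after which $r_i$ is in $S_4$. By Lemma~\ref{lem2:state43}, from that point on $r_{1-i}$ is in $S_3$ or $S_4$; if it ever enters $S_4$ we are done, so assume for contradiction that $r_{1-i}$ remains forever in $S_3$. The first step is to show that $r_i$ eventually occupies, and from then on keeps occupying, a position $p^\star \notin l_i$: while in $S_4$, $r_i$ heads toward $myIntersect \in l_i$ (line $4.a$), and by fairness together with the guaranteed per-cycle progress $\delta_i$ it reaches $myIntersect$ after finitely many cycles; on its next activation it executes line $4.b$, rotates its coordinate system, and moves toward the point $(1,0)$ of the new system, which lies on $nextl_i \setminus \{myIntersect\}$ and hence off $l_i$; the $\delta_i$ guarantee makes this move strictly leave $myIntersect$, so $r_i$ ends at some $p^\star \notin l_i$ and then returns, performing no further motion. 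I expect this ``$r_i$ stays off $l_i$'' claim to be the delicate point: it relies on reading the ``return'' of line $4.b.2$ as the genuine termination of $r_i$'s participation, so that the otherwise adversarial scheduler cannot shuttle $r_i$ back to $myIntersect$ exactly at the instants $r_{1-i}$ performs its Look.

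The second step is the elementary geometric observation that $p^\star$ forces the transition of $r_{1-i}$. By Lemma~\ref{lem2:positivex}, $r_{1-i}.pos_1$ lies on the x-axis $l_i$ of $r_i$; moreover $r_{1-i}.pos_2$ was recorded at line $2.b.1$ while $r_i$ was still in $S_1$, $S_2$ or $S_3$ (by Lemma~\ref{lem2:state43}), i.e.\ while $r_i$ moved only along $l_i$, so $r_{1-i}.pos_2 \in l_i$ as well; and $r_{1-i}.pos_1 \neq r_{1-i}.pos_2$ by the condition of line $2.b$. Since $p^\star \notin l_i$, the (non-degenerate) segment $[r_{1-i}.pos_1, p^\star]$ meets $l_i$ only at $r_{1-i}.pos_1$, so $r_{1-i}.pos_2$ is not inside it; hence at the next activation of $r_{1-i}$ after $r_i$ has reached $p^\star$ (which exists by fairness) the guard of line $3.a$ holds for $r_{1-i}$ and it sets its state to $S_4$, contradicting the assumption. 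Therefore $r_{1-i}$ also reaches $S_4$, and together with $r_i$ both robots reach $S_4$, as required. Apart from the ``stays off $l_i$'' subtlety, the remaining work is routine fairness-and-$\delta$ bookkeeping and plane geometry.
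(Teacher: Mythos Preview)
Your proposal is correct and follows essentially the same approach as the paper: invoke Lemma~\ref{lem2:state4} to place one robot in $S_4$, use Lemma~\ref{lem2:state43} to put the peer in $S_3$ or $S_4$, then argue that the $S_4$-robot reaches $myIntersect$ in finitely many steps and moves off $l_i$, after which the segment test of line~$3.a$ triggers at the peer because $r_{1-i}.pos_1, r_{1-i}.pos_2 \in l_i$ but the observed position is not. Your treatment is in fact slightly more careful than the paper's on two points---you justify explicitly why $r_{1-i}.pos_2$ lies on $l_i$, and you flag the ``stays off $l_i$'' subtlety that the paper leaves implicit.
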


\begin{proof}
According to Lemma \ref{lem2:state4} at least one robot, say $r_i$, eventually reaches $S_4$.
When $r_i$ reaches $S_4$, $r_{1-i}$ is in a state $S_j$ with $ j\geq 3$ according to Lemma \ref{lem2:state43}.
If $j=4$ the lemma holds trivially, 
so we consider in the following a configuration in which $r_{1-i}$ is in state $S_3$
and we prove that it eventually joins $r_i$ in state $S_4$.
The variables $r_{1-i}.pos_1$ and $r_{1-i}.pos_2$ of $r_{1-i}$ describe two distinct positions located in the x-axis of robot $r_i$.
Let $p_i$ describes the position of $r_i$ at the end of the cycle in which it reaches $S_4$.
Once in state $S_4$, $r_i$ moves towards the point $myIntersect$ each time it is activated until it reaches it (lines $3.a.2$ and $4.a$ of the code).
$myIntersect$ is the point located at the intersection of $l_i$ and $nextl_i$ and its distance from $p_i$ is finite.
Since $r_i$ is guaranteed to move a minimal distance of $\delta_i$ at each cycle in which it is activated,
it reaches $myIntersect$ after a finite number of cycles.
The next cycle, $r_i$ chooses a destination located outside $l_i$ ($4.b.2$) and moves towards it before finishing the algorithm.
Let $t_i$ be the time of the end of the Move phase of this cycle 
and let $q_i$ be the position occupied by $r_i$ at $t_i$.
$q_i \notin l_i$ means that $q_i \notin line(r_{1-i}.pos_1, r_{1-i}.pos_2)$.
It follows that $r_{1-i}.pos_2 \notin  line(r_{1-i}.pos_1, q_i)$.
By fairness, there is a time $t>t_i$ at which $r_{1-i}$ is activated again, 
and at which it observes $r_i$ in the position $q_i$.
But we showed that $q_i$ is such that $r_{1-i}.pos_2 \notin  line(r_{1-i}.pos_1, q_i)$.
Hence the condition of line $3.a$ is true for robot $r_{1-i}$ in $t$ and it reaches state $S_4$ in this cycle.
\end{proof}

\begin{lem}
\label{lem2:termination}
Algorithm \ref{alg-getview2} satisfies the \textbf{termination} property of the Line Robocast Problem for two robots.
\end{lem}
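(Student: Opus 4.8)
The plan is to establish termination by building on the chain of lemmas already proved for validity. The key fact we can assume is Lemma~\ref{lem2:state44}: eventually \emph{both} robots reach state $S_4$. So it remains only to argue that (i) once a robot reaches $S_4$ it performs a bounded, finite number of additional moves before the $\texttt{return}$ statement, and (ii) after both robots have executed this final sequence, no robot performs any further movement that causally depends on the RoboCast invocation. Since the RoboCast of a single line is the whole protocol here, ``termination'' amounts precisely to showing the algorithm halts for both robots.

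First I would fix a robot $r_i$ and trace its behaviour from the moment it enters $S_4$ (say at time $\tau_i$, which exists by Lemma~\ref{lem2:state44}). At $\tau_i$ the variable $r_i.myIntersect$ has been set to $intersection(l_i, nextl_i)$ in line $3.a.2$, and the destination is $myIntersect$. Each subsequent activation executes line $4.a$ (a move toward $myIntersect$) as long as $observe(i) \neq myIntersect$; since the distance from $r_i$'s current position to $myIntersect$ is finite and $r_i$ advances at least $\delta_i$ per activation (or reaches the target), after finitely many activations $r_i$ satisfies $observe(i) = myIntersect$. At the next activation it executes line $4.b$: it rotates its coordinate system and issues $\texttt{return}$, terminating the algorithm. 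Hence $r_i$ makes at most $\lceil \mathit{dist}/\delta_i\rceil + 1$ moves after entering $S_4$, a finite number. This handles point (i) for each of the two robots.

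For point (ii), once both robots have executed $\texttt{return}$, the program of Algorithm~\ref{alg-getview2} is finished and neither robot issues any further destination as a consequence of this invocation, so trivially no causally-dependent movement occurs afterwards. Combined with Lemma~\ref{lem2:validity}, which gives the time $t_i'$ by which every $l_i$ has been delivered, we can take $t_T$ to be the maximum over both robots of the (finite) termination times just bounded; this $t_T \ge \max\{t_1', t_2'\}$ because a robot delivers $l_{1-i}$ in state $S_2$ (line $2.b.3$), strictly before it reaches $S_4$, let alone terminates. This is exactly the Termination property as specified.

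The main obstacle, and the only place requiring a little care, is the subtle interleaving at the very end: one must check that when $r_i$ has already returned, the other robot $r_{1-i}$ (still possibly in $S_3$) is nonetheless guaranteed to be able to complete \emph{its} transition to $S_4$ and then terminate — i.e. that $r_i$'s having halted in a position off $l_i$ does not strand $r_{1-i}$. But this is precisely what the proof of Lemma~\ref{lem2:state44} already secures: $r_i$'s final position $q_i \notin l_i$ makes the condition of line $3.a$ true for $r_{1-i}$ at its next activation, so $r_{1-i}$ reaches $S_4$ and then, by the argument of the previous paragraph applied to $r_{1-i}$, terminates after finitely many further moves. So the termination proof is essentially a corollary of Lemma~\ref{lem2:state44} plus the finite-distance / minimum-step-$\delta_i$ observation, and I would write it in that order: (1) invoke Lemma~\ref{lem2:state44}; (2) bound the post-$S_4$ moves of each robot using $\delta_i$; (3) conclude that a finite $t_T \ge \max\{t_1', t_2'\}$ exists after which no causally-dependent movement happens.
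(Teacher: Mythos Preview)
Your proposal is correct and follows essentially the same approach as the paper: invoke Lemma~\ref{lem2:state44} to get both robots into $S_4$, then use the finite distance to $myIntersect$ together with the guaranteed minimum step $\delta_i$ to bound the number of remaining moves before the \texttt{return}. The paper's own proof is considerably terser (it omits your points about $t_T$ versus the delivery times $t_i'$ and the interleaving concern, the latter being already absorbed into Lemma~\ref{lem2:state44}), but the mathematical content is the same.
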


\begin{proof}
Eventually, both robots reach $S_4$ as proved by Lemma \ref{lem2:state44}.
Let $r_i$ be a robot in state $S_4$ and 
let $p_i$ be its position at the end of the cycle in which it reaches $S_4$, 
Let $d_i$ be the distance between $p_i$ and $myIntersect_i$. 
Since the scheduler is fair and a robot is allowed to move in each cycle a minimal distance of $\sigma_i$ before it can be stopped by the scheduler, 
it follows that $r_i$ is guaranteed to cover the distance $d_i$ and to reach $myIntersect$ after at most $d_i/\sigma_i$ cycles.
The next cycle, $r_i$ moves outside $l_i$ and terminates.
\end{proof}

\begin{thm}
Algorithm \ref{alg-getview2} solves the Line Robocast Problem for two robots in unoblivious CORDA systems.
\end{thm}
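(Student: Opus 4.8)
The plan is to derive the theorem directly from the specification of the Line RoboCast Problem, which demands exactly two properties --- \emph{Validity} and \emph{Termination} --- and to obtain each of them from the technical lemmas already established. Here the invocation of \emph{RoboCast($l_i$)} by robot $r_i$ is identified with $r_i$ entering Algorithm~\ref{alg-getview2} in state $S_1$; by the fairness of the scheduler both robots are activated, the two invocation times are finite, and it therefore suffices to exhibit, for each robot, a later time at which its peer has delivered its line, and a (still later) time after which no robot moves on behalf of this robocast.

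For Validity, I would invoke Lemma~\ref{lem2:validity}: it shows that both robots eventually reach state $S_3$, and that upon this transition each robot $r_i$ has already executed blocks $1.*$ and $2.b.*$, hence delivered $line(r_i.pos_1, r_i.pos_2)$; the lemma further certifies that $r_i.pos_1 \neq r_i.pos_2$ (the guard of line $2.b$) and that both points lie on the x-axis $l_{1-i}$ of the peer --- the latter because, by the unison Lemma~\ref{lem2:state43}, at the times these variables are recorded $r_{1-i}$ has not yet left its x-axis. Thus each robot delivers precisely the line its peer intended to broadcast, which is Validity.

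For Termination, I would invoke Lemma~\ref{lem2:termination}: via Lemma~\ref{lem2:state44} both robots eventually reach $S_4$, and from the moment a robot $r_i$ enters $S_4$ it only heads toward $myIntersect$ and then performs one move off $l_i$ before executing the \textbf{return}; since the scheduler lets $r_i$ advance at least $\sigma_i$ per activation, this takes at most $\lceil d_i/\sigma_i \rceil + 1$ further activations, after which $r_i$ performs no more movement attributable to the current line robocast. Combining the two lemmas gives that Algorithm~\ref{alg-getview2} satisfies both required properties, hence solves the Line RoboCast Problem in CORDA (and, since CORDA executions subsume ATOM ones, a fortiori in ATOM).

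I expect the genuine difficulty to lie not in this final assembly but in the ingredients already isolated, most notably Lemma~\ref{lem2:state4} (some robot reaches $S_4$): this is where the symmetry between the two robots must be broken, using Lemma~\ref{lem2:positive-pos_12} to single out a robot whose two recorded peer positions satisfy $pos_1 < pos_2$ on the peer's \emph{positive} x-axis, so that the peer's subsequent move into its negative x-axis is guaranteed to falsify the segment test on line $3.a$ and trigger the transition. Everything else --- reaching $S_2$, then $S_3$, and maintaining the invariant $|j-k|\le 1$ --- is comparatively routine bookkeeping about which lines of the code a robot must have executed to be in a given state.
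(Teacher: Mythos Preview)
Your proposal is correct and mirrors the paper's own proof, which simply states that the theorem follows directly from Lemma~\ref{lem2:validity} (Validity) and Lemma~\ref{lem2:termination} (Termination). Your additional unpacking of those lemmas and identification of Lemma~\ref{lem2:state4} as the crux is accurate but goes beyond what the paper records for this final assembly step.
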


\begin{proof}
Follows directly from Lemmas \ref{lem2:validity} and \ref{lem2:termination}.
\end{proof}

\newpage
\section{Generic RoboCast}
\label{sec:generic_cast}
In this section, we describe the RoboCast Algorithm for the general case of $n$ robots. Then we give its formal proof of correctness.

\label{sec:linemulticast-n}

\subsection{Description of the Algorithm}
The Line RoboCast algorithm for the general case of $n$ processes is a simple generalization of the algorithm for two robots.
The code of this algorithm is in Algorithm \ref{alg-getviewn}.                 
It consists in the following steps:
The first time a robot $r_i$ is activated in state $S_1$, it simply records the positions of all other robots in the array $pos_1[]$.
Then, it moves towards the point $(1,0)$ of its local coordinate system and goes to state $S_2$. 
When $r_i$ is in state $S_2$, each time it observes some robot $r_j$ 
in a position different from the one recorded in $pos_1[j]$, it stores it in $pos_2[j]$.
At this point, $r_i$ can infer the line sent by $r_j$ which passes through both $pos_1[j]$ and $pos_2[j]$. 
Hence, $r_i$ delivers $line(r_i, r_j)$ which corresponds to $l_j$.
$r_i$ does not move from its current position until it assigns a value to all the cells of $pos_2[]$ (apart from the one associated with itself which is meaningless).
That is, until it delivers all the lines sent by its peers.
Upon this, it transitions to $S_3$ and heads to the point $(-1, 0)$.
At state $S_3$, $r_i$ waits until it observes that all other robots changed the direction of their movement or moved outside their sent line.
Then, it moves towards a position located outside its current line $l_i$. In particular, it goes to a position located in $nextl_i$, the next line it will robocast.
Hence $r_i$ first passes by the intersection of $l_i$ and $nextl_i$. Then, it moves outside $l_i$ and terminates the algorithm.

\begin{algorithm}[htb]
\begin{scriptsize}

\caption{Line RoboCast \bf{LineRbcast1} for $n$ robots: Algorithm for robot $r_i$}          

\label{alg-getviewn}                  
\begin{algorithmic}
\STATE \textbf{Variables}:\\

\STATE $state$: initially $S_1$.
\STATE $pos_1[1 \ldots n]$: initially $\perp$
\STATE $pos_2[1 \ldots n]$: initially $\perp$
\STATE $destination, intersection$: initially $\perp$
\STATE

\STATE   \textbf{Actions}:\\

\STATE \textbf{1. State [$S_1$]}: \emph{\%Robot $r_i$ starts the algorithm\%}
$\vspace{0.2cm}$
\STATE $\hspace{1cm}$a. \textbf{foreach } $1 \leq j \leq n$ \textbf{do }$pos_1[j] \leftarrow observe(j)$ \textbf{enddo}
\STATE $\hspace{1cm}$b. $destination \leftarrow (1,0)_i$
\STATE $\hspace{1cm}$c. $state \leftarrow S_2$
\STATE $\hspace{1cm}$d. Move to destination
\STATE

\STATE \textbf{2. State [$S_2$]}: \emph{\%$r_i$ knows at least one position of the lines of all other robots\%}
$\vspace{0.2cm}$
\STATE $\hspace{1cm}$a. \textbf{if} $\exists j \neq i$ \text{ s.t. } ($pos_2[j] = \bot$) \text{ and } ($pos_1[j] \neq observe(j)$) \textbf{then} 
\STATE $\hspace{2cm}$ 1. $pos_2[j] \leftarrow observe(j)$
\STATE $\hspace{2cm}$ 2. Deliver $(line(pos_1[j], pos_2[j])$ \textbf{endif}
\STATE $\hspace{1cm}$b. \textbf{if} $\exists j\neq i$ \text{ s.t. } ($pos_2[j] = \bot$) \textbf{then} $destination \leftarrow observe(i)$
\STATE $\hspace{1cm}$c. \textbf{else}
\STATE $\hspace{2cm}$ 1. $destination \leftarrow (-1,0)_i$
\STATE $\hspace{2cm}$ 2. $state \leftarrow S_3$ \textbf{endif}
\STATE $\hspace{1cm}$d. Move to destination
\STATE

\STATE \textbf{3. State [$S_3$]}: \emph{\%$r_i$ knows the lines of all other robots\%}
$\vspace{0.2cm}$
\STATE $\hspace{1cm}$a. \textbf{if} $\forall j \neq i~ pos_2[j]$ is outside the line segment $[pos_1[j], observe(j)]$ 
\textbf{then}
\STATE $\hspace{2cm}$ 1. $intersection \leftarrow l_i \cap nextl_i$
\STATE $\hspace{2cm}$ 2. $destination \leftarrow intersection$
\STATE $\hspace{2cm}$ 3. $state \leftarrow S_4$
\STATE $\hspace{1cm}$b. \textbf{else if} ($observe(i) \geq (0,0)_i$) \textbf{then} $destination \leftarrow (-1,0)$
\STATE $\hspace{1cm}$c. \textbf{else} $destination \leftarrow observe(i)$ \textbf{endif endif}
\STATE $\hspace{1cm}$d. Move to destination
\STATE

\STATE \textbf{4. State [$S_4$]}: \emph{\%$r_i$ knows that all robots have learned its line $l_i$\%}
\STATE $\hspace{1cm}$a. \textbf{if} ($observe(i) \neq intersection$) \textbf{then} $destination \leftarrow intersection$
\STATE $\hspace{1cm}$b. \textbf{else}
\STATE $\hspace{2cm}$1. $r_i$ rotates its coordinate system such that its x-axis and the origin match with
\STATE $\hspace{2cm}$	 $nextl_i$ and $intersection$ respectively.
\STATE $\hspace{2cm}$2. $destination \leftarrow (1,0)_i$
\STATE $\hspace{2cm}$3.  return \textbf{endif}
\STATE $\hspace{1cm}$c. Move to destination
\STATE

\remove{
\STATE
\STATE \textbf{4. State [$S_4$]}: \emph{\%Robot $r_i$ knows that all robots have learned its line the x-axis of all their peers\%}
\STATE $\hspace{1cm}$a. $destination \leftarrow (observe(i).x, y)_i$ with $y \neq 0$
\STATE $\hspace{1cm}$b. Move to destination
}

\end{algorithmic}

\end{scriptsize}
\end{algorithm}

\subsection{A Correctness Argument} We prove the correctness of our algorithm by proving that it satisfies the validity and termination property of the RoboCast Problem specification. The general idea of the proof is similar to that of the two robots algorithms
even if it is a little more involved.

\paragraph*{Proof of the Validity property.}

\begin{lem}
\label{lemn:state22}
Eventually, all robots reach state $S_2$.
\end{lem}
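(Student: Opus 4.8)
The plan is to mirror the argument used for the two-robot case (Lemma~\ref{lem2:state22}), since nothing about the transition out of state $S_1$ depends on the number of robots. First I would invoke the fairness assumption on the scheduler: in any infinite execution every robot is activated infinitely often, hence in particular at least once. Then I would inspect the code block labelled \textbf{1. State [$S_1$]} of Algorithm~\ref{alg-getviewn}: every robot starts in state $S_1$, and the very first time a robot $r_i$ is activated it executes lines 1.a--1.d \emph{unconditionally} (the block has no guard), i.e.\ it fills the array $pos_1[\,]$ with a snapshot of all observed positions, sets its destination to $(1,0)_i$, performs $state \leftarrow S_2$, and moves. Consequently, immediately after completing its first activation $r_i$ is in state $S_2$, and it never returns to $S_1$ afterwards, since no line of the algorithm assigns $S_1$ to the state variable.

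Combining these two observations, let $t_i$ be the time at which $r_i$ completes its first activation; such a $t_i$ exists and is finite by fairness. Taking $t = \max_{1 \leq i \leq n} t_i$, a finite maximum over the $n$ robots, every robot is in state $S_2$ by time $t$, which proves the lemma.

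I do not anticipate any real obstacle here; the only points worth stating carefully are that the $S_1$ block is executed unconditionally, so a single activation suffices, and that the claim is about what holds \emph{eventually}, so taking the maximum over the finitely many first-activation times is legitimate. The subtleties of CORDA asynchrony (interleaved Look/Compute/Move phases of different robots) are irrelevant for this step, because the $S_1 \to S_2$ transition is completed within a single cycle of $r_i$ and does not depend on what other robots observe or do in the meantime.
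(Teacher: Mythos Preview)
Your proposal is correct and follows exactly the same approach as the paper: the paper's own proof simply says ``Similar to the proof of Lemma~\ref{lem2:state22}'', which in turn invokes fairness and observes that the first activation of each robot unconditionally executes the $S_1$ block and sets $state \leftarrow S_2$. Your write-up is in fact more detailed than the paper's, but the underlying argument is identical.
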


\begin{proof}
Similar to the proof of Lemma \ref{lem2:state22}.
\end{proof}

\begin{lem}
\label{lemn:state3}
Eventually, at least one robot reaches state $S_3$.
\end{lem}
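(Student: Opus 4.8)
The plan is to reason by contradiction, following the pattern of Lemma~\ref{lem2:state3}. Assume that no robot ever reaches state $S_3$. By Lemma~\ref{lemn:state22} every robot eventually reaches $S_2$, and under our assumption no robot ever executes line $2.c$; hence, once in $S_2$, a robot executes line $2.b$ at every activation, i.e. it performs only null movements. So each robot $r_j$ stays forever at the position $q_j$ it occupies at the end of the cycle in which it moves from $S_1$ to $S_2$. Moreover, in that cycle $r_j$ heads towards $(1,0)_j$ and is guaranteed to traverse a distance of at least $\delta_j>0$, so $q_j$ differs from the origin of $r_j$.

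As in the two-robot case, the crux is to exhibit one robot whose recorded array $pos_1[\cdot]$ holds, for every peer, a position that this peer will later leave. I would pick the robot $r_i$ whose Look phase in its $S_1$ cycle terminates first, at a time $\tau_i$. For every $j\neq i$, the Look phase of $r_j$'s own $S_1$ cycle terminates at some $\tau_j\geq\tau_i$, and no robot changes position before the Move phase of its $S_1$ cycle, which follows its Look phase; hence at time $\tau_i$ the robot $r_j$ is still at its origin, so $r_i.pos_1[j]$ equals the origin of $r_j$, which is different from $q_j$.

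Let $S$ be the latest time at which some robot finishes the Move phase of its $S_1$ cycle (finite, by Lemma~\ref{lemn:state22}). After $S$, each $r_j$ with $j\neq i$ sits at $q_j\neq r_i.pos_1[j]$ and never moves again, while $r_i$ is in state $S_2$. By fairness $r_i$ is activated infinitely often after $S$; at each such activation the guard of line $2.a$ holds for every index $j$ with $pos_2[j]=\bot$ (since $observe(j)=q_j\neq pos_1[j]$), so $r_i$ assigns one more entry of $pos_2[\cdot]$. After at most $n-1$ such activations all entries $pos_2[j]$, $j\neq i$, are assigned, so the guard of line $2.b$ fails, $r_i$ executes line $2.c$ and reaches $S_3$ — contradicting the assumption and proving the lemma.

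I expect the main difficulty to lie in the argument of the second paragraph: one must make sure that, in the fully asynchronous CORDA model where phases of distinct robots interleave arbitrarily, the robot that looks first really does record the \emph{initial} position of every other robot. The one fact that makes this work — and it is model-independent — is that a robot does not move before the Move phase of its very first ($S_1$) cycle, so any robot still in $S_1$ when $r_i$ looks is necessarily at its origin.
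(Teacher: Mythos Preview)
Your proof is correct and follows essentially the same approach as the paper: both pick the robot whose $S_1$-cycle Look phase completes earliest, argue that this robot records every peer at its origin, and then use the guaranteed non-null first move plus the null movements in $S_2$ (under the contradiction hypothesis) to force that robot's $pos_2[\cdot]$ array to fill and trigger the transition to $S_3$. Your version is in fact slightly more careful than the paper's in allowing up to $n-1$ activations to fill the array (matching the $\exists j$ formulation of line~$2.a$), whereas the paper tacitly treats this as happening in a single cycle.
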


\begin{proof}
Let $\mathbb{R} = \{r_1, r_2, \ldots, r_n\}$ be a set of $n$ robots executing Algorithm \ref{alg-getviewn}.
We assume towards contradiction that neither of them ever reach $S_3$.
But according to Lemma \ref{lemn:state22}, all robots eventually reach state $S_2$.
Thus we proceed in the following way: we consider a configuration in which all robots are in $S_2$ and 
we prove that at least one of them eventually reaches $S_3$ which leads us to a contradiction.
Consider for each robot $r_i \in \mathbb{R}$ the cycle in which it reaches the state $S_2$,
and define $t_i$ and $t^\prime_i$ to be respectively the time of the end of the Look and the Move phases of this cycle.
Let $t_k$ be equal to $min \{t_1, t_2, \ldots, t_n\}$ and let $r_k$ be the corresponding robot. 
That is, at $t_k$, robot $r_k$ finishes to execute a Look phase and at the end of this cycle it reaches state $S_2$.
This means that for robot $r_k$, the array $r_k.pos_1[]$ corresponds to the configuration of the network at time $t_k$.

Between $t_i \geq t_k$ and $t^\prime_i$ each robot executes complete Compute and Move phases.
The movement performed in this phase cannot be null because robots move from the point $(0,0)$
towards the point $(1,0)$ of their local coordinate system (line $1.b$ of the code).
Moreover, the scheduler cannot stop a robot before it reaches the point ($\delta_i$, 0).
Hence, the position of each robot $r_i$ at $t^\prime_i$ is different from its position at $t_i$.
But the position of $r_i$ at $t_i$ is equal to its position at $t_k$. 
Thus, the position of each robot at $t^\prime_i$ is different from its position at $t_k$ which is stored in $r_k.pos_1[i]$.
We have by assumption that no robot ever reaches $S_3$. 
So each time a robot $r_i$ is activated after $t^\prime_i$, it keeps executing the lines $2.b$ and $2.d$ of the code and never moves from 
its current position reached at $t^\prime_i$.
Define $t_{end}$ to be equal to $max\{t^\prime_1, \ldots, t^\prime_n\}$.
It follows that at $\forall t \geq t_{end}$ , the position of each robot $r_i$ at $t$ is different from $r_k.pos_1[i]$.
But by fairness, there is a time $t_a \geq  t_{end}$ at which $r_k$ is activated again. 
At this cycle, $r_k$ observes that each robot $r_i$ is located in a position different from $r_k.pos_1[i]$.
Consequently, if there exists a robot $r_i$ such that $r_k.pos_2[i]$ was equal to $\perp$ before this cycle, 
then $r_k$ assigns the current observed position of $r_i$ to $r_k.pos_2[i]$.
This implies that the condition of line $2.b$ is now false for $r_k$.
Hence $r_k$ executes the else block of the condition and reaches state $S_3$. This is the required contradiction that proves the lemma.
\end{proof}

\begin{lem}
\label{lemn:positivex}
For each robot $i$, for each robot $j$, if $r_i.pos_1[j] \neq \perp$, then $pos_1[j]$ describes a position that is necessarily located in the positive x-axis of robot $j$.
\end{lem}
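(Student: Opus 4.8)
The proof parallels that of the two-robot Lemma~\ref{lem2:positivex}, but since at this point of the paper we have not (yet) established an $n$-robot analogue of the synchrony Lemma~\ref{lem2:state32}, I would replace the appeal to synchrony by a direct argument exploiting the fact that each entry $r_i.pos_1[j]$ is written exactly once and very early in the execution. The array $pos_1[\cdot]$ is assigned only in line~$1.a$, which $r_i$ executes only during the single cycle in which it is in state $S_1$, i.e.\ at its first activation. Let $\tau$ be the time of the Look phase of that cycle. Then $r_i.pos_1[j]$ records the position of $r_j$ observed by $r_i$ at $\tau$ and never changes afterwards, so it suffices to show that at time $\tau$ robot $r_j$ lies on the positive part of its own $x$-axis (its origin included). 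The degenerate case $j=i$ is trivial since $r_i.pos_1[i]$ is then just $r_i$'s origin.

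The core of the argument is to show that at $\tau$ robot $r_j$ has not yet entered state $S_3$. By inspecting Algorithm~\ref{alg-getviewn}, as long as a robot has not reached $S_3$ it is either at its origin (state $S_1$) or on the segment of its positive $x$-axis between its origin and $(1,0)$ (state $S_2$, where after the single move of line~$1.d$ only null moves of line~$2.b$ are performed). Hence, if $r_j$ is \emph{not} on the positive part of its $x$-axis at $\tau$, it must have entered $S_3$ by then, which requires it to have executed line~$2.c$; by the guard of line~$2.b$, this entails that the entry of $r_j.pos_2[\cdot]$ associated with $r_i$ is already different from $\perp$ at $\tau$. That entry is set only in line~$2.a.1$, whose guard requires $r_j$ to have observed $r_i$ at a position different from the corresponding entry of $r_j.pos_1[\cdot]$; let $t_1 \le \tau$ be the time of that Look phase.

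Now comes the contradiction: $r_i$ is first activated at $\tau$, hence it is immobile over the whole interval $[0,\tau]$, so $r_i$ occupies the same point at every time of that interval. The entry of $r_j.pos_1[\cdot]$ for $r_i$ was recorded during $r_j$'s first activation, at some time $t_0 < t_1 \le \tau$, and therefore equals that very point; but at $t_1$ robot $r_j$ observes $r_i$ exactly at that stored position, so the guard of line~$2.a$ is false and the corresponding entry of $r_j.pos_2[\cdot]$ cannot have been assigned there, a contradiction. Consequently $r_j$ is in state $S_1$ or $S_2$ at $\tau$, hence on its positive $x$-axis, which is what $r_i.pos_1[j]$ records.

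I expect the only delicate point to be the asynchronous bookkeeping around $\tau$: one must check that if $r_j$ is already past $S_2$ at $\tau$, then the Look phase that filled in its $pos_2$-entry for $r_i$ genuinely occurred at a time $t_1 \le \tau$ with $t_0 < t_1$, so that $r_i$'s immobility on $[0,\tau]$ covers the entire window $[t_0,t_1]$. The geometric ingredient — that a robot cannot leave the positive part of its $x$-axis before reaching $S_3$ — is immediate from the pseudo-code.
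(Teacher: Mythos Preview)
Your argument is correct. The paper's own proof says only that it ``follows the same lines as Lemma~\ref{lem2:positivex}'', which in the two-robot case appeals to the synchrony lemma to conclude that while $r_i$ is still in $S_1$, the peer is in $S_1$ or $S_2$ and hence on its non-negative $x$-axis; the intended $n$-robot analogue is Lemma~\ref{lemn:unisson}. You correctly note that Lemma~\ref{lemn:unisson} appears \emph{after} Lemma~\ref{lemn:positivex} in the paper, and you bypass that forward reference by proving the needed special case directly: $r_j$ cannot have reached $S_3$ by time $\tau$, since that would force $r_j.pos_2[i]\neq\perp$, which would require $r_j$ to have observed $r_i$ at two distinct positions before $\tau$---impossible because $\tau$ is $r_i$'s first activation and $r_i$ is immobile on $[0,\tau]$.

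This is essentially the content of the first bullet in the proof of Lemma~\ref{lemn:unisson}, specialised to the single pair $(r_i,r_j)$ and inlined. So the two routes rest on the same mechanism; yours is self-contained and resolves the ordering issue, while the paper's is more modular but leans on a lemma stated later (there is no circularity, since the proof of Lemma~\ref{lemn:unisson} does not invoke Lemma~\ref{lemn:positivex}). Your careful bookkeeping of the times $t_0<t_1<\tau$ is exactly what is needed to make the CORDA asynchrony harmless here.
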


\begin{proof}
The proof follows the same lines as that of Lemma \ref{lem2:positivex}.
\end{proof}

\begin{lem}
\label{lemn:unisson}
If at some time robot $r_i$ is in state $S_j$ and robot $r_i^\prime$ is in state $S_k$ then $|j-k| \leq 1$.
\end{lem}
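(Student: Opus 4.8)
The plan is to transcribe the argument of Lemma~\ref{lem2:state43} to an arbitrary pair of robots $r_i, r_{i'}$, with the scalar variables $pos_1, pos_2$ of the two-robot algorithm replaced by the cells $pos_1[\cdot], pos_2[\cdot]$ of Algorithm~\ref{alg-getviewn} indexed by the peer. The one structural fact I will use repeatedly is that the state index of any robot is non-decreasing along an execution, since the only transitions in the code are $S_1\to S_2$ (line $1.c$), $S_2\to S_3$ (line $2.c.2$) and $S_3\to S_4$ (line $3.a.3$), and $S_4$ is terminal. Granting this, it suffices to establish two claims: (A) if some robot $r_i$ has reached $S_3$, then every other robot $r_{i'}$ has reached $S_2$; and (B) if some robot $r_i$ has reached $S_4$, then every other robot $r_{i'}$ has reached $S_3$. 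Indeed, for a pair with state indices $j\ge k$, claim (B) settles $j=4$ (then $k\ge 3$), claim (A) settles $j=3$ (then $k\ge 2$), and $j\le 2$ is immediate because every state index is at least $1$; in all cases $j-k\le 1$.

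For claim (A): to leave $S_2$, robot $r_i$ must take the \textbf{else} branch $2.c$, which requires $pos_2[i']\neq\perp$ for every $i'\neq i$. The cell $pos_2[i']$ is assigned (line $2.a.1$) only in a cycle whose Look returned a position of $r_{i'}$ different from $pos_1[i']$, and $pos_1[i']$ is the position of $r_{i'}$ at $r_i$'s first activation. A robot changes position only when activated, so $r_{i'}$ has been activated at least once before that cycle; its first activation carries it through lines $1.a$--$1.d$ into state $S_2$. By monotonicity $r_{i'}$ is then in $S_2$ or beyond ever after, in particular at the considered time.

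For claim (B): suppose $r_i$ reaches $S_4$ at some time $t'$ by finding the guard of line $3.a$ true, and assume for contradiction that at $t'$ some peer $r_{i'}$ is still in $S_1$ or $S_2$. While $r_{i'}$ stays in $S_1\cup S_2$ it moves exactly once, during its single $S_1$ cycle, straight from its origin toward $(1,0)_{i'}$ (every $S_2$ move being null, line $2.b$); hence its $x$-coordinate is a non-decreasing function of time and every position it occupies lies on the positive part of its $x$-axis. Now $r_i$'s Look that recorded $pos_1[i']$, its Look that recorded $pos_2[i']$, and the Look underlying the $observe(i')$ used in the guard of line $3.a$ belong to cycles of $r_i$ in states $S_1$, $S_2$, $S_3$ respectively, and therefore occur in that temporal order, all no later than $t'$. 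By the monotonicity of $r_{i'}$'s $x$-coordinate, the three recorded positions lie on the positive $x$-axis of $r_{i'}$ in the order $pos_1[i']\preceq pos_2[i']\preceq observe(i')$, with $pos_1[i']\neq pos_2[i']$ by the guard of line $2.a$. Consequently $pos_2[i']$ lies inside the segment $[pos_1[i'], observe(i')]$, the guard of line $3.a$ is violated for index $i'$, and $r_i$ cannot be in $S_4$ --- a contradiction. Hence $r_{i'}$ has reached $S_3$ by $t'$, and monotonicity yields the claim at the considered time.

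I expect the geometric bookkeeping in claim (B) to be the delicate point: one must check carefully that the Look underlying $observe(i')$ in the evaluation of line $3.a$ is strictly later than the Look that set $pos_2[i']$ (which follows from $r_i$'s own cycles being sequential and from $S_2$-cycles preceding $S_3$-cycles), and that while $r_{i'}$ remains in $S_1\cup S_2$ its displacement from the origin is genuinely monotone --- it grows during the possibly-interrupted $S_1$ move and then freezes --- so that the three recorded positions are truly collinear and ordered. Everything else is a routine rewriting of the two-robot proof, and I would note in passing that this argument does not invoke Lemma~\ref{lemn:positivex}, so no circular dependency arises.
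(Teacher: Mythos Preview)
Your proof is correct and follows essentially the same route as the paper's: the same two-claim decomposition (if $r_i$ is in $S_3$ then every peer is in $S_2$ or beyond; if $r_i$ is in $S_4$ then every peer is in $S_3$ or beyond), with the same underlying reasons. Your argument for claim (B) is more explicit than the paper's---which simply states that a robot cannot change direction before reaching $S_3$---in that you spell out precisely why the guard of line $3.a$ must fail at index $i'$ for any peer still in $S_1\cup S_2$; but this is the same idea carried out in more detail, not a different approach.
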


\begin{proof}
The lemma can be proved by generalising the proof of Lemma \ref{lem2:state32} to the case of $n$ robots.
We divide the analysis into two subcases:

\begin{itemize}

\item If robot $r_i$ is in state $S_3$ and robot $r_i^\prime$ is in state $S_j$ then $j \geq 2$.

\textit{proof: }
If $r_i$ is in state $S_3$, this means that it observed all other robots in at least two distinct positions. 
This means that all other robots started a Move phase, which implies that they all finished a complete Compute phase in which they executed the lines $1.a \ldots 1.c$ of the code and reached $S_2$.

\item If robot $r_i$ is in state $S_4$ and robot $r_i^\prime$ is in state $S_j$ then $j \geq 3$.

\textit{proof: }
For a robot to reach $S_4$, 
it must detect a change of direction by \emph{all} other robots in the network which is captured by the condition of line $3.a$. 
We prove that this condition cannot be true unless all robots have reached $S_3$ and no robot in the network is still in state $S_2$.
Indeed, robots in state $S_1$ move in the positive direction of their x-axis and those in state $S_2$ does not move. 
So, a robot cannot change its direction before reaching state $S_3$.
This change of direction is reflected by the choice of point $(-1, 0)$ 
as a destination in line $2.c.1$ of the code before the transition to state $S_3$ in line $2.c.2$.
\end{itemize}
\end{proof}

\begin{cor}
\label{corn:unisson}
If at some time $t$, $\exists i, j$ such that robots $r_i$ and $r_j$ are respectively in state $S_k$ and $S_{k+1}$ at $t$ with $k \in \{1,2,3\}$,
then all the robots of the network are either in state $S_k$ or $S_{k+1}$ at $t$.
\end{cor}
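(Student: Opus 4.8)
The plan is to derive this immediately from Lemma~\ref{lemn:unisson}, which states that the state indices of any two robots differ by at most one at any given time. Fix the time $t$ from the hypothesis of the corollary, and suppose $r_i$ is in state $S_k$ and $r_j$ is in state $S_{k+1}$ with $k \in \{1,2,3\}$. Let $r_\ell$ be an arbitrary robot and let $S_m$ denote its state at $t$; it suffices to show $m \in \{k, k+1\}$.

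First I would apply Lemma~\ref{lemn:unisson} to the pair $(r_\ell, r_i)$: since $r_i$ is in state $S_k$, this yields $|m-k| \le 1$, hence $m \in \{k-1, k, k+1\}$. Then I would apply the same lemma to the pair $(r_\ell, r_j)$: since $r_j$ is in state $S_{k+1}$, this yields $|m-(k+1)| \le 1$, hence $m \in \{k, k+1, k+2\}$. Intersecting the two constraints forces $m \in \{k, k+1\}$, which is exactly the desired conclusion. Since $r_\ell$ was arbitrary, every robot in the network is in state $S_k$ or $S_{k+1}$ at time $t$, and the corollary follows.

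There is essentially no obstacle here, as this is a routine consequence of the ``unisson'' lemma; the only point worth remarking is that the hypothesis $k \in \{1,2,3\}$ guarantees all the state indices written above stay within the legal range $S_1, \ldots, S_4$, and in the boundary case $k=3$ the intersection simply discards the non-existent index $k+2 = 5$, leaving the consistent conclusion $m \in \{3, 4\}$.
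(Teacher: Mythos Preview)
Your proof is correct and follows essentially the same approach as the paper: both apply Lemma~\ref{lemn:unisson} once with respect to $r_i$ and once with respect to $r_j$, then intersect the resulting constraints on the state index of an arbitrary robot. The paper phrases the two constraints as ``no robot can be in a state $S_l$ with $l \geq k+2$'' and ``no robot can be in a state $S_l$ with $l \leq k-1$,'' but this is just a rewording of your set-intersection argument.
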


\begin{proof}
Since $r_i$ is in state $S_k$, no robot in the network can be in a state $S_l$ with $l \geq k+2$ according to Lemma \ref{lemn:unisson}.
Similarly, the fact that $r_j$ is in state $S_{k+1}$ implies that no robot in the network can be in a state $S_l$ with $l\leq k-1$.
By the conjunction of the two facts, we obtain that all robots are either in state $S_k$ or $S_{k+1}$.
\end{proof}

\paragraph*{}
The following lemma proves the fact that if at time $t_a$, some robots of the network are in state $S_2$ and others are in state $S_3$, 
then at least one robot that is in state $S_2$ at $t_a$ eventually reaches $S_3$.

\begin{lem}
\label{lemn:state32}
Let $G_2(t), G_3(t)$ be the groups of robots that are respectively in state $S_2$ and $S_3$ at time $t$.
If at some time $t_a$~$\|G_2(t_a)\| > 0$ and $\|G_3(t_a)\| > 0$, then there exists a time $t \geq t_a$ at which $\|G_3(t)\| \geq \|G_3(t_a)\|+1 $.
\end{lem}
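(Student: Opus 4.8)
The plan is to mimic the argument used for the two-robot case in Lemma~\ref{lem2:state33}, but adapted to the fact that in the $n$-robot setting a robot in $S_2$ needs to see \emph{every} other robot move off $pos_1[\cdot]$ before it can advance. Assume for contradiction that no robot of $G_2(t_a)$ ever reaches $S_3$; by Corollary~\ref{corn:unisson} the whole network stays confined to states $S_2$ and $S_3$ forever after $t_a$, and the partition $G_2 \cup G_3$ stabilizes from some time $t_b \geq t_a$ on, with $G_2$ nonempty. First I would show that every robot $r_k \in G_3$ eventually stops moving: when $r_k$ is in $S_3$ it keeps executing line $3.b$ until it reaches a point of its negative x-axis (this takes finitely many cycles since each move covers at least $\delta_k$), after which line $3.c$ gives it only null moves — because by assumption $r_k$ never reaches $S_4$. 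Hence there is a time $t_c \geq t_b$ after which every robot in $G_3$ is frozen at a fixed position on the negative part of its own x-axis, while every robot in $G_2$ is also frozen (a robot in $S_2$ that cannot advance executes line $2.b$, a null move).

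Next I would pick any robot $r_i \in G_2$ (nonempty by assumption) and examine what it observes after $t_c$. For each $j \neq i$: if $r_j \in G_3$, then $r_j$ sits at a point of its negative x-axis, which by Lemma~\ref{lemn:positivex} is distinct from $r_i.pos_1[j]$ (that recorded position lies on the \emph{positive} x-axis of $r_j$, or at its origin); if $r_j \in G_2$, then $r_j$ already moved from $(0,0)_j$ towards $(1,0)_j$ during its own transition into $S_2$, so again its current position differs from $r_i.pos_1[j]$. Therefore, at the next activation of $r_i$ after $t_c$ (which exists by fairness), the condition of line $2.a$ holds for every $j$ with $r_i.pos_2[j] = \bot$, so $r_i$ fills in all remaining cells of $pos_2[\cdot]$, the guard of line $2.b$ becomes false, and $r_i$ executes line $2.c$, moving to $S_3$. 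This contradicts the assumption that no robot of $G_2(t_a)$ reaches $S_3$.

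Finally I would convert the contradiction into the quantitative statement: let $t$ be the first time after $t_a$ at which some robot of $G_2(t_a)$ enters $S_3$. By Corollary~\ref{corn:unisson} the only transitions available to robots after $t_a$ are $S_2 \to S_3$ (a robot in $S_3$ cannot move to $S_4$ without the contradiction above recurring, nor back to $S_2$), so $G_3$ is nondecreasing over $[t_a,t]$ and strictly increases at $t$; hence $\|G_3(t)\| \geq \|G_3(t_a)\| + 1$. The main obstacle I anticipate is the bookkeeping needed to justify that the $G_2/G_3$ partition really does stabilize and that \emph{no} robot leaves $S_3$ in the other direction or sneaks into $S_4$ — i.e. making rigorous the claim that under the contradiction hypothesis all motion genuinely ceases; this rests on carefully combining Corollary~\ref{corn:unisson} with the per-robot ``freezing'' argument of Lemma~\ref{lem2:state33}.
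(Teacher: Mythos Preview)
Your overall architecture matches the paper's, but there is a genuine gap in the step where you handle peers that are themselves in $G_2$. You write: ``pick any robot $r_i \in G_2$ \ldots\ if $r_j \in G_2$, then $r_j$ already moved from $(0,0)_j$ towards $(1,0)_j$ during its own transition into $S_2$, so again its current position differs from $r_i.pos_1[j]$.'' This implicitly assumes $r_i.pos_1[j]$ is $r_j$'s \emph{initial} position $(0,0)_j$. That need not hold for an arbitrary $r_i$: in CORDA, $r_j$ may have completed its $S_1$ Move phase \emph{before} $r_i$ performed its $S_1$ Look, in which case $r_i.pos_1[j]$ already records $r_j$'s post-move position on the positive $x$-axis. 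Since $r_j$ is then frozen in $S_2$ (null moves), $r_i$ will never see $r_j$ anywhere else, $r_i.pos_2[j]$ stays $\bot$, and your argument does not force $r_i$ into $S_3$.

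The paper closes this gap by \emph{not} picking an arbitrary member of $G_2$. It selects the specific robot $r_k \in G_2$ whose $S_1$ Look phase ended earliest among all robots of $G_2$ (exactly as in the proof of Lemma~\ref{lemn:state3}). For this $r_k$, every other $r_j \in G_2$ had not yet moved at the time $r_k.pos_1[j]$ was recorded, so $r_k.pos_1[j]$ really is $(0,0)_j$, and the subsequent move of $r_j$ guarantees $r_k$ will eventually fill $r_k.pos_2[j]$. Combined with your (correct) argument for peers in $G_3$, this forces $r_k$ into $S_3$, yielding the contradiction. Your treatment of the $G_3$ peers and your final paragraph on monotonicity of $\|G_3\|$ via Corollary~\ref{corn:unisson} are fine; you only need to replace ``any $r_i \in G_2$'' by this distinguished $r_k$.
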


\begin{proof}
Since by assumption $\|G_2(t_a)\| > 0$ and $\|G_3(t_a)\| > 0$, 
it follows from Corollary \ref{corn:unisson} that $\mathbb{R} = G_2(t_a) \cup G_3(t_a)$.
We assume toward contradiction that $\forall t \geq t_a~ G_2(t) =G_2(t_a)=G_2$, 
that is, no robot that is in $S_2$ at $t_a$ ever reach $S_3$.
But by assumption we have $\|G_2\| > 0$. 
Hence $\forall t > t_a$ $\|G_2(t)\| = \|G_2\| >0$. 
This implies,  in accordance with Lemma \ref{lemn:unisson}, 
that no robot of the network can reach $S_4$ after $t_a$.
Consequently, $\forall t^\prime \geq t~ G_3(t^\prime) =G_3(t)=G_3 = \mathbb{R} \setminus G_2$.

\begin{itemize}
\item As discussed above, we have by assumption that all robots have reached $S_2$ at $t_a$.
This means that all robots have executed the line $1.a$ of the code at $t_a$. 
Consequently, at $t_a$, $\forall r_i \in \mathbb{R}$, $\forall r_j \in \mathbb{R}$ with $j \neq i$, $r_i.pos_1[j] \neq \perp$.
Moreover, according to Lemma \ref{lemn:positivex} all these positions stored in the arrays $pos_1[]$ describe positions located in the positive x-axis of the corresponding robots.

\item
By assumption we have that $\forall t > t_a$ $\|G_3(t)\| = \|G_3\|$. 
This means that no robot in $G_3$ ever reach $S_4$.
Hence robots of $G_3$ never execute the block $3.a.*$ of the code and 
they keep executing the line $3.b$ 
each time they are activated until they reach a position in their negative x-axes. 
Then, once a robot of $G3$ arrives to the negative part of its x-axis, it keeps executing the line $3.c$ of the code each time it is activated.
As we showed above, the positions stored in the different arrays 
$pos_1[]$ are different from $\bot$ and correspond to points located in the positive x-axes of the corresponding robots. 
Since robots of $G_3$ eventually get to positions in their negative x-axes and stay there, 
they eventually get observed by each robot in the network in a position 
different from the one that is stored in its local variable $pos_1[]$ which correspond to a positive x-axis position. 
Formally, there is a time $t_v > t$ at which $\forall r_i \in G_3, \forall r_j \in \mathbb{R}$ 
with $r_j \neq r_i$ $r_j.pos_2[i] \neq \perp$.

\item Let $r_1, \ldots, r_m$ be the robots of $G_2$. 
Consider for each robot $r_i \in G_2$ the cycle in which it reaches the state $S_2$,
and define $t_i$ and $t^\prime_i$ to be respectively the time of the end of the Look and the Move phase and of this cycle.
Let $t_k$ be equal to $min \{t_1, t_2, \ldots, t_m\}$ and let $r_k \in G_2$ be the corresponding robot. 
That is, at $t_k$, robot $r_k$ finishes to execute a Look phase and at the end of this cycle it reaches state $S_2$.
This means that for robot $r_k$, 
$r_k.pos_1[]$ describes the configuration of the network at time $t_k$. 
Following the lines of the proof of Lemma \ref{lemn:state3} we obtain that there exist a time 
$t_u>t$ at which $\forall r_j \in G_2 \setminus \{r_k\}, r_k.pos_2[j] \neq \bot$.
\end{itemize}

Now, let $t_x = max \{t_v, t_u\}$. 
From the discussion above it results that at time $t_x$, for robot $r_k \in G_2$ it holds that 
$\forall r_j \in G_3, r_k.pos_2[j] \neq \bot$ 
and $\forall r_j \in G_2 \setminus \{r_k\} , r_k.pos_2[j] \neq \bot$. 
Hence, at $t_x$, $\forall j \in \mathbb{R} \setminus \{r_k\}, r_k.pos_2[j] \neq \bot$. 
This means that the condition of line $2.a$ is false for $r_k$ at $t_x$, so $r_k$ executes the \emph{else} block of this condition when activated after $t_x$
and reaches state $S_3$ which
contradicts the assumption that $\forall t > t_a$ $\|G_2(t)\| = \|G_2\|$.
\end{proof}

\begin{lem}
\label{lemn:state33}
Eventually, all robots of the network reach state $S_3$
\end{lem}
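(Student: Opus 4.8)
The plan is to mimic the two-robot argument (Lemma \ref{lem2:state33}) but to exploit the group notation introduced in Lemma \ref{lemn:state32}. First I would invoke Lemma \ref{lemn:state3} to obtain a time $t_0$ at which at least one robot has reached $S_3$, so $\|G_3(t_0)\| \geq 1$. By Corollary \ref{corn:unisson}, at any time after the first robot enters $S_3$ (and before any robot enters $S_4$), the whole network is partitioned as $\mathbb{R} = G_2(t) \cup G_3(t)$, so it suffices to argue that $G_2$ keeps shrinking until it is empty. I would consider the quantity $\|G_3(t)\|$, which never decreases as long as no robot has yet reached $S_4$ (robots only leave $S_3$ by entering $S_4$, and a robot leaving $S_2$ can only go to $S_3$): this is the induction measure.

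Next, I would argue by contradiction: suppose not all robots ever reach $S_3$. Then there is a time after which some nonempty set of robots stays forever in $S_2$; in particular no robot ever reaches $S_4$ (since by Lemma \ref{lemn:unisson} a robot in $S_4$ forces every other robot to be in $S_3$ or $S_4$, contradicting the persistence of a robot in $S_2$). Consequently $\|G_3(t)\|$ is eventually constant, say equal to some value $m < n$, attained from some time $t_a$ on, with $\|G_2(t_a)\| = n - m > 0$ and $\|G_3(t_a)\| = m > 0$. But then the hypotheses of Lemma \ref{lemn:state32} are satisfied at $t_a$, so there exists a time $t \geq t_a$ with $\|G_3(t)\| \geq \|G_3(t_a)\| + 1 = m+1$, contradicting the assumed stabilisation of $\|G_3\|$ at $m$. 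Hence every robot eventually reaches $S_3$.

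The main obstacle is entirely hidden inside Lemma \ref{lemn:state32} (already proved): that lemma is where one must show that a robot $r_k$ remaining in $S_2$ is eventually able to see \emph{every} peer in a position distinct from the one stored in $r_k.pos_1[\cdot]$ — distinguishing robots that were already in $S_3$ at $t_a$ (they move to their negative x-axis and stay, using Lemma \ref{lemn:positivex}) from robots that were in $S_2$ at $t_a$ (handled by the same $\min$-over-activation-times argument as in Lemma \ref{lemn:state3}), and taking the max of the two waiting times. Given that lemma, the proof of Lemma \ref{lemn:state33} itself is a short, clean induction/contradiction on $\|G_3\|$, exactly parallel to the two-robot case but one ``slot'' at a time rather than all at once.
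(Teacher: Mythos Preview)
Your proposal is correct and follows exactly the route the paper takes: the paper's proof is the single sentence ``Follows from Lemmas \ref{lemn:state22}, \ref{lemn:state3} and \ref{lemn:state32},'' and you have simply spelled out how these lemmas combine via the monotone-counter / contradiction argument on $\|G_3\|$. The only minor difference is that you invoke Lemma \ref{lemn:unisson} and Corollary \ref{corn:unisson} explicitly to justify the partition $\mathbb{R}=G_2\cup G_3$ and the non-arrival in $S_4$, whereas the paper leaves this implicit (those facts are already used inside the proof of Lemma \ref{lemn:state32}).
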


\begin{proof}
Follows from Lemmas \ref{lemn:state22}, \ref{lemn:state3} and \ref{lemn:state32}.
\end{proof}

\begin{lem}
\label{lemn:validity}
Algorithm \ref{alg-getviewn} satisfies the \textbf{validity} property.
\end{lem}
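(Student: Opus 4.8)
The plan is to mirror the structure of the two-robot validity proof (Lemma~\ref{lem2:validity}), now using the $n$-robot machinery already established. First I would invoke Lemma~\ref{lemn:state33}, which guarantees that every robot $r_i$ eventually reaches state $S_3$. The key observation is then purely syntactic: inspecting Algorithm~\ref{alg-getviewn}, the only way for $r_i$ to leave $S_2$ is through the \emph{else} branch at line~$2.c$, whose guard is the negation of ``$\exists j \neq i$ with $pos_2[j] = \bot$''. Hence, at the cycle in which $r_i$ transitions to $S_3$, every cell $pos_2[j]$ with $j \neq i$ holds a defined value; and each such cell was assigned at line~$2.a.1$ in a cycle whose very next instruction (line~$2.a.2$) is \emph{Deliver}$(line(pos_1[j], pos_2[j]))$. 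Consequently, by the time $r_i$ reaches $S_3$ it has performed a \emph{Deliver} for every peer $r_j$.

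It remains to check that the delivered object $line(pos_1[j], pos_2[j])$ is well defined and coincides with $l_j$, the line robocast by $r_j$. Well-definedness and distinctness of the two points are immediate from the code: $pos_1[j]$ is set at line~$1.a$ and $pos_2[j]$ at line~$2.a.1$, both strictly before the corresponding \emph{Deliver}, and the guard of line~$2.a$ requires $pos_1[j] \neq observe(j)$ at the moment $pos_2[j]$ is recorded, so the two points are distinct and determine a unique line. For the identification with $l_j$, I would argue that both recorded positions lie on $r_j$'s x-axis. The value $r_i.pos_1[j]$ is recorded while $r_i$ is in $S_1$ and $r_i.pos_2[j]$ while $r_i$ is in $S_2$; by Lemma~\ref{lemn:unisson}, at those instants $r_j$ is in some state $S_k$ with $k \leq 3$, hence $r_j$ has not yet reached $S_4$ and has therefore only ever moved along its x-axis (towards $(1,0)_j$ in $S_1$, not at all in $S_2$, towards $(-1,0)_j$ in $S_3$). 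Thus $pos_1[j]$ and $pos_2[j]$ are two distinct points of $l_j$, so $line(pos_1[j], pos_2[j]) = l_j$.

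Putting the pieces together: every robot eventually reaches $S_3$ and, upon doing so, has delivered $l_j$ for each $j \neq i$; since there are finitely many robots, there is a finite time after which every robot has performed \emph{Deliver}$(l_j)$ for every $j$, which yields the validity property with $t_j'$ taken to be any such time. The only genuinely delicate point is the identification $line(pos_1[j], pos_2[j]) = l_j$: one must rule out that $r_j$ has already changed direction (entered $S_3$ and moved onto its negative axis, or even left the line in $S_4$) between the two observations made by $r_i$, and this is exactly what the ``unisson'' bound of Lemma~\ref{lemn:unisson} prevents --- it is the load-bearing ingredient here, everything else being bookkeeping on the pseudocode.
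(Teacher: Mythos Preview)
Your proposal is correct and follows essentially the same approach as the paper: invoke Lemma~\ref{lemn:state33} to get every robot into $S_3$, observe from the code that reaching $S_3$ forces all $pos_2[j]$ to be defined (and distinct from $pos_1[j]$ by the guard of line~$2.a$), and then argue that both recorded positions lie on $l_j$ because $r_j$ has not yet left its x-axis. You are in fact more careful than the paper's $n$-robot proof, which merely asserts ``since robots move only through their x-axes'' without explicitly citing Lemma~\ref{lemn:unisson}; your identification of that lemma as the load-bearing step mirrors exactly what the paper does in the two-robot case (Lemma~\ref{lem2:validity} via Lemma~\ref{lem2:state43}).
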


\begin{proof}
The idea of the proof is similar to Lemma \ref{lem2:validity}.
According to Lemma \ref{lemn:state33}, all robots eventually reach state $S_3$.
Each robot that reach state $S_3$ has necessarily executed the block $1.*$ and the line $2.c$ of Algorithm \ref{alg-getviewn}. 
Hence, this robot has its two arrays $pos_1[]$ and $pos_2[]$ well defined and according to the way the elements of $pos_2[]$ are defined (refer to line $2.a$ of the code), we conclude that $\forall 1 \leq j \leq n$, $pos_2[j] \neq pos_1[j]$.
Moreover, since robots move only through their x-axes, $\forall j, pos_2[j] \text{ and } pos_1[j]$ correspond to two positions of the x-axis of robot $j$. 
Hence, each robot in state $S_3$ can infer the x-axes of its peers from $pos_1[]$ and $pos_2[]$ which proves the lemma.
\end{proof}

\remove{
\begin{proof}
Eventually both robots reach state $S_3$ according to Lemma \ref{lem2:state33}. 
Each robot in state $S_3$ has necessarily executed the blocks $1.*$ and $2.b.*$ of the algorithm. 
Hence, each robot in state $S_3$ has necessarily the values of its two variables $pos_1$ and $pos_2$ well defined and different from $\perp$. 
By the condition of line $2.b.$ we have that $pos_1 \neq pos_2$ and both of them correspond to two \emph{distinct} positions of the other robot. 
The values of variables $r_i.pos_1$ and $r_i.pos_2$ are defined when $r_i$ is in state $S_1$ and $S_2$ respectively.
Hence, according to Lemma \ref{lem2:state43}, when $r_i.pos_1$ and $r_i.pos_2$ are defined, 
$r_{1-i}$ did not yet reach $S_4$ and moved only through its x-axis.
This means that $r_i.pos_1$ and $r_i.pos_2$ correspond to two distinct positions of the x-axis of $r_{1-i}$.
Hence, $r_i$ can infer this x-axis (and consequently the line $l_{1-i}$) from $r_i.pos_1$ and $r_i.pos_2$ and deliver it at this stage.
This holds for both robots which proves the lemma.
\end{proof}
}

\paragraph*{Proof of the Termination property.}

\begin{lem}
\label{lemn:state4}
Eventually, at least one robot reaches state $S_4$
\end{lem}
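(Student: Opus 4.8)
The plan is to mirror the two-robot argument (Lemmas~\ref{lem2:positive-pos_12} and~\ref{lem2:state4}), now using the unison property of Lemma~\ref{lemn:unisson} and Corollary~\ref{corn:unisson}. I would argue by contradiction: suppose no robot ever reaches $S_4$. By Lemma~\ref{lemn:state33} every robot eventually reaches $S_3$, and since by assumption no robot ever advances to $S_4$ (and a robot never moves back from $S_3$ to $S_2$), there is a time after which the whole network is in $S_3$ and stays there forever. In state $S_3$, as long as a robot is still on the positive part of its x-axis it keeps executing line $3.b$ with destination $(-1,0)$, and since the scheduler grants it progress of at least $\delta_i$ per activation, after finitely many activations it crosses the origin; from then on $observe(i) < (0,0)_i$, so it only executes the null move of line $3.c$. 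Hence, by fairness, there is a time $t^\ast$ after which every robot $r_j$ sits at a fixed position $p_j$ lying strictly on the negative part of its local x-axis.

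Next I would isolate a ``leading'' robot, the $n$-robot analogue of Lemma~\ref{lem2:positive-pos_12}. Let $r_i$ be the first robot to enter state $S_3$. While $r_i$ was in states $S_1$ and $S_2$, no other robot had yet reached $S_3$ (since $r_i$ is first), so by Lemma~\ref{lemn:unisson} every other robot $r_j$ was in $S_1$ or $S_2$ throughout that interval and therefore moved only on the positive part (including the origin) of its x-axis. Since $r_i.pos_1[j]$ is recorded in state $S_1$ and $r_i.pos_2[j]$ in state $S_2$, both describe positions on the positive x-axis of $r_j$; and because $r_i.pos_2[j]$ is observed strictly after $r_i.pos_1[j]$ while $r_j$ progresses in its positive direction, we get $r_i.pos_2[j] > r_i.pos_1[j] \geq 0$ in the local coordinate system of $r_j$, for every $j \neq i$ (Lemma~\ref{lemn:positivex} gives the ``positive side'' part directly).

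Finally I would combine the two facts. By fairness, $r_i$ is activated at some time later than $t^\ast$. At that activation, for each $j \neq i$ we have $observe(j) = p_j$ on the negative x-axis of $r_j$, so the segment $[r_i.pos_1[j], observe(j)]$ is contained in the portion of $r_j$'s x-axis with coordinate $\leq r_i.pos_1[j]$, and thus does not contain $r_i.pos_2[j]$, whose coordinate is strictly larger. Hence the test of line $3.a$ holds for $r_i$, so $r_i$ executes the block under $3.a$ and transitions to $S_4$ --- contradicting the assumption, which proves the lemma.

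I expect the main obstacle to be the ``leading robot'' step: one must argue carefully that \emph{all} the positions recorded in the arrays $r_i.pos_1[\,\cdot\,]$ and $r_i.pos_2[\,\cdot\,]$ were gathered before \emph{any} robot of the network left the positive side of its axis, which is exactly what ``$r_i$ is the first to reach $S_3$'' combined with Lemma~\ref{lemn:unisson} secures --- together with the observation that, once the network is globally stuck in $S_3$, every robot does eventually cross to and remain on its negative x-axis, so that $r_i$'s eventual observation of each peer is uniformly on the ``wrong'' side of the two recorded points. The remaining verification of line-segment non-containment is the same elementary bookkeeping as in the two-robot proof of Lemma~\ref{lem2:state4}.
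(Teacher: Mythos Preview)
Your proposal is correct and follows essentially the same approach as the paper's proof: argue by contradiction, single out a ``leading'' robot whose recorded arrays $pos_1[\cdot]$ and $pos_2[\cdot]$ were all taken while every peer was still on the positive part of its x-axis (the $n$-robot analogue of Lemma~\ref{lem2:positive-pos_12}), observe that under the contradiction hypothesis every robot eventually settles on its negative x-axis, and conclude that the test of line~$3.a$ fires for the leading robot. The only cosmetic difference is that the paper selects as leader the robot whose \emph{Look} phase of the $S_2\!\to\!S_3$ cycle occurs earliest, whereas you select the robot that first \emph{enters} $S_3$; both choices yield the same guarantee that every observation recorded in $pos_2[\cdot]$ predates any peer's first move toward $(-1,0)$.
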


\begin{proof}
We assume for the sake of contradiction that no robot ever reach $S_4$.
However, according to Lemma \ref{lemn:state33}, all robots eventually reach state $S_3$.
Hence we consider a configuration in which all robots are in state $S_3$ 
and we prove that at least one of them eventually reaches $S_4$ which leads us to a contradiction.
The idea of the proof is similar to that of Lemma \ref{lemn:state3}: 
we consider the first robot $r_k$ that executes a Look phase of a cycle leading it from $S_2$ to $S_3$.
Let $t_k$ be the time of the end of this Look phase.
Clearly, $\forall r_i \in \mathbb{R} \setminus \{r_k\}$, 
$r_k.pos_1[i]$ and $r_k.pos_2[i]$ describe two positions of $r_i$ located in its positive x-axis.
This is because these two positions were observed by $r_k$ 
before $r_i$ reaches $S_3$ and changes its direction of movement towards its negative x-axis.
Moreover, $r_k.pos_2[i] > r_k.pos_1[i]$ with respect to the local coordinate system of $r_i$ since $r_i$ 
was observed in $r_k.pos_1[i]$ and then in $r_k.pos_2[i]$ while it was moving along the positive direction of its x-axis.
The claim can be proved formally as in Lemma \ref{lem2:positive-pos_12}.
After $t_k$, all other robots of the network perform a transition from $S_2$ to $S_3$.
Then, they head towards the negative part of their local x-axes (lines $2.c.1$ and $3.b$ of the code) 
and stay there (line $3.c$) since they cannot reach $S_4$ by assumption.
Each robot $r_i$ that reaches the negative part of its x-axis is located in a position $p_i$ 
such that $r_k.pos_2[i]$ is outside the line segment [$r_k.pos_2[i]$, $p_i$].
Hence the condition of line $3.a$ eventually becomes true for robot $r_k$, and it reaches $S_4$ after executing the block $3.a.*$ of the code.
This is the required contradiction.
\end{proof}


\begin{lem}
\label{lemn:state44}
Eventually, all robots of the network reach $S_4$.
\end{lem}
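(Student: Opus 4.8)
The plan is to follow the scheme of Lemma~\ref{lem2:state44}, but since in the $n$-robot algorithm a robot leaves $S_3$ only once it observes that \emph{all} its peers have abandoned their lines, I argue by induction on the order in which robots enter state $S_3$. By Lemma~\ref{lemn:state33} every robot eventually reaches $S_3$, so let $r_{(1)},\dots,r_{(n)}$ be the robots listed by the time of the Look phase that takes them from $S_2$ to $S_3$. Let $G_4$ be the set of robots that eventually reach $S_4$ (nonempty by Lemma~\ref{lemn:state4}) and let $G_3=\mathbb{R}\setminus G_4$, so that robots of $G_3$ remain in $S_3$ forever. It suffices to prove, by strong induction on $m$, that $r_{(m)}\in G_4$, since this yields $G_3=\emptyset$.

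I first record two stabilization facts, each obtained from fairness and the guaranteed step length $\delta_i$ exactly as in the two-robot proof. (i) Every $r_i\in G_4$ eventually stays forever at a fixed position $q_i\notin l_i$: once in $S_4$ it reaches $intersection=l_i\cap nextl_i$ after finitely many activations, then moves onto $nextl_i$ at a point distinct from $intersection$ --- which lies off $l_i$ since $nextl_i\neq l_i$ --- and returns. (ii) Every $r_i\in G_3$ eventually stays forever at a fixed position $p_i$ strictly in the negative part of its x-axis: being stuck in $S_3$ it keeps applying line~$3.b$ until it crosses its origin, and thereafter only performs the null move of line~$3.c$. I also use that a robot in $S_1\cup S_2\cup S_3$ is always located on its own x-axis $l_i$, and that by Lemma~\ref{lemn:positivex} every value stored in some $pos_1[\cdot]$ lies on the non-negative part of the corresponding x-axis.

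The inductive step rests on the following dichotomy for $r=r_{(m)}$: for each peer $r_i$, either (a) $r_i$ entered $S_3$ strictly before $r$, hence $r_i\in\{r_{(1)},\dots,r_{(m-1)}\}\subseteq G_4$ by the induction hypothesis, or (b) $r.pos_2[i]$ is a point of $l_i$ lying strictly beyond $r.pos_1[i]$ in the positive direction. Indeed, $r.pos_2[i]$ is recorded while $r$ is in $S_2$, so by Lemma~\ref{lemn:unisson} $r_i$ is then in $S_1$, $S_2$ or $S_3$; if it is in $S_3$ it must have reached $S_3$ before $r$ did, since $r$ is still in $S_2$ at that instant and leaves $S_2$ only later --- this is case (a); otherwise $r_i$ has moved only in the positive direction of its x-axis between the recordings of $pos_1[i]$ and $pos_2[i]$, and as these two values differ by the condition of line~$2.a$ we obtain case (b). Now fix a time $t^\star$ after which every robot of $G_3$ and of $G_4$ has stabilized as in (i)--(ii). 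For each peer $r_i$: in case (a), $observe(i)=q_i\notin l_i$ while $r.pos_2[i]\in l_i\setminus\{r.pos_1[i]\}$, so $r.pos_2[i]$ lies outside the segment $[r.pos_1[i],observe(i)]$; in case (b) with $r_i\in G_4$ the same argument applies; in case (b) with $r_i\in G_3$, $observe(i)=p_i$ lies in the negative part of $l_i$ while $r.pos_2[i]$ lies strictly beyond $r.pos_1[i]$ on the positive part, so again $r.pos_2[i]$ is outside $[r.pos_1[i],observe(i)]$. Hence the condition of line~$3.a$ holds permanently for $r$ after $t^\star$, and by fairness $r$ is eventually activated and sets its state to $S_4$, so $r=r_{(m)}\in G_4$. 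This completes the induction and the proof.

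The step I expect to be the main obstacle is case (b) of the dichotomy: contrary to the two-robot protocol, $r$ may record $pos_2[i]$ only after $r_i$ has already entered $S_3$ and possibly moved into the negative part of its x-axis, in which case $r.pos_2[i]$ need not lie beyond $r.pos_1[i]$ and the naive argument breaks. Ordering robots by their $S_3$-entry time, together with stabilization fact~(i), is exactly what neutralizes those pairs, and getting this bookkeeping right --- in particular its interplay with the unison Lemma~\ref{lemn:unisson} --- is the delicate part.
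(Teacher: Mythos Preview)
Your proof is correct and in fact considerably more careful than the paper's own argument, which is little more than a one-line sketch. The paper simply claims the proof is ``similar'' to the two-robot case (Lemma~\ref{lem2:state44}) and asserts that once one robot $r_i$ moves off $l_i$, ``the other robots transition to state $S_4$.'' As you correctly diagnose in your closing paragraph, this does not work verbatim when $n>2$: the test at line~$3.a$ is a universal quantifier over all peers, so observing a single peer off its line is not sufficient. Your ordering of the robots by their $S_3$-entry Look phase, the dichotomy (a)/(b), and the strong induction together are exactly what is needed to repair this gap. In particular, the key observation that the paper's sketch lacks is your case~(b): for every peer $r_i$ that enters $S_3$ \emph{no earlier} than $r_{(m)}$, the pair $(pos_1[i],pos_2[i])$ recorded by $r_{(m)}$ is still ordered in the positive direction of $l_i$, so the line-$3.a$ condition for that peer becomes and remains true once $r_i$ settles on its negative half-axis --- independently of whether $r_i$ ever reaches $S_4$.

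One small point worth tightening: your stabilization fact~(i) says a robot in $G_4$ ``stays forever'' at a fixed $q_i\notin l_i$. Strictly speaking, after returning from \textsf{LineRbcast1} the robot may begin the next invocation and keep moving along $nextl_i$; what you actually use, and what does hold, is that from some time onward the robot is located on $nextl_i\setminus\{intersection\}$, hence off $l_i$. Since the only property your argument needs is $observe(i)\notin l_i$, this is harmless.
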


\begin{proof}
The proof is similar to that of Lemma \ref{lem2:state44}.
The intuition behind it is as follows: we proved in Lemma \ref{lemn:state4} that at least one robot, say $r_i$, eventually reaches $S_4$.
After reaching $S_4$, and after a finite number of executed cycles, $r_i$ quits $l_i$ (line $4.b.2$).
When they observe $r_i$ outside $l_i$, the other robots transition to state $S_4$.
\end{proof}

\begin{lem}
\label{lemn:termination}
Algorithm \ref{alg-getviewn} satisfies the \textbf{termination} property.
\end{lem}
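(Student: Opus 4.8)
The plan is to mirror the proof of Lemma~\ref{lem2:termination}, now using Lemma~\ref{lemn:state44}, which guarantees that every robot eventually reaches state $S_4$. Once a robot is in $S_4$ it performs only a bounded number of additional moves before executing the \textbf{return} of line $4.b.3$, so it suffices to bound, for each robot, the time at which it stops.

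First I would fix an arbitrary robot $r_i$ and consider the cycle in which it enters $S_4$; let $p_i$ be its position at the end of that cycle and let $d_i = dist(p_i, intersection_i)$, where $intersection_i = l_i \cap nextl_i$ is a point that is fixed once and for all (it depends only on the two input lines, not on the evolving configuration), so $d_i$ is finite. While in $S_4$ and as long as $r_i$ has not yet reached $intersection_i$, every activation of $r_i$ executes line $4.a$ and heads straight towards $intersection_i$; by the scheduler guarantee $r_i$ covers at least $\min(\delta_i, d)$ in each such cycle, where $d$ is the remaining distance, and by fairness $r_i$ is activated infinitely often, so after at most $\lceil d_i/\delta_i\rceil$ activations $r_i$ satisfies $observe(i) = intersection_i$. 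At the next activation $r_i$ executes the \textbf{else} branch $4.b$: it rotates its coordinate system and performs a single move to $(1,0)$ of its \emph{new} coordinate system --- a move that belongs to the robocast of $nextl_i$, not of $l_i$ --- and then \textbf{returns}, so it never again runs a Look--Compute--Move cycle of this instance of the algorithm.

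Then I would set $t_T$ to be the maximum, over the $n$ robots $r_i$, of the time at which $r_i$ executes this last move of line $4.b.2$; this maximum exists because each robot reaches it after finitely many cycles. After $t_T$ no robot performs any movement that causally depends on this invocation of $LineRbcast1$. Combining this with Lemma~\ref{lemn:validity} (all lines $l_j$ are delivered by every robot before reaching $S_4$, hence before $t_T$) yields $t_T \geq \max\{t_1', \ldots, t_n'\}$, so the Termination property holds.

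The main obstacle is the finiteness argument in the second step: one must observe that in state $S_4$ the destination is a \emph{fixed} target, so the motion towards it is monotone and cannot oscillate, and then the per-cycle minimal displacement $\delta_i$ together with fairness bounds the number of cycles --- this is exactly where the CORDA scheduler assumption (a robot always advances at least $\delta_i$ before being stopped) is essential, as in Lemma~\ref{lem2:termination}. A minor additional point to address is that the post-rotation move to $(1,0)$ of the refreshed coordinate system is attributed to the subsequent line robocast, so it does not count as a movement causally depending on the current invocations.
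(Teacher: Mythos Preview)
Your proposal is correct and follows essentially the same approach as the paper: the paper's proof of Lemma~\ref{lemn:termination} simply says it is similar to that of Lemma~\ref{lem2:termination}, and your argument is exactly that generalisation, invoking Lemma~\ref{lemn:state44} to put every robot in $S_4$ and then using the fixed target $intersection_i$, fairness, and the minimal per-cycle displacement $\delta_i$ to bound the number of remaining moves. Your additional remarks about defining $t_T$ and attributing the final move to $nextl_i$ are refinements the paper leaves implicit, but they do not change the structure of the argument.
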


\begin{proof}
The proof is similar to that of Lemma \ref{lem2:termination}
\end{proof}

\begin{thm}
Algorithm \ref{alg-getview2} solves the Line Robocast Problem for $n$ robots in unoblivious CORDA systems.
\end{thm}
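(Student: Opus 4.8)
The plan is to check the two clauses of the RoboCast specification --- \emph{Validity} and \emph{Termination} --- for the $n$-robot protocol of Algorithm~\ref{alg-getviewn}, and to obtain each of them directly from the two summary lemmas proved just above, Lemma~\ref{lemn:validity} and Lemma~\ref{lemn:termination}. So the body of the proof is a one-line assembly; the real content lives in the supporting lemma chain, and I describe how that chain is laid out.

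For Validity, the spine is the monotone progression of every robot through the states $S_1 \to S_2 \to S_3$. First, fairness forces every robot out of $S_1$ into $S_2$ (Lemma~\ref{lemn:state22}). Next, taking the robot $r_k$ whose Look phase into $S_2$ is earliest, its array $pos_1[]$ captures a genuine configuration, and since each peer subsequently makes a non-null move of length at least $\delta_i$ along its positive x-axis, $r_k$ eventually observes all of them displaced, fills $pos_2[]$, and reaches $S_3$ (Lemma~\ref{lemn:state3}). Then, while the $S_2$-group and the $S_3$-group are both non-empty, Corollary~\ref{corn:unisson} confines the whole network to these two states, and repeating the earliest-observer argument inside the $S_2$-group shows the $S_3$-group strictly grows (Lemma~\ref{lemn:state32}); iterating gives Lemma~\ref{lemn:state33}: every robot reaches $S_3$. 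Finally, a robot in $S_3$ has executed block~$1.*$ and line~$2.c$, so for every peer $j$ it holds distinct points $pos_1[j] \neq pos_2[j]$, and by the unison Lemma~\ref{lemn:unisson} together with Lemma~\ref{lemn:positivex} both points lie on the x-axis of robot $j$ --- hence that robot has delivered $l_j$ correctly. Quantifying over all robots yields Validity (Lemma~\ref{lemn:validity}).

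For Termination, the spine is the analogous propagation through $S_4$. Lemma~\ref{lemn:state4} shows at least one robot reaches $S_4$: once every peer has parked in its negative x-axis --- which it must, being unable to leave $S_3$ by the contradiction hypothesis --- the test of line~$3.a$ becomes true at the earliest observer. Lemma~\ref{lemn:state44} then propagates $S_4$ to all robots, exactly as in the two-robot case: after finitely many cycles the first $S_4$-robot reaches $myIntersect$ and steps onto a point of $nextl_i$ outside $l_i$, and any robot observing it there finds $pos_2[j]$ off the segment $[pos_1[j], observe(j)]$ and transitions. The bounded-movement count from Lemma~\ref{lem2:termination} carries over verbatim: from $S_4$ a robot needs at most $d_i/\sigma_i$ cycles to cover the finite distance to $myIntersect$ and one further cycle to leave $l_i$, after which it performs no move that causally depends on the RoboCast invocations --- this is Lemma~\ref{lemn:termination}, hence Termination, and the theorem follows.

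The step I expect to be the real obstacle --- and the place where the $n$-robot generalization is more than cosmetic --- is the unison invariant, Lemma~\ref{lemn:unisson} and its Corollary~\ref{corn:unisson}: in fully asynchronous CORDA one must rule out any robot racing two or more states ahead of another, since this is precisely what lets a robot in $S_3$ read the observed direction change of \emph{all} peers as a faithful acknowledgement, and what makes the ``the $S_3$-group can only grow'' argument of Lemma~\ref{lemn:state32} sound. A secondary subtlety is that the whole chain presupposes $observe(j)$ is well defined for each peer $j$ despite anonymity; this is what the $ZoM^4$ confinement scheme of Section~\ref{sec:collisionfree} guarantees, by keeping every robot strictly closest to all positions it has previously occupied, so I would invoke that scheme to justify the peer-tracking used implicitly throughout.
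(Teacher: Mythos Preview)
Your proposal is correct and matches the paper's approach exactly: the paper's own proof is the one-line ``Follows directly from Lemmas~\ref{lemn:validity} and~\ref{lemn:termination},'' and you state precisely this as the body of the argument, then recap the supporting lemma chain (Lemmas~\ref{lemn:state22}--\ref{lemn:state33} for Validity, Lemmas~\ref{lemn:state4}--\ref{lemn:termination} for Termination, with the unison Lemma~\ref{lemn:unisson} and Corollary~\ref{corn:unisson} as the glue) in the same order and with the same logic as the appendix. You also correctly treat the reference to Algorithm~\ref{alg-getview2} in the theorem statement as a typo for Algorithm~\ref{alg-getviewn}, and your remark on $ZoM^4$ for peer-tracking is apt and consistent with the paper's own caveat at the end of Section~\ref{sec:lcs}.
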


\begin{proof}
Follows directly from Lemmas \ref{lemn:validity} and \ref{lemn:termination}.
\end{proof}

\end{document}